\documentclass[11pt,letterpaper]{article}
\usepackage[margin=1in]{geometry}
\usepackage{changepage}  
\usepackage[T1]{fontenc}
\usepackage{newpxtext}
\usepackage{amsmath,amsthm,amssymb,mathtools,aliascnt,newpxmath,bm}
\usepackage{microtype,booktabs,array,tabularx}
\usepackage[font=small,labelfont=bf]{caption}
\usepackage{algorithm,algpseudocode,needspace,etoolbox,placeins,flafter}
\usepackage[authoryear,round]{natbib}
\setcitestyle{citesep={,}}
\usepackage{xcolor,xurl}
\usepackage{graphicx,tikz}
\usepackage{todonotes}
\usetikzlibrary{arrows.meta,positioning,calc}
\definecolor{linkblue}{RGB}{25,54,110}
\usepackage[colorlinks=true,linkcolor=linkblue,citecolor=linkblue,urlcolor=linkblue,
 pdftitle={Constant Swap Regret in General-Sum Games via Two-Scale Higher-Order Optimism},
 pdfsubject={Blum--Mansour dynamics with prediction at two scales},
 pdfkeywords={swap regret, optimistic learning, correlated equilibrium, stationary distributions}]{hyperref}
\usepackage[nameinlink,noabbrev]{cleveref}
\makeatletter
\providecommand*{\theHALG@line}{}
\renewcommand*{\theHALG@line}{\thealgorithm.\arabic{ALG@line}}
\makeatother

\numberwithin{equation}{section}

\newtheorem{theorem}{Theorem}[section]
\newaliascnt{lemma}{theorem}
\newtheorem{lemma}[lemma]{Lemma}
\aliascntresetthe{lemma}
\newaliascnt{proposition}{theorem}

\aliascntresetthe{proposition}
\newaliascnt{corollary}{theorem}

\aliascntresetthe{corollary}
\newaliascnt{remark}{theorem}
\theoremstyle{remark}
\newtheorem{remark}[remark]{Remark}
\aliascntresetthe{remark}
\theoremstyle{plain}
\crefname{theorem}{Theorem}{Theorems}
\crefname{lemma}{Lemma}{Lemmas}
\crefname{proposition}{Proposition}{Propositions}
\crefname{corollary}{Corollary}{Corollaries}
\crefname{remark}{Remark}{Remarks}
\crefname{section}{Section}{Sections}
\crefname{algorithm}{Algorithm}{Algorithms}
\Crefname{theorem}{Theorem}{Theorems}
\Crefname{lemma}{Lemma}{Lemmas}
\Crefname{proposition}{Proposition}{Propositions}
\Crefname{corollary}{Corollary}{Corollaries}
\Crefname{remark}{Remark}{Remarks}
\Crefname{section}{Section}{Sections}
\Crefname{algorithm}{Algorithm}{Algorithms}
\AtBeginEnvironment{theorem}{\Needspace{5\baselineskip}}
\AtBeginEnvironment{lemma}{\Needspace{5\baselineskip}}
\AtBeginEnvironment{proposition}{\Needspace{5\baselineskip}}
\AtBeginEnvironment{remark}{\Needspace{4\baselineskip}}
\pretocmd{\section}{\Needspace{8\baselineskip}}{}{}
\pretocmd{\subsection}{\Needspace{5\baselineskip}}{}{}
\newcommand{\R}{\mathbb R}
\newcommand{\E}{\mathbb E}
\newcommand{\Prob}{\mathbb P}
\newcommand{\one}{\mathbf 1}
\newcommand{\cI}{\mathcal I}
\newcommand{\Reg}{\operatorname{SwapReg}}
\newcommand{\ExtReg}{\operatorname{ExtReg}}
\newcommand{\norm}[1]{\left\lVert#1\right\rVert}
\newcommand{\ip}[2]{\left\langle#1,#2\right\rangle}
\newcommand{\rownorm}[1]{\norm{#1}_{\infty,2}}
\newcommand{\sourcenorm}[1]{\norm{#1}_{\infty,1}}
\newcommand{\sourcenormT}[1]{\norm{#1}_{\infty,1;T}}
\newcommand{\kernorm}[1]{\norm{#1}_{\mathrm{ker},1}}
\newcommand{\shift}{\mathrm{S}}
\newcommand{\diff}{\mathrm{D}}
\newcommand{\res}{\mathrm{R}}
\newcommand{\filt}{\mathrm{F}}
\newcommand{\errfilt}{\mathrm{H}}
\newcommand{\earlyfilter}{\mathrm{K}}
\newcommand{\latefilter}{\mathrm{L}}
\newcommand{\treeP}{\Prob^{\mathrm{tr}}}
\newcommand{\quadform}{\bm K}
\newcommand{\Breg}{D_{\Phi}}
\let\epsilon\varepsilon
\DeclareMathOperator{\Var}{Var}
\DeclareMathOperator{\Cov}{Cov}
\AtBeginEnvironment{thebibliography}{\interlinepenalty=10000}
\title{\vspace{-1em}\LARGE\bfseries Constant Swap Regret in General-Sum Games\\[3pt]
via Two-Scale Higher-Order Optimism}
\author{Tung Mai\\Adobe Research}
\date{}

\begin{document}
\maketitle
\begin{abstract}
\normalsize
We give deterministic and uncoupled learning dynamics for finite multiplayer general-sum games under full-information feedback that achieve constant individual swap regret in self-play, independent of the horizon $T$. With $n$ players and at most $m$ actions each, every player’s individual swap regret is $O(\sqrt n\,m\log m\log^{5/2}(nm))$ at every finite horizon.
The dynamics use the classical Blum--Mansour framework with optimism. Each player predicts the deviation gains, uses these predictions to update a row-stochastic transition matrix, and plays its stationary distribution. Our new ingredients include a tailored row normalization map and a two-scale higher-order predictor. 
An adversarially robust variant, obtained through a generic common-prefix switching wrapper, preserves the self-play bound up to a universal constant and guarantees individual swap regret at most $7\sqrt{mT\log m}$ in the adversarial setting.
\end{abstract}
\section{Introduction}\label{sec:intro}

Online learning provides a general framework for sequential decision-making in an unknown and potentially changing environment: at each round, a learner acts, observes feedback, and updates its behavior. Its central performance measure is regret, which measures the learner's performance against a prescribed class of hindsight benchmarks. This viewpoint is especially natural in repeated games, where each player is an online learner and the feedback faced by one player is generated by the evolving behavior of the others. Correspondingly, no-regret guarantees translate into convergence toward game-theoretic equilibrium notions.

External regret compares with the best fixed action in hindsight, while swap regret allows a deviation map and compares against the best fixed deviation map in hindsight. If every player has small average swap regret, the time-averaged product distribution of play is an approximate correlated equilibrium (CE) \citep{aumann1974,blum2007}. A constant individual swap-regret bound gives an $O(1/T)$ CE rate (see Appendix~\ref{app:ce}).

In the adversarial setting, the Blum--Mansour reduction \citep{blum2007} gives $O(\sqrt{mT\log m})$ swap regret. In the self-play setting, however, the players mutually determine each other's payoff vectors. This interaction allows better swap regret bounds \citep{chen2020,anagnostides2022ce,anagnostides2022swap,tsuchiya2026}. In light of the recent constant external-regret results \citep{liu2026, abbadi2026}, we ask whether swap-regret admits the same guarantee:  

\begin{adjustwidth}{1cm}{1cm}
\emph{Can uncoupled learning dynamics achieve $O_{n,m}(1)$ individual swap regret in general-sum games, while retaining an $O(\sqrt{mT \log m})$ swap regret in the adversarial setting? 
}
\end{adjustwidth}

\paragraph{Results.}
Under full-information feedback, we give such learning dynamics. Let $n$ be the number of players and $m_i$ be the number of actions of player $i$. For $m_i\le m$, every player's individual swap regret is
$
 O\!\left(\sqrt n\,m\log m\log^{5/2}(nm)\right)
$
at every finite horizon. The dynamics are horizon-free: they do not need to know \(T\) in advance. An adversarially robust variant, obtained through a generic common-prefix switching wrapper, preserves the self-play bound up to a universal constant and gives individual swap regret at most $7\sqrt{m_i T \log m_i}$ in the adversarial setting.

\paragraph{Technical Overview.} We use the Blum--Mansour (BM) reduction for swap regret. For each source action $a$, a player maintains a distribution over the destination actions $b$. These distributions form the rows of a transition matrix, and the player uses its stationary distribution as its mixed strategy. The deviation gain assigned to changing action \(a\) to action \(b\) is \(x_{i,a}(v_{i,b}-v_{i,a})\). Therefore, controlling the best deviation in each row controls the player’s swap regret. 
We update each row optimistically by combining cumulative gains
with a forecast of the next gain into scores. A row normalization then converts these scores into a
distribution over destination actions.

Let \(\eta_i\) be player \(i\)’s learning rate. Let \(E^2\) be an upper bound on each player’s cumulative squared error in forecasting its row gains. Let \(S_i\ge0\) denote player \(i\)'s cumulative row curvature, defined using the potential associated with our row normalization. Concretely, this curvature is the local Hessian quadratic form of a row potential, evaluated in the direction of the observed row gain, and therefore measures how sensitively the row probabilities respond to changes in their scores. For our choice of row normalization, the optimistic potential analysis gives
\begin{equation}
\label{eq:regret-energy}
\eta_i\operatorname{SwapReg}_i(T)+S_i \le 2 m_i \log m_i +\eta_i^2E^2.  
\end{equation}

The main remaining task is to bound \(E\) independently of \(T\). A natural idea is to use a learner from a constant external-regret result in every BM row. However, those results rely on the feedback generated when all players follow the prescribed dynamics in a fixed game \citep{liu2026,abbadi2026}. They do not guarantee constant regret for arbitrary feedback sequences.

In BM, each row learner chooses a distribution over destination
actions, but this is not the strategy played in the game.
The player instead uses the stationary distribution of the entire
transition matrix. Therefore, changing one row can change the stationary probabilities and hence the feedback to other rows. Thus the row learners interact differently from players running the original external-regret dynamics. BM still converts the row learners' external regret into swap regret,
but their regret on these new feedback sequences requires a
separate analysis. Earlier BM analyses use the Markov-chain tree theorem \citep{anantharam1989} to
address this issue \citep{anagnostides2022ce,anagnostides2022swap}.

We build on this tree representation while extending the higher-order
optimism of \citet{liu2026,abbadi2026} to BM.
Our predictor's error is a filtered higher-order difference of
the deviation gains, so bounding it requires understanding how
these gains change as the rows are updated.
The additional difficulty is that successive differences also
act on the stationary distribution, producing interactions
between different rows of the same player.
To bound \(E\) independently of \(T\), our analysis must relate
these interactions to the row curvature supplied by the
potential bound.
At the same time, we must retain the desired dependence on
\(n\) and \(m\).
Grouping the changes in the gains only by player is too coarse
to capture the interactions between rows.
On the other hand, working immediately with all individual
transitions would incur the full action-dependent sensitivity
of the stationary distribution and lead to poor dependence
on \(m\).

The Markov-chain tree theorem represents stationary play as an average over rooted directed trees. 
The first stage keeps the change grouped by player and exploits centering across players.
Concretely, differentiating the product-tree distribution yields centered player
contributions that are independent across players, so an $L^2$ moment
bound combines their coefficients in quadrature. Thus, the learning rates
combine through
\(\bigl(\sum_i\eta_i^2\bigr)^{1/2}\), rather than
\(\sum_i\eta_i\), which preserves a final \(\sqrt n\) dependence.
After \(\ell=\Theta(\log(nm))\) such backward-difference
steps, the total weight of the terms that still require a transition-level
sensitivity bound has been attenuated by a factor of order
\(\kappa^\ell\), where \(0<\kappa<1\) is a sufficiently small
universal constant. It is then small
enough to apply the more expensive,
action-dependent curvature bound without spoiling the dependence on \(m\).

In the second stage, the proof records the individual transition indices \((i,a,b)\) encountered as further differences are taken. If an index has appeared before, its new contribution can be paired with the earlier one, and this pair is bounded by the curvature energy. Thus only a newly encountered transition index needs to be tracked further. Since there are only \(N=\sum_i m_i^2\) transition indices, there
can be at most \(N\) fresh-index continuations. The rooted-tree representation also provides the sparsity needed to avoid extra factors of \(m\): each sampled gain is associated with a single source action, and a directed tree contains at most one outgoing edge from that source action. The predictor’s two exponential moving average (EMA) cascades mirror these two stages: an \(\ell\)-step player-level stage followed by an \(N\)-step transition-level stage.

To close the argument, let \(Y^2\) denote a suitable weighted aggregate of the curvature energies \(S_i\). Since swap regret is nonnegative, from \eqref{eq:regret-energy}, we can show that
$$ Y^2 \le B+\alpha E^2, $$
where \(B\) and \(\alpha\) may depend on the game dimensions but not on \(T\). The analysis of the predictor gives a complementary bound
$$ E\le C+\beta E+\tau_\ell Y, $$
where \(C\) and \(\beta\) are constants, and \(\tau_\ell\) is exponentially small in the depth \(\ell=\Theta(\log(nm))\) of the first stage. Choosing the parameters so that \(\tau_\ell \sqrt{B}=O(1)\) and \(\beta+\tau_\ell \sqrt{\alpha}<1\) gives \(E=O(1)\). Substituting back in \eqref{eq:regret-energy} gives the constant swap-regret bound.

The design of the row normalization makes all parts of the
argument fit together.
Its associated potential has initialization cost
\(O(\log m_i)\) per row and supplies the curvature term in
\eqref{eq:regret-energy}.
The same map provides the derivative bounds needed in the
first stage and the bounds in terms of curvature needed in
the second stage, including when a transition repeats.
Crucially, these bounds use the same curvature that the
potential argument controls.
Section~\ref{sec:row-normalization} defines the map, and
Appendices~\ref{app:scalar-potential}
and~\ref{app:normalized-derivatives} establish its properties.

Finally, we obtain adversarial robustness through a common-prefix switching wrapper. An immediate switch from the base dynamics to adversarial dynamics would retain the self-play budget additively. We instead run adversarial dynamics for a common initial phase, initialize all base dynamics afresh, and permanently switch a player to freshly initialized adversarial dynamics if its base-phase regret crosses a given threshold. The initial phase absorbs the base budget into the all-horizon $O(\sqrt{m_iT\log m_i})$ envelope, and in the self-play setting the threshold is never triggered. 

\paragraph{Contributions relative to prior work.}
Our analysis combines established techniques. The Blum--Mansour reduction provides the underlying framework \citep{blum2007}. The use of Markov chain tree theorem to analyze stationary play appears in \citet{anagnostides2022ce,anagnostides2022swap}. In particular, \citet{anagnostides2022ce} use the tree representation to bound higher-order differences through the BM fixed point. Higher-order prediction for constant external regret and recursions that record player labels appear in \citet{liu2026,abbadi2026}.

Our main technical contribution is to make these techniques work together while controlling the dependence on the game dimensions. The row normalization supplies the derivative bounds used in both stages. The first stage groups the expansion by player and reduces the
coefficient passed to the transition-level analysis. The second stage
then tracks individual transitions and uses curvature to stop a branch
when a transition repeats. 
This yields \cref{lem:prediction}, in which the coefficient of the curvature energy is exponentially small in the first-stage depth. Section~\ref{sec:related-work} and Appendix~\ref{app:related} compare these steps with the earlier analyses.

\subsection{Further related work}\label{sec:related-work}
\paragraph{Regret and equilibrium learning.}
Correlated equilibrium originates with \citet{aumann1974}. The learning-to-CE connection was developed through calibrated learning and regret-based adaptive procedures by \citet{foster1997calibrated,hart2000simple}. The classical reduction of \citet{blum2007} converts external-regret guarantees into swap regret by maintaining one learner per source action and playing the stationary distribution of its transition matrix. More generally, \citet{greenwald2003general} study regret with respect to a prescribed deviation family, and \citet{gordon2008convex} connect this $\Phi$-regret viewpoint to external-regret learning over transformations and fixed-point computation.

\paragraph{External and swap regret in self-play.}
Optimistic learning exploits predictable payoff sequences \citep{rakhlin2013}. The analysis of \citet{syrgkanis2015fast} balances payoff variation against strategy movement and provides a central framework for predictable feedback in multiplayer games. For external regret, this line yields polylogarithmic bounds in finite games \citep{daskalakis2021} and logarithmic bounds in general convex games \citep{farina2022convexgames}. \citet{soleymani2025faster} introduce Cautious Optimism through adaptive, non-monotone learning-rate control for OMWU, achieving
logarithmic individual external regret in self-play. In a broader
treatment, \citet{soleymani2025cautious} develop a meta-algorithm
that paces an underlying optimistic FTRL learner and establishes
the same logarithmic dependence on $T$ under suitable
regularizer assumptions. More recent results obtain constant external regret through higher-order optimism \citep{liu2026,abbadi2026}.

For swap regret in the present full-information setting, \citet{chen2020} obtain $T^{1/4}$ horizon dependence with BM--OMWU and, in Appendix~D, with an optimistic fixed-point construction over all swap maps. \citet{anagnostides2022ce} give polylogarithmic bounds through BM--OMWU and SL--OMWU, the latter using the internal-regret reduction of \citet{stoltz2005internal}. \citet{anagnostides2022swap} obtain a logarithmic bound with BM--OFTRL. \citet{tsuchiya2026} further improves the BM--OFTRL horizon dependence to a sublogarithmic bound. Table~\ref{tab:comparison} summarizes these results. Our result is a constant swap regret bound independent of $T$. At the same time, we achieve the best polynomial dependence on both $n$ and $m$ over the previous $o_{n,m}(\log T)$ results.

\paragraph{Comparator-adaptive guarantees and alternative reductions.}
\citet{lu2025sparsity} interpolate external, internal, and swap regret through comparator sparsity. Building on this line, \citet{hait2025} improve comparator-adaptive $\Phi$-regret guarantees and obtain constant regret in a class of general-sum games satisfying \emph{nonnegative social external regret}, namely $\sum_i \ExtReg_i(T)\ge 0$ for every play sequence and horizon under the unclipped external-regret convention. This class was studied by \citet{anagnostides2022lastiterate}. Our constant swap-regret guarantee holds in arbitrary finite general-sum games. Beyond the classical per-action BM reduction, \citet{dagan2024external,peng2024fast} achieve any fixed target level of average swap regret with round complexity polylogarithmic in the number of actions. These action--accuracy tradeoffs are distinct from a constant cumulative swap-regret guarantee in self-play. In convex domains, \citet{anagnostides2026response} use response-based approachability to minimize linear and profile swap regret and extend their results to polynomial-dimensional deviation families. These broader comparator classes lie outside the finite normal-form swap-regret setting studied here. Finally, \citet{tsuchiya2026scale} obtain scale-free and scale-invariant swap regret $O(U_{\max}n^{3/2}m^{5/2}\log T)$ in general-sum games.

\paragraph{Adversarial robustness.}
Adversarial robustness in this line is obtained through closely related certificate-triggered constructions. Earlier work describes these as a wrapper or an adaptive choice of learning rate \citep{chen2020,daskalakis2021,anagnostides2022ce}, while the closest recent constructions use a switching rule that switches a player permanently to adversarial dynamics after a self-play certificate fails \citep{anagnostides2022swap,tsuchiya2026,abbadi2026}. Such an immediate switch retains the self-play budget additively. We add a common phase before a fresh simultaneous start of the base dynamics, allowing that budget to be absorbed into the all-horizon $O(\sqrt{mT \log m})$ adversarial guarantee. Our approach is black-box in the underlying self-play dynamics. Therefore, it can strengthen existing results with anytime self-play regret bounds by removing their additive self-play budget whenever a common-prefix absorption condition holds.

\begin{table}[!htbp]
\centering
\scriptsize
\setlength{\tabcolsep}{3pt}
\begin{tabularx}{\textwidth}{@{}>{\raggedright\arraybackslash}p{0.21\textwidth}>{\raggedright\arraybackslash}p{0.22\textwidth}>{\raggedright\arraybackslash}X>{\raggedright\arraybackslash}X@{}}
\toprule
Reference & Algorithm & Self-play setting & Adversarial setting\\
\midrule
\citet{blum2007} & BM--MWU & $O(\sqrt{mT\log m})$ & $O(\sqrt{mT\log m})$\\[5pt]
\citet{chen2020} & BM--OMWU & $O(\sqrt n\,(m\log m)^{3/4}T^{1/4})$ & $\widetilde{O}(\sqrt{mT}+\sqrt n\,m^{3/4}T^{1/4})$\\[5pt]
\citet[Appendix~D]{chen2020} & Optimistic swap-map Hedge (all $m^m$ maps) & $O(\sqrt n\,m^{5/4}(\log m)^{3/4}T^{1/4})$ & $O(\sqrt{mT\log m})$\\[5pt]
\citet{anagnostides2022ce} & SL--OMWU / adaptive variant (via internal regret) & $O(nm\log m\log^4T)$ & $O(nm\log m\log^4T+m\sqrt{T\log m})$\\[5pt]
\citet{anagnostides2022ce} & BM--OMWU & $O(nm^4\log m\log^4T)$ & ---\\[5pt]
\citet{anagnostides2022swap} & BM--OFTRL (log-barrier) & $O(nm^{5/2}\log T)$ & $O(nm^{5/2}\log T+\sqrt{mT\log m})$\\[5pt]
\citet{tsuchiya2026} & BM--OFTRL (hybrid) & $O(nm^2\sqrt{\log m\log T})$ & $O(nm^2\sqrt{\log m\log T}+\sqrt{mT\log m})$\\[5pt]
\textbf{This work} & BM--OFTRL with two-scale higher-order optimism & $O\!\left(\sqrt n\,m\log m\log^{5/2}(nm)\right)$ & $O(\sqrt{mT\log m})$\\
\bottomrule
\end{tabularx}
\caption{Comparisons of selected algorithms for individual swap regret in full-information $n$-player games with at most $m$ actions. BM denotes Blum--Mansour, and SL denotes Stoltz--Lugosi. $\widetilde{O}$ suppresses logarithmic factors in $m,n,T$.}
\label{tab:comparison}
\end{table}
\FloatBarrier

\section{Model and main results}\label{sec:model-results}
We may assume each player $i\in[n]$ has $m_i\ge2$ actions and a fixed payoff function $u_i:\prod_j[m_j]\to[0,1]$. Let $m=\max_i m_i$. At round $t$, the players simultaneously choose mixed strategies $\bm x_i^{(t)}\in\Delta_{m_i}$, and player $i$ observes the expected payoff 
$
 v_{i,a}^{(t)}=u_i(a,\bm x_{-i}^{(t)}), \forall a\in[m_i].
$
This is full-information feedback since player $i$ observes the exact expected payoff of every action. The learning dynamics are uncoupled and horizon-free: player $i$ uses only public action counts, its own state, and past payoff vectors, and does not require the final horizon $T$. Each mixed strategy is a deterministic output of the dynamics. The self-play setting is when every player follows the same dynamics. In the adversarial setting, the payoff vector may instead be any arbitrary vector in $[0,1]^{m_i}$ each round.

Let $\cI_i=\{(i,a,b):a,b\in[m_i]\}$, $\cI=\bigcup_i\cI_i$, and $N=|\cI|=\sum_im_i^2$. We call each $e\in\cI$ a transition index. Define pairwise deviation gains and player $i$'s individual swap regret by
\begin{align*}
 r_{i,a,b}^{(t)}&=x_{i,a}^{(t)}(v_{i,b}^{(t)}-v_{i,a}^{(t)}),\\
 \Reg_i(T)&=\max_{\varphi:[m_i]\to[m_i]}\sum_{t=1}^T\sum_a r_{i,a,\varphi(a)}^{(t)}
 =\sum_a\max_b\sum_{t=1}^T r_{i,a,b}^{(t)}\ge0.
\end{align*}
The equality follows because the deviation can be chosen independently for each action. Nonnegativity follows by including the identity map.

The proof uses two mixed row norms. For each player $i$ and each $\bm w_i = (w_{i,a,b})_{a,b \in [m_i]}$, let
\begin{equation*}
 \rownorm{\bm w_i}^{\,2}=\sum_a\max_b|w_{i,a,b}|^2,
 \qquad
 \sourcenorm{\bm w_i}=\sum_a\max_b|w_{i,a,b}|,
 \qquad
 \sourcenormT{\bm w_i}^2=\sum_{t=1}^T \sourcenorm{\bm w_i^{(t)}}^2.
\end{equation*}
We have $\rownorm{\bm w_i}\le\sourcenorm{\bm w_i}$ and $\sourcenorm{\bm r_i^{(t)}}\le1$. 

\paragraph{Public parameters.} 
\begin{equation}\label{eq:public-scales}
 A_i=m_i\log m_i,\quad A=\sum_iA_i,\quad
 \ell=\ell_0+\lceil\log(2+n+N)\rceil,\quad k=\ell+1,\quad g=c/k^{5/2}.
\end{equation}
Here $c>0$ is a sufficiently small universal constant and $\ell_0$ is a sufficiently large universal integer. The default rates are
\begin{equation}\label{eq:default-rates}
 \eta_i=g\sqrt{A_i/A},\qquad \sum_i\eta_i^2=g^2.
\end{equation}
We call the learning dynamics of \cref{alg:base} the \emph{base dynamics}.

\begin{theorem}[Constant swap regret]\label{thm:base}
There exist universal choices of $c>0$ and $\ell_0\in\mathbb N$ such that, for every finite game, if all players simultaneously use \cref{alg:base} with the public parameters above, then every player $i$ satisfies, for every $T\ge1$,
\begin{equation*}
 \Reg_i(T)\le \frac4c k^{5/2}\sqrt{A_iA}
 =O\!\left(\sqrt n\,m\log m\log^{5/2}(nm)\right).
\end{equation*}
These learning dynamics are deterministic, uncoupled, and horizon-free.
\end{theorem}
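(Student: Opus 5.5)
The plan is to combine two complementary inequalities and close them with the parameter choices in \eqref{eq:public-scales}--\eqref{eq:default-rates}. The first is the optimistic potential bound \eqref{eq:regret-energy}. For each player $i$, it gives
\[
\eta_i\Reg_i(T)+S_i\le 2A_i+\eta_i^2E^2 .
\]
Here $E^2$ bounds the cumulative squared prediction error of the row gains, measured in $\sourcenormT{\cdot}$ against the predictor of \cref{alg:base}, and $S_i$ is the row curvature. Because $\Reg_i(T)\ge0$, we may drop the regret term. The weighted curvature aggregate $Y^2$, which will be taken as $\sum_i S_i$ suitably normalized, then satisfies
\[
Y^2\le B+\alpha E^2,\qquad B=2A,\qquad \alpha=\textstyle\sum_i\eta_i^2=g^2 .
\]

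The second inequality comes from the prediction lemma \cref{lem:prediction}, which I assume from the later sections. It bounds the predictor error in terms of itself and the curvature energy:
\[
E\le C+\beta E+\tau_\ell Y .
\]
The argument behind it runs in two stages. In the first stage, the Markov-chain tree representation of the stationary distribution is differentiated $\ell$ times at the player level. Centered player contributions combine in quadrature, which contributes the factor $(\sum_i\eta_i^2)^{1/2}=g$, and each backward difference attenuates by a universal $\kappa<1$. In the second stage, at most $N$ fresh-transition continuations are tracked. Each repeated transition index is paired and charged to the curvature, using the derivative bounds for the row normalization from the appendices. The result is $\tau_\ell\lesssim \kappa^\ell\,\mathrm{poly}(n,N)\cdot(\text{stage-two factor})$. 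The choice $\ell=\ell_0+\lceil\log(2+n+N)\rceil$ makes $\kappa^\ell$ absorb every polynomial factor in $n$ and $N$. The choice $g=c/k^{5/2}$ controls the $k$-dependent combinatorial constants from the two EMA cascades, so that $\beta\le 1/4$ and $C=O(1)$.

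To close the argument, substitute $Y\le\sqrt B+\sqrt\alpha\,E$ into the second inequality to get
\[
(1-\beta-\tau_\ell\sqrt\alpha)\,E\le C+\tau_\ell\sqrt{B}.
\]
The parameters must be chosen so that $\tau_\ell\sqrt{A}=O(1)$ and $\beta+\tau_\ell g\le1/2$. This is the main obstacle. The coefficient $\tau_\ell$ must be small not only in absolute terms but relative to $\sqrt A$, which grows like $\sqrt{nm\log m}$. This is exactly why the first stage must produce a factor exponentially small in $\log(nm)$ before the action-dependent stage-two sensitivity bound is applied. I would verify this by tracking the explicit $\mathrm{poly}(n,m)$ losses in stage two and choosing $\ell_0$ large enough to dominate them. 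It then follows that $E\le E_0$ for a universal constant $E_0$, and I will arrange $E_0\le 1/g$ by taking $c$ small.

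Finally, substitute back into \eqref{eq:regret-energy} with $S_i\ge0$:
\[
\Reg_i(T)\le \frac{2A_i}{\eta_i}+\eta_iE^2=\frac{2\sqrt{A_iA}}{g}+g\sqrt{A_i/A}\,E^2 .
\]
Since $gE^2\le 1/g$ and $\sqrt{A_i/A}\le\sqrt{A_iA}$ (because $A\ge1$), this is at most $\frac{3}{g}\sqrt{A_iA}\le\frac4c k^{5/2}\sqrt{A_iA}$. This holds for every $T$, because all bounds are anytime and the parameters do not involve $T$. The asymptotic form follows from $\sqrt{A_iA}\le \sqrt n\,m\log m$ and $k=\Theta(\log(nm))$. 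Determinism, uncoupledness and horizon-freeness are read off from \cref{alg:base}: each player uses only its own past payoffs, its own state, and the public constants, none of which depends on $T$.
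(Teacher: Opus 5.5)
Your overall route matches the paper's: drop $\Reg_i(T)\ge0$ from the energy bound, combine it with the prediction inequality, absorb, and substitute back. Your final computation $\Reg_i(T)\le 2A_i/\eta_i+\eta_iE^2\le 3\sqrt{A_iA}/g$ is correct. The gap is in the absorption step, and once the two lemmas are granted, that is the only step specific to the theorem.

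The curvature aggregate controlled by \cref{lem:prediction} is not $\sum_iS_i$. It is $Y^2=\sum_i\omega_iS_i$ with $\omega_i=C_\omega(m_i-1)/\beta_i$. The late stage forces these weights:
\begin{itemize}
\item it pays for the $\gamma$-score by Cauchy--Schwarz over the $m_i-1$ tree edges, using $\gamma_e^2\le C_\gamma h_e/\beta_i$;
\item it pays for repeated indices through $\delta\alpha_e^2\le C_\alpha\sqrt{h_e/\beta_i}$.
\end{itemize}
Since $\omega_i\ge1$, using the unweighted sum invokes a prediction inequality stronger than the one available.

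The correct pairing is $Y^2\le2V+\Lambda^2E^2$, with $\Lambda^2\le g^2V$ and $V=\sum_i\omega_iA_i=2^{10}C_\omega J\sum_im_i^2(m_i-1)$. So the condition to verify is $\tau_\ell\sqrt{2V}\le1$, not $\tau_\ell\sqrt A=O(1)$. Equivalently, you could keep $\sum_iS_i$ and move $\sqrt{\max_i\omega_i}$ into $\tau_\ell$. Either way, the dimension factor can be as large as order $\sqrt J\,N^{3/4}$. For equal action counts it exceeds $\sqrt A$ by a factor of about $m\sqrt{J/\log m}$.

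The mechanism you propose for this verification also fails as stated. Enlarging the universal $\ell_0$ only multiplies by the universal factor $\kappa^{\ell_0}$. It can dominate constants and powers of $k$, but never growth in $n$ or $m$. A fixed $\kappa$ also does not absorb every polynomial. With $v=\lceil\log(2+n+N)\rceil$ and $\kappa\le1/16$, the dimension-dependent decay is $16^{-v}\le(2+n+N)^{-4}$, which beats only factors of degree below four.

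You therefore need two facts that you leave open.
\begin{itemize}
\item The late stage contributes no polynomial factor in $N$ to $\tau_\ell$. The choice $\delta=1/(n(N+\ell+2)^2)$ gives $8(N+1)a_{\mathrm{late}}<1/25$, so $\tau_\ell=C_Pk^2\kappa^\ell$.
\item The explicit count
\[
 \tau_\ell\sqrt{2V}
 \le C_P\sqrt{2^{14}C_\omega}\,(\ell_0+1)^{5/2}16^{-\ell_0}\,(v+1)^{5/2}2^{-13v/4}
\]
holds. It uses $V\le2^{13}C_\omega kN^{3/2}$, $N\le2^v$ and $k\le(\ell_0+1)(v+1)$. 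Here $(v+1)^{5/2}2^{-13v/4}\le1$, and $\ell_0$ is chosen only afterwards, to make the universal prefactor at most one.
\end{itemize}

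With $\tau_\ell\sqrt{2V}\le1$ and $(C_1+1)g\le1/2$, you get $\tau_\ell Y\le1+gE$ and $E\le2(C_0+1)$. After that, the rest of your argument goes through unchanged.
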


\begin{remark}[Unequal action counts]\label{rem:unequal-actions}
For the base dynamics, one may instead set $U=\sum_jA_j^2$ and $\eta_i=gA_i/\sqrt U$. The same proof gives $\max_i\Reg_i(T)\le \frac4c k^{5/2}\sqrt U$ for every prefix. This equals the default worst-player bound for equal action counts and can improve it by a factor $\Theta(n^{1/4})$ for highly unequal action counts. Appendix~\ref{app:heterogeneous-comparison} gives the details.
\end{remark}

\begin{theorem}[Adversarial robustness]\label{thm:robust}
The dynamics obtained by instantiating \cref{alg:robust} with \cref{alg:base} and the default parameters satisfy: in the adversarial setting,
\[
 \Reg_i(T)\le7\sqrt{A_iT}
 =O\!\left(\sqrt{mT\log m}\right)
\]
and in the self-play setting,
\[
 \Reg_i(T)\le \frac{16}{c}k^{5/2}\sqrt{A_iA}
 =O\!\left(\sqrt n\,m\log m\log^{5/2}(nm)\right)
\]
\end{theorem}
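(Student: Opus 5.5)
The statement to prove is \cref{thm:robust}. I take \cref{thm:base} as given: the base dynamics have an anytime self-play bound $\Reg_i(T)\le \frac4c k^{5/2}\sqrt{A_iA}=:R_i$. I also take as given the standard adversarial BM--MWU (or BM--FTRL with the same row normalization) guarantee: for a horizon-free learning rate, swap regret is at most $c_0\sqrt{A_i t}$ over any $t$ rounds, for some constant $c_0$ somewhat below $7$.

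\Cref{alg:robust} has three parts.
\begin{enumerate}
\item A common prefix of length $T_0$ runs adversarial dynamics. $T_0$ is a public quantity, e.g.\ chosen so that $\sqrt{A_iT_0}\gtrsim R_i$, i.e.\ $T_0\approx (k^{5/2}/c)^2A$.
\item All players then restart the base dynamics fresh, and each player monitors its base-phase swap regret. Swap regret is computable from its own payoff vectors and strategies, so monitoring keeps the dynamics uncoupled.
\item The first time the base-phase regret exceeds a threshold $\theta_i=R_i$ (or $2R_i$), the player permanently switches to freshly initialized adversarial dynamics.
\end{enumerate}

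The key subadditivity fact is that swap regret over concatenated intervals is at most the sum of the swap regrets over the intervals, because $\max_b\sum_t$ splits into at most a sum of maxima. So I bound each phase separately.

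\textbf{Adversarial bound.} Split $[1,T]$ into three pieces.
\begin{itemize}
\item The prefix contributes at most $c_0\sqrt{A_i\min(T,T_0)}$.
\item The base phase up to the switching time contributes at most $\theta_i+1$. Just before the trigger the regret is below $\theta_i$, and a single round adds at most $\sourcenorm{\bm r_i^{(t)}}\le 1$.
\item The post-switch phase contributes at most $c_0\sqrt{A_i(T-\tau)}$.
\end{itemize}
Consider first the case $T\le T_0$: only the first term is present. Otherwise $T>T_0$, and then $\theta_i+1\le C\sqrt{A_iT_0}\le C\sqrt{A_iT}$ by the choice of $T_0$. This is the absorption of the self-play budget. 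Combining the three terms via $\sqrt x+\sqrt y\le\sqrt{2(x+y)}$, the total is at most $(\sqrt2 c_0+C)\sqrt{A_iT}$. The constants (the MWU constant, $T_0$'s multiplier, and the threshold) are then tuned so that this is at most $7$. This constant bookkeeping is fiddly but routine.

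\textbf{Self-play bound.} During the prefix, everyone plays the adversarial algorithm, so regret is at most $c_0\sqrt{A_iT_0}=O(R_i)$. After the prefix, all players restart the base dynamics simultaneously in the same game. The continuation is therefore exactly self-play of \cref{alg:base} from a fresh start. By \cref{thm:base} applied to every prefix of that phase, no player's regret ever reaches $\theta_i\ge R_i$ (take a strict inequality or $\theta_i=2R_i$), so no switch occurs. Total regret is thus at most $c_0\sqrt{A_iT_0}+R_i$. With $T_0$ chosen as $\Theta((k^{5/2}/c)^2A)$ with suitable constants, this is at most $\frac{16}{c}k^{5/2}\sqrt{A_iA}$.

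\textbf{Main obstacle.} The delicate step is the choice of $T_0$ and $\theta_i$. We need $\theta_i+1$ to be dominated by $\sqrt{A_iT_0}$ uniformly in $i$ (note $R_i/\sqrt{A_i}$ is the same for all $i$, so a single public $T_0$ works). Simultaneously, the prefix cost $\sqrt{A_iT_0}$ must stay within $16/4=4$ times the base budget. I would set $\sqrt{T_0}=\lceil\frac{4}{c}k^{5/2}\sqrt A\rceil$ up to a constant and $\theta_i=R_i$, then verify both inequalities numerically. The rounding in $T_0$ and the $+1$ overshoot are absorbed because $c$ is small.
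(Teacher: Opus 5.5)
Your construction and decomposition are the paper's wrapper: a common fallback prefix, a simultaneous fresh restart of the base dynamics, and a permanent switch after the first threshold crossing. You also use subadditivity over the three intervals, the $+1$ overshoot at the crossing round, and absorption of the threshold into $\sqrt{A_iT_0}$. Your observation that $R_i/\sqrt{A_i}$ is player-independent is exactly why one public prefix suffices.

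The gap is the step you call routine. The theorem asserts the explicit constants $7$ and $16/c$, and under your stated hypotheses they do not follow. Write $\frac{16}{c}k^{5/2}\sqrt{A_iA}=4R_i$. Your self-play requirement is then $c_0\sqrt{A_iT_0}\le 3R_i$. After the lossy step $\sqrt x+\sqrt y\le\sqrt{2(x+y)}$, your adversarial requirement is $\sqrt2\,c_0+(\theta_i+1)/\sqrt{A_iT_0}\le7$.

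With $\theta_i=R_i$, these two requirements are compatible only if $c_0(1+3\sqrt2)<21$, i.e.\ $c_0$ at most about $4.005$. A fallback constant ``somewhat below $7$'' is far from enough: for $c_0\ge7/\sqrt2$ the first term alone reaches $7$. Even at $c_0=4$, the admissible window is $\sqrt{A_iT_0}\in[(R_i+1)/(7-4\sqrt2),\,\tfrac34R_i]$, which is very narrow. Taken literally, your choice $\sqrt{A_iT_0}\approx R_i$ gives a self-play cost of about $5R_i>4R_i$. Your alternative threshold $\theta_i=2R_i$ is infeasible at $c_0=4$ even with the sharper Cauchy--Schwarz combination, since it would need $\sqrt{A_iT_0}\gtrsim1.15R_i$.

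The paper supplies exactly these missing pieces.
\begin{itemize}
\item It proves, rather than assumes, that its fallback has anytime swap regret at most $4\sqrt{A_is}$. The fallback is Blum--Mansour with a linear multiplicative-weights update on the nonnegative gains $x_av_b$, run in doubling epochs.
\item It combines the two fallback intervals by Cauchy--Schwarz: $(4+\gamma)\sqrt{A_iW}+4\sqrt{A_i(T-W)}\le\sqrt{(4+\gamma)^2+16}\,\sqrt{A_iT}$. This tolerates $\gamma$ up to about $1.74$, rather than $7-4\sqrt2\approx1.34$.
\item It uses slack in the base theorem. The proof actually gives $\Reg_i(T)\le3A_i/\eta_i=\frac34B_i$ with $B_i=\frac4ck^{5/2}\sqrt{A_iA}$, so the threshold $B_i$ is never crossed. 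Choosing $W=\lceil9k^5A/c^2\rceil$ makes $\sqrt{A_iW_0}=\frac34B_i$. This yields $B_i+1\le\frac53\sqrt{A_iW}$ and $\sqrt{(4+5/3)^2+16}=\sqrt{433/9}<7$. Using $W\le\frac76W_0$, it also yields $4\sqrt{A_iW}+\frac34B_i\le3\sqrt{7/6}\,B_i+\frac34B_i<4B_i$.
\end{itemize}

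A smaller point: your no-switch claim (``the continuation is exactly self-play'') presupposes that nobody has switched yet. The clean argument is the paper's first-crossing one. Up to the first crossing among all players, everyone has run the freshly initialized base dynamics, which contradicts the anytime base bound.
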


\section{Optimistic transition-matrix dynamics}\label{sec:algorithm}
This section presents the main dynamics. In each round, player \(i\) combines a prediction of its next pairwise deviation-gain matrix with its cumulative gains into a score matrix. A smooth row-normalization function converts each score row into a probability distribution. These probability rows form a transition matrix whose stationary distribution is played. \Cref{sec:row-normalization,sec:predictor} define the normalization function and predictor, respectively, before \cref{alg:base} combines them.

Figure~\ref{fig:algorithm-flow} summarizes the feedback loop. In particular, the deviation gains depend on the player's own stationary strategy as well as its observed payoff vector.
\begin{figure}[!htbp]
\centering
\resizebox{\textwidth}{!}{\begin{tikzpicture}[x=1cm,y=1cm,>=Stealth,
  every node/.style={font=\footnotesize},
  block/.style={draw,rounded corners=2pt,align=center,minimum height=1.72cm,
                text width=3.05cm,inner sep=5pt},
  flow/.style={->,line width=0.55pt}]
\node[block] (pred) at (0,0)
  {Predictor\\[3pt]
   $\widehat{\bm r}_i=(I-\errfilt)[\bm r_i]$\\[3pt]
   Two EMA cascades};
\node[block,text width=3.72cm] (row) at (4.14,0)
  {Row normalization\\[3pt]
   $\bar{\bm z}_i=\bm\theta_i+\eta_i\widehat{\bm r}_i$\\[3pt]
   $q_{i,a,b}=f_i(\bar z_{i,a,b}-\mu_{i,a})$};
\node[block,text width=2.85cm] (stat) at (8.28,0)
  {Stationary play\\[3pt]
   $\bm Q_i^\top\bm x_i=\bm x_i$\\[3pt]
   $\sum_a x_{i,a}=1$};
\node[block,text width=3.08cm] (gain) at (12.10,0)
  {Deviation gains\\[3pt]
   $r_{i,a,b}= x_{i,a}(v_{i,b}-v_{i,a})$\\[3pt]
   $\bm\theta_i=\bm\theta_i+\eta_i{\bm r}_i$};
\draw[flow] (pred)--(row);
\draw[flow] (row)--node[above,font=\scriptsize]{$\bm Q_i$}(stat);
\draw[flow] (stat)--node[above,font=\scriptsize]{$\bm x_i$}(gain);
\node[font=\scriptsize] (pay) at (12.10,1.41) {Payoff vector $\bm v_i$};
\draw[flow] (pay)--(gain.north);
\draw[flow] (gain.south)--(12.10,-1.27)--(0,-1.27)--(pred.south);
\node[font=\scriptsize] at (6.05,-1.55)
  {Update predictor state after observing feedback};
\end{tikzpicture}}
\caption{\textbf{Optimistic transition-matrix dynamics}. Each row is normalized through $f_i$ with its own shift $\mu_{i,a}$, and the player uses the resulting stationary strategy. $\bm\theta_i$ denotes the cumulative score.}
\label{fig:algorithm-flow}
\end{figure}
\FloatBarrier

\subsection{Smooth row normalization}\label{sec:row-normalization}
First we explain the normalization process for a score row $\bm z \in \R^{m_i}$. 
The normalization map must satisfied certain properties to support both a potential argument and a higher-order prediction argument later in the proof. These properties are established in
Lemmas~\ref{lem:scalar-bounds}, \ref{lem:potential-full},
and~\ref{lem:normalized-derivatives-full}.
For $s\in\R$, define
\begin{equation}\label{eq:scalar-response}
 u(s)=\frac{\sqrt{s^2+4}-s}{2},
 \qquad t_+(s)=\frac{\sqrt{s^2+4}+s}{2}=u(s)^{-1},
 \qquad
 f_i(s)=\beta_i\frac{(1+t_+(s)/d_i)^{d_i}}{1+u(s)+(\delta/4)u(s)^2}.
\end{equation}
$f_i$ is smooth, positive, strictly increasing, and ranges from zero to infinity. For $\bm z\in\R^{m_i}$, define
\begin{equation}\label{eq:row-response}
 q_{i,b}(\bm z)=f_i(z_b-\mu),
 \qquad \sum_bf_i(z_b-\mu)=1.
\end{equation}
For each score row $\bm z$, $\mu=\mu_i(\bm z)$ is its own unique normalization shift, so that $\sum_b f_i(z_b-\mu)=1$.  

\subsection{Two EMA cascades for the predictor}\label{sec:predictor}
We use higher-order prediction with EMA filters, following \citet{liu2026}. The two filter depths match the two parts of our proof: $\ell$ for the stage that groups terms by player and $N$ for the stage that records individual transitions. 

Let $\shift$ be the backward shift and $\diff=I-\shift$, with zero prehistory. For $r\ge1$, let $\rho_r=\frac r{r+1}$. We define the following operators:
\begin{equation*}
 \qquad \res_r=(I-\rho_r\shift)^{-1},
 \qquad \filt_r=\res_r\diff,
 \qquad \errfilt=\diff^3\filt_\ell^\ell\filt_N^N.
\end{equation*}
The predictor is $\widehat{\bm r}=(I-\errfilt)[\bm r]$. These are exponential moving average (EMA) cascades. Every factor of $\errfilt$ has current-input coefficient one, so the current input cancels in $I-\errfilt$ and the predictor depends only on past inputs. The state variables in \cref{alg:base} give an implementation of this predictor.

\begin{algorithm}[!htbp]
\caption{Base learning dynamics for player $i$}\label{alg:base}
\small
\textbf{Input.} Public parameters from \cref{eq:public-scales,eq:default-rates} and Appendix~\ref{app:parameters}, with row normalization \eqref{eq:row-response}.\\
\textbf{State.} Initialize the $m_i\times m_i$ arrays $\bm\theta,\bm a_1,\ldots,\bm a_{N+\ell},\bm c_1,\bm c_2,\bm c_3$ to zero. Let $\nu_j=(N+1)^{-1}$ for $j\le N$ and $\nu_j=(\ell+1)^{-1}$ for $j>N$.
\begin{algorithmic}[1]
\For{$t=1,2,\ldots$}
 \State $\widehat{\bm r}\gets\sum_{j=1}^{N+\ell}\bm a_j+3\bm c_1-3\bm c_2+\bm c_3$
 \State $\overline{\bm z}\gets\bm\theta+\eta_i\widehat{\bm r}$
 \Comment{raw row score}
 \State Normalize each row of $\overline{\bm z}$ by \eqref{eq:row-response} or Appendix~\ref{app:solver}. Form $\bm Q$.
 \Comment{row normalization}
 \State Solve $\bm Q^\top\bm x=\bm x$, $\sum_ax_a=1$, and play $\bm x$.
 \State Observe $\bm v$ and set $r_{a,b}\gets x_a(v_b-v_a)$ for all $a,b$.
 \State $\bm\theta\gets\bm\theta+\eta_i\bm r$, then $\bm u\gets\bm r$
 \For{$j=1,\ldots,N+\ell$}
  \State $\bm u\gets\bm u-\bm a_j$, then $\bm a_j\gets\bm a_j+\nu_j\bm u$
 \EndFor
 \State $(\bm c_3,\bm c_2,\bm c_1)\gets(\bm c_2,\bm c_1,\bm u)$ \Comment{simultaneously}
\EndFor
\end{algorithmic}
\end{algorithm}

\begin{remark}\label{rem:implementation}
In the exact-real computation model of Appendix~\ref{app:solver},
with public parameters supplied as exact constants, \cref{alg:base}
uses polynomially many operations per round in $n$, $m$, and
$\log t$ for each player.
\end{remark}

Write $\bm a_j^{(t)}$ for the state entering round $t$ and
$\bm u_j^{(t)}$ for the residual produced after cascade stage $j$
in that round, with $\bm u_0^{(t)}=\bm r^{(t)}$ and
$\bm c^{(t)}=\bm u_{N+\ell}^{(t)}$. The update gives
$[I-(1-\nu_j)\shift]\bm u_j=\diff\bm u_{j-1}$, where
$\nu_j=(N+1)^{-1}$ in the first cascade and
$\nu_j=(\ell+1)^{-1}$ in the second. Hence
\begin{equation*}
 \bm c=\filt_\ell^\ell\filt_N^N\bm r,
 \qquad
 \sum_j\bm a_j^{(t)}=\bm r^{(t)}-\bm c^{(t)},
 \qquad
 \bm r-\widehat{\bm r}=\diff^3\bm c=\errfilt[\bm r].
\end{equation*}
Appendix~\ref{app:filters} proves the depth-uniform filter bounds.

\subsection{Cumulative, raw, and effective scores}\label{sec:effective-scores}
We define the cumulative score and the raw score passed to row normalization, respectively:
\[
\boldsymbol{\theta}_i^{(t)}
    := \eta_i \sum_{s \le t} \boldsymbol{r}_i^{(s)},
\qquad
\bar{\boldsymbol{z}}_i^{(t)}
    := \boldsymbol{\theta}_i^{(t-1)}
       + \eta_i \widehat{\boldsymbol{r}}_i^{(t)}
\]
Appendix~\ref{app:solver} shows that the finite row normalization solver returns an effective score
\begin{equation}\label{eq:solver-interface}
 \bm z_i^{(t)}=\overline{\bm z}_i^{(t)}+\eta_i\bm\varepsilon_i^{(t)},
 \qquad
 \sourcenorm{\bm\varepsilon_i^{(t)}}\le(t+1)^{-2},
 \qquad
 \sum_{s=1}^T\sourcenorm{\bm\varepsilon_i^{(s)}}^2\le1.
\end{equation}
Exact normalization corresponds to $\bm\varepsilon_i^{(t)}=0$. 
Using
$\widehat{\boldsymbol r}_i=(I-\errfilt)[\boldsymbol r_i]$ and
$\boldsymbol\theta_i^{(t)}
 =\boldsymbol\theta_i^{(t-1)}
  +\eta_i\boldsymbol r_i^{(t)}$,
the effective score can be rewritten as
\[
\boldsymbol z_i^{(t)}
=
\boldsymbol\theta_i^{(t)}
-\eta_i\left(
\errfilt[\boldsymbol r_i]^{(t)}
-\boldsymbol\varepsilon_i^{(t)}
\right).
\]
We therefore define the effective prediction error
\[
\boldsymbol\xi_i
:=\errfilt[\boldsymbol r_i]-\boldsymbol\varepsilon_i.
\]
Consequently,
\begin{equation}\label{eq:effective-score-difference}
\boldsymbol z_i=\boldsymbol\theta_i-\eta_i\boldsymbol\xi_i,
\qquad
\diff\boldsymbol z_i
=\eta_i\bigl(\boldsymbol r_i-\diff\boldsymbol\xi_i\bigr).
\end{equation}

The filter bounds of Appendix~\ref{app:filters} and \eqref{eq:solver-interface} imply, for arbitrary payoff-vector sequences,
\begin{equation}\label{eq:pointwise-bounds}
 \sourcenorm{\widehat{\bm r}_i^{(t)}}\le97,
 \qquad
 \sourcenorm{\bm\xi_i^{(t)}}\le97.
\end{equation}

\FloatBarrier

\section{Proof of the constant swap-regret theorem}
\label{sec:proof-sketch}

Fix a finite horizon $T$. We prove \cref{thm:base} for any fixed positive public rates satisfying
\begin{equation}\label{eq:rate-budget}
 \sum_i\eta_i^2\le g^2.
\end{equation}
The default rates in \eqref{eq:default-rates} satisfy this condition.
Recall that $\bm\xi_i=\errfilt[\bm r_i]-\bm\varepsilon_i$ is
player $i$'s effective prediction error, and define its cumulative
size by
\[
 E^2=\max_i\sum_{t=1}^T
       \sourcenorm{\bm\xi_i^{(t)}}^2.
\]
For one normalized row $\bm q=\bm q_i(\bm z)$, define
$h_i(f_i(s))=f_i'(s)$ and
\begin{equation}\label{eq:row-geometry}
 h_b=h_i(q_b),
 \qquad
 \sigma_b=\frac{h_b}{\sum_ch_c},
 \qquad
 \widetilde w_b=w_b-\ip{\bm\sigma}{\bm w},
 \qquad
 \quadform_{\bm z}(\bm w)=\sum_bh_b\widetilde w_b^2.
\end{equation}
Thus the row curvature is centered under normalized curvature
weights. At the effective row scores, set
\[
 b_{i,a}^{(t)}
 =\frac{\eta_i^2}{8}
   \quadform_{\bm z_{i,a}^{(t)}}(\bm r_{i,a}^{(t)}),
 \qquad
 S_i=\sum_{t=1}^T\sum_ab_{i,a}^{(t)}.
\]
Let $\beta_i$ be the public row-normalization parameter from
Appendix~\ref{app:parameters}, let $C_\omega$ be the universal proof
constant fixed in Appendix~\ref{app:constant-selection}, and define
\[
 \omega_i=C_\omega\frac{m_i-1}{\beta_i},
 \qquad
 Y^2=\sum_i\omega_iS_i,
 \qquad
 V=\sum_i\omega_iA_i,
 \qquad
 \Lambda^2=\sum_i\omega_i\eta_i^2,
 \qquad
 \eta_{\max}=\max_i\eta_i.
\]

The proof rests on two bounds relating $E$ and $Y$. First, the
row-potential argument gives
\[
 \eta_i\Reg_i(T)+S_i\le2A_i+\eta_i^2E^2,
 \qquad
 Y^2\le2V+\Lambda^2E^2.
\]
Second, the self-play prediction argument gives
\[
 E\le C_0+C_1gE+\tau_\ell Y,
 \qquad
 \kappa:=C_Egk^{5/2},
 \qquad
 \tau_\ell:=C_Pk^2\kappa^\ell,
\]
for universal constants $C_E,C_0,C_1,C_P$. The factor $\tau_\ell$
is crucial: the dimension-dependent quantity $Y$ enters only after
its coefficient has been reduced exponentially in the first-stage
depth $\ell$. The public parameter choices allow the two bounds to
be absorbed, yielding a constant bound on $E$ and hence the claimed
swap-regret bound.

Section~\ref{sec:two-estimates} states the two bounds formally and
carries out the absorption. Section~\ref{sec:energy-idea} proves the
row-potential bound.
Sections~\ref{sec:two-stage-idea}--\ref{sec:filter-assembly} prove the
prediction bound: Section~\ref{sec:two-stage-idea} introduces the
stationary-tree representation, and
Section~\ref{sec:prediction-setup} gives the shared differentiation
and filtering setup. Section~\ref{sec:early-sketch} performs the
first $\ell$ player-centered differences: it bounds some terms
immediately and passes the others to the late stage with total
absolute weight at most $\kappa^\ell$, after accounting for the
early filter. Section~\ref{sec:late-sketch} records transition
indices, charging repeated indices to $Y$ while only fresh indices
continue. Section~\ref{sec:filter-assembly} combines the two stage
bounds to prove the prediction inequality.

\subsection{The energy and prediction bounds imply the theorem}
\label{sec:two-estimates}

We first record the two bounds precisely and close the feedback loop
between $E$ and $Y$. The first bound is valid for arbitrary
payoff-vector sequences. It controls each player's swap regret and
bounds the weighted curvature energy in terms of the prediction error.

\begin{lemma}\label{lem:energy}
For sufficiently small universal $c$, all payoff-vector sequences
satisfying $\bm v_i^{(t)}\in[0,1]^{m_i}$ for every player $i$ and
round $t$ obey
\begin{equation}\label{eq:energy}
 \eta_i\Reg_i(T)+S_i\le2A_i+\eta_i^2E^2,
 \qquad
 Y^2\le2V+\Lambda^2E^2,
 \qquad
 \Lambda^2\le g^2V.
\end{equation}
For the default rates, $\Lambda^2=g^2V/A$.
\end{lemma}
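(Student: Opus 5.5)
The first inequality comes from a potential-based optimistic FTRL analysis applied row by row; the other two are algebra. For each player $i$ and source action $a$, let $\Phi_i(\bm z)$ be the convex row potential whose gradient is the row normalization, $\nabla\Phi_i(\bm z)=\bm q_i(\bm z)$. The properties of $f_i$ established in the scalar-potential appendix (Lemma~\ref{lem:potential-full}) should supply this $\Phi_i$, with Hessian quadratic form equal to $\quadform_{\bm z}$ up to constants.

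The argument for a fixed row tracks the one-step change of $\Phi_i(\bm z_{i,a}^{(t)})$. By \eqref{eq:effective-score-difference}, $\diff\bm z_{i,a}=\eta_i(\bm r_{i,a}-\diff\bm\xi_{i,a})$, so it is cleaner to work with the cumulative potential $\Phi_i(\bm\theta_{i,a}^{(t)})$ and compare it with the played point $\bm z_{i,a}^{(t)}=\bm\theta_{i,a}^{(t)}-\eta_i\bm\xi_{i,a}^{(t)}$. The standard optimistic FTRL identity gives
\[
\Phi_i(\bm\theta^{(t)})-\Phi_i(\bm\theta^{(t-1)})
=\eta_i\ip{\bm q(\bm z^{(t)})}{\bm r^{(t)}}+\Breg(\bm\theta^{(t)},\bm z^{(t)})-\Breg(\bm\theta^{(t-1)},\bm z^{(t)}).
\]
The term $\ip{\bm q_{i,a}}{\bm r_{i,a}}$ summed over $a$ vanishes, because $\bm x$ is stationary for $\bm Q$: indeed $\sum_{a,b}q_{a,b}x_a(v_b-v_a)=\ip{\bm Q^\top\bm x-\bm x}{\bm v}=0$. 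This is the Blum--Mansour cancellation.

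Summing over $t$ and $a$ gives the regret lower bound. The quantity $\sum_a\Phi_i(\bm\theta_{i,a}^{(T)})$ is at least $\eta_i\sum_a\max_b\sum_t r_{i,a,b}^{(t)}=\eta_i\Reg_i(T)$, minus a per-row offset. The initial value $\sum_a\Phi_i(\bm 0)$ costs at most $A_i=m_i\log m_i$, using the choice of $\beta_i,d_i,\delta$ in Appendix~\ref{app:parameters}. This yields
\[
\eta_i\Reg_i(T)\le A_i+\sum_{t,a}\bigl[\Breg(\bm\theta^{(t)},\bm z^{(t)})-\Breg(\bm\theta^{(t-1)},\bm z^{(t)})\bigr].
\]

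The remaining task is to bound the Bregman terms by $-S_i+\eta_i^2E^2$ plus slack absorbed into the second $A_i$; this is the step I expect to be the main obstacle. Write $\bm\theta^{(t)}=\bm z^{(t)}+\eta_i\bm\xi^{(t)}$ and $\bm\theta^{(t-1)}=\bm z^{(t)}+\eta_i\bm\xi^{(t)}-\eta_i\bm r^{(t)}$. To second order, the difference is
\[
\eta_i^2\bigl[\tfrac12\quadform(\bm\xi)-\tfrac12\quadform(\bm\xi-\bm r)\bigr]
=\eta_i^2\bigl[\ip{\bm\xi}{\bm r}_{\quadform}-\tfrac12\quadform(\bm r)\bigr]
\le\eta_i^2\bigl[\quadform(\bm\xi)-\tfrac14\quadform(\bm r)\bigr],
\]
by AM--GM. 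The $-\tfrac14\quadform(\bm r)$ term dominates $-b_{i,a}=-\tfrac{\eta_i^2}{8}\quadform(\bm r)$. For the positive term, the key facts are $\sum_b h_b\lesssim1$ and that $\bm\xi$ enters only through the centered $\widetilde{\bm\xi}$. Together these give $\quadform_{\bm z}(\bm\xi_{i,a})\lesssim\max_b|\xi_{i,a,b}|^2$, and summing over $a$ gives at most $\sourcenorm{\bm\xi_i}^2$. Making this rigorous needs local stability: the Hessian at nearby points must be comparable, within a factor of 2, to that at $\bm z^{(t)}$. This should follow from the derivative bounds in Lemma~\ref{lem:scalar-bounds}, since $\eta_i\le g$ and $\|\bm\xi\|,\|\bm r\|\le 97$, so all arguments move by $O(c)$. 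This is where ``sufficiently small $c$'' enters. Summing over $t$ and using $\sum_t\sourcenorm{\bm\xi_i^{(t)}}^2\le E^2$ gives the first claim, with constants arranged so that the total is $2A_i+\eta_i^2E^2$.

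The second inequality follows because swap regret is nonnegative. Dropping $\eta_i\Reg_i(T)$ gives $S_i\le2A_i+\eta_i^2E^2$. Multiplying by $\omega_i$ and summing over $i$ gives $Y^2\le2V+\Lambda^2E^2$.

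For the third, the default rates satisfy $\eta_i^2=g^2A_i/A$, so $\Lambda^2=\sum_i\omega_i g^2A_i/A=g^2V/A$. Since $A\ge\ldots$ would require $A\ge1$, which holds because $m_i\ge2$, and more precisely $\sum_i\omega_i\eta_i^2\le g^2V$ holds under \eqref{eq:rate-budget} whenever $\eta_i^2\le g^2A_i$, this gives $\Lambda^2\le g^2V$. In general I would check that $\eta_i^2\le g^2A_i$, using $A_i\ge2\log2\ge1$ and $\eta_i\le g$.
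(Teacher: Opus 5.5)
Your outline follows the paper's proof, but the step you flag does not close with the lemma's constants as you describe it. The shared structure is: the row potential with $\nabla\Phi_i=\bm q_i$, the split of $\Phi_i(\bm\theta_{i,a}^{(t)})-\Phi_i(\bm\theta_{i,a}^{(t-1)})$ into $\eta_i\ip{\bm q_{i,a}^{(t)}}{\bm r_{i,a}^{(t)}}$ plus two Bregman terms at the effective score $\bm z=\bm z_{i,a}^{(t)}$, stationarity cancellation, telescoping, and dropping $\Reg_i(T)\ge0$ to get $Y^2$. Your check $\eta_i^2\le g^2\le g^2A_i$ for $\Lambda^2\le g^2V$ is a fine variant of the paper's $(\max_i\omega_i)g^2\le g^2V$.

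The first problem is the budget: nothing in the paper gives $\sum_a\Phi_i(\bm 0)\le A_i$. \Cref{lem:potential-full} proves only $\Phi_i(\bm 0)<2\log m_i$ per row. The sharper bound also appears to fail: for a two-player $2\times2$ game, $q/h(q)$ stays well above $1$ on most of $[0,1]$, and a rough numerical estimate gives $\Phi_i(\bm 0)\approx1.2>\log 2$. So initialization consumes all of $2A_i$, and there is no ``second $A_i$'' of slack for the Bregman terms. There is also no per-row offset to absorb, since the $c_{\Phi,i}$ normalization gives $\Phi_i(\bm z)\ge\max_bz_b$ exactly. With $\bm d^+=\eta_i\bm\xi_{i,a}^{(t)}$ and $\bm d^-=\eta_i(\bm\xi_{i,a}^{(t)}-\bm r_{i,a}^{(t)})$, you therefore need, per row and round,
\[
\Breg(\bm z+\bm d^+,\bm z)-\Breg(\bm z+\bm d^-,\bm z)\le\eta_i^2\norm{\bm\xi_{i,a}^{(t)}}_\infty^2-b_{i,a}^{(t)},
\]
with coefficient exactly one and no additive constant.

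Your sketch does not deliver this. Applying a factor-$2$ Hessian comparison to both terms gives $\Breg(\bm z+\bm d^+,\bm z)\le\quadform_{\bm z}(\bm d^+)$ and $\Breg(\bm z+\bm d^-,\bm z)\ge\frac14\quadform_{\bm z}(\bm d^-)$. Extracting $\frac{\eta_i^2}8\quadform_{\bm z}(\bm r_{i,a}^{(t)})$ from $\frac14\quadform_{\bm z}(\bm d^-)$ then costs, under any weighted AM--GM, at least another $\frac{\eta_i^2}4\quadform_{\bm z}(\bm\xi_{i,a}^{(t)})$. The best this route gives is $\frac54\eta_i^2\norm{\bm\xi_{i,a}^{(t)}}_\infty^2-b_{i,a}^{(t)}$.

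The paper never compares Hessians on the positive term. Since $h_b\le q_b$, we have $\quadform_{\bm y}(\bm w)\le\sum_bh_b(\bm y)w_b^2\le\norm{\bm w}_\infty^2$ at every $\bm y$, so Taylor's formula gives $2\Breg(\bm z+\bm d^+,\bm z)\le\norm{\bm d^+}_\infty^2$ with no locality. Locality ($\norm{\bm d^\pm}_\infty\le98\eta_i<1/8$, so $e^{-4\epsilon}\ge\frac12$) is used only to bound the sum:
\[
\Breg(\bm z+\bm d^+,\bm z)+\Breg(\bm z+\bm d^-,\bm z)\ge\tfrac14\bigl(\quadform_{\bm z}(\bm d^+)+\quadform_{\bm z}(\bm d^-)\bigr)\ge\tfrac18\quadform_{\bm z}(\bm d^+-\bm d^-)=b_{i,a}^{(t)}.
\]
Writing the difference as $2\Breg(\bm z+\bm d^+,\bm z)$ minus this sum yields exactly $\eta_i^2\norm{\bm\xi_{i,a}^{(t)}}_\infty^2-b_{i,a}^{(t)}$. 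Then $\sum_a\norm{\bm\xi_{i,a}^{(t)}}_\infty^2\le\sourcenorm{\bm\xi_i^{(t)}}^2$ gives the $\eta_i^2E^2$ term. Your route can also be repaired by shrinking $c$ until the comparison factor is close to $1$ and using a weighted AM--GM, but the global upper bound on the positive term is the clean fix.
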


The second bound is specific to self-play and runs in the
complementary direction:

\begin{lemma}\label{lem:prediction}
There are universal constants $C_E,C_0,C_1,C_P$ such that, under
\eqref{eq:rate-budget} and the public parameter bounds of
Appendix~\ref{app:parameters}, if $\kappa=C_Egk^{5/2}\le1/16$ and
$\tau_\ell=C_Pk^2\kappa^\ell$, then the following holds in the
self-play setting:
\begin{equation}\label{eq:prediction}
 E\le C_0+C_1gE+\tau_\ell Y.
\end{equation}
\end{lemma}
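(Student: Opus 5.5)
The proof bounds $\bm\xi_i=\errfilt[\bm r_i]-\bm\varepsilon_i$ directly. The solver error contributes at most $1$ to $E$ by \eqref{eq:solver-interface}, which will be part of $C_0$. The main term is $\errfilt[\bm r_i]=\diff^3\filt_\ell^\ell\filt_N^N\bm r_i$, and we treat it as a filtered higher-order difference of the self-play gain sequence.

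First, by the Markov-chain tree theorem, write $x_{i,a}^{(t)}$ as a ratio of sums over rooted directed trees of products of row probabilities $q_{i,a',b'}=f_i(z_{i,a',b'}-\mu_{i,a'})$. Then $r_{i,a,b}=x_{i,a}(v_{i,b}-v_{i,a})$ becomes an expectation over independent product-tree distributions, one per player. The action for player $i$ is the root, and $v_{i,\cdot}$ is multilinear in the other players' tree-sampled roots.

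Each application of $\diff$ to such an expectation splits by a discrete Leibniz/mean-value formula into a sum over players $j$. Each term involves the derivative of player $j$'s tree law with respect to its score rows, contracted with $\diff\bm z_j=\eta_j(\bm r_j-\diff\bm\xi_j)$ from \eqref{eq:effective-score-difference}. These player contributions are centered and independent across players, so their second moment adds in quadrature. The derivative bounds on the normalized rows (Lemma~\ref{lem:normalized-derivatives-full}) then give a coefficient $O(\sum_j\eta_j^2)^{1/2}\le O(g)$ per difference.

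The $\bm r_j$ parts are the branches that keep expanding. The $\diff\bm\xi_j$ parts feed back an $E$-dependent term; after filtering with the $\ell^1$-bounded filters of Appendix~\ref{app:filters}, they yield the $C_1gE$ contribution.

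\textbf{Early stage.} I would iterate this $\ell$ times, pairing each difference with one factor $\filt_\ell$. The depth-uniform filter bounds should give a factor $O(k^{5/2})$ per step from the combinatorics of high-order differences against the EMA, so each step costs $\kappa=C_Egk^{5/2}$. Three extra differences from $\diff^3$ absorb boundary terms into $C_0$. After $\ell$ steps the surviving branches carry total weight $k^2\kappa^\ell$.

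\textbf{Late stage.} Here the surviving branches are expanded at the transition level, tracking the index $(i,a,b)$ hit by each derivative. The tree sparsity (one outgoing edge per source) prevents an extra factor of $m$. When an index repeats, the pair of occurrences is bounded by Cauchy--Schwarz by the curvature energy $\sum_t b_{i,a}^{(t)}$, weighted by $\omega_i$ to account for $(m_i-1)/\beta_i$ sensitivity. This yields $Y$ times the inherited weight, hence the term $\tau_\ell Y$. A fresh index continues, but there are at most $N$ fresh indices, matched by the $N$ factors $\filt_N$ that absorb these continuations. After $N$ steps no fresh index remains, so every branch terminates.

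Summing the branches and using $\kappa\le1/16$ to make the geometric series converge gives \eqref{eq:prediction}.

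\textbf{Main obstacle.} I expect the hardest step to be the repeated-index charging. It requires showing that the second-order contribution of the row map along a repeated index is dominated by $\quadform_{\bm z}(\bm r)$ at the \emph{same} effective scores, uniformly across rounds despite the EMA smearing in time. I would first try a local-stability estimate: the Hessian of the row potential changes by at most a constant factor over one round, since $\sourcenorm{\diff\bm z_i}\le\eta_i\cdot98$, and then sum with the filter's $\ell^1$ bound.
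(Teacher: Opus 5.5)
Your outline matches the paper's architecture: the tree representation, an $\ell$-step player-centered stage, and an $N$-step transition-index stage. However, the late stage does not close as you describe it.

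\textbf{The late stage.} The factors $\filt_N=\res_N\diff$ do not ``absorb'' continuations. Every difference you consume leaves behind a resolvent $\res_N$ of kernel mass $N+1$, so after $j$ late differences the filter costs $(N+1)^j$ (\cref{lem:filter-full}). Suppose a fresh index continues with the full centered score $\sum_e(I_e-\E I_e)(h_e/q_e)\widetilde w_e$. Each continuation then carries a coefficient of order $\eta_{\max}$, over $2n(1+2\ell+2j)$ player/sign/distribution branches. The per-step multiplier, filter mass included, is therefore of order $\eta_{\max}nN^2$. This is polynomially large for $g=c/k^{5/2}$, so the sum over $j\le N$ is not bounded by a constant.

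You also cannot stop the whole score by curvature. That would need $(h_e/q_e)^2\le Ch_e/\beta_i$, which fails as $q_e\downarrow0$.

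The missing idea is the split $h_e/q_e=\gamma_e+\delta\alpha_e$ of \cref{lem:scalar-bounds}, which is a property of the specific map $f_i$:
\begin{itemize}
\item The entire $\gamma$-score is bounded at once by \eqref{eq:tree-gamma-variance}, using Cauchy--Schwarz over the $m_i-1$ tree edges (hence the weights $\omega_i$). It stops at cost $O(2^jY)$.
\item Only the $\delta\alpha$ residual continues, with coefficient $\delta\eta_{\max}$. The choice $\delta=1/(n(N+\ell+2)^2)$ is exactly what makes $8(N+1)a_{\mathrm{late}}<1/25$.
\item A repeated index is paid through $\delta\alpha_e^2\le C_\alpha\sqrt{h_e/\beta_i}$ together with $\sqrt{h_e}\,|\widetilde r_e|\le\sqrt{\quadform_{\bm y_{i,a}}(\bm r_{i,a})}$.
\end{itemize}
Likewise, gain derivatives of the inherited early atoms must stop in the late stage, via the curvature branch of \eqref{eq:J-bounds-full}. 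If they kept raising order, they would consume no fresh index, and the depth-$N$ termination would fail.

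\textbf{The early stage.} Your early-stage accounting is also not derived. After $p$ differences the integrand is a product of up to $p$ overlapping centered sums. ``Second moments add in quadrature'' controls only one such factor. The paper instead normalizes each atom by $2\sqrt\ell$ and uses Hoeffding $L^{2\ell}$ bounds with H\"older (\cref{lem:centered-moment}), which costs $\sqrt\ell$. Two further costs enter:
\begin{itemize}
\item The product rule over $O(p)$ sampling distributions, together with the factorial growth of order-$r$ derivatives of existing atoms, costs $p+1$.
\item Each consumed difference leaves a resolvent $\res_\ell$ of mass $k$.
\end{itemize}
Hence $a_{p+1}k^{p+1}\le C_Eg\sqrt\ell\,(p+1)\,k\,a_pk^p\le\kappa\,a_pk^p$. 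The $k^{5/2}$ in $\kappa$ is $k\cdot\sqrt k\cdot k$, not an $O(g)$ per-difference coefficient times unspecified ``filter combinatorics.''

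Two smaller corrections follow from this. The handoff weight is $\kappa^\ell$, and the $k^2$ in $\tau_\ell$ comes from the late stopped cost $C_Lk^28^j$. Also, two of the three differences in $\diff^3$ stay inside the filters to keep $\kernorm{\diff^2\filt_\ell^a\filt_N^b}<48$; they do not absorb boundary terms.

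Finally, the step you flag as the main obstacle is comparatively routine. Comparing curvature across the window of trial scores is \cref{lem:late-window}, at cost $C2^j$. The real difficulties are the two above.
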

The factor $\tau_\ell$ is essential because $V$, and hence the
a priori size of $Y$, depends on the game dimensions.
Appendix~\ref{app:constant-selection} makes $\tau_\ell\sqrt V$
uniformly bounded. In particular, Appendix~\ref{app:constant-selection} selects the universal
constants so that
\begin{equation}\label{eq:parameter-absorption}
 \kappa\le\frac1{16},
 \qquad
 \tau_\ell\sqrt{2V}\le1,
 \qquad
 (C_1+1)g\le\frac12.
\end{equation}
Combining \eqref{eq:energy} and \eqref{eq:prediction} now closes
the feedback loop. By \cref{lem:energy},
\[
 Y\le\sqrt{2V}+\Lambda E
   \le\sqrt{2V}+g\sqrt V E.
\]
Hence $\tau_\ell Y\le1+gE$, and \cref{lem:prediction} gives
\[
 E\le C_0+1+(C_1+1)gE,
 \qquad
 E\le E_0:=2(C_0+1).
\]
The choice of $c$ also ensures $\eta_i^2E_0^2\le1\le A_i$.
Returning to \eqref{eq:energy} and discarding $S_i\ge0$ yields
\begin{equation}\label{eq:generic-bound}
 \Reg_i(T)\le\frac{3A_i}{\eta_i}.
\end{equation}
For the default rates this is
$3k^{5/2}\sqrt{A_iA}/c\le(4/c)k^{5/2}\sqrt{A_iA}$, which gives
\cref{thm:base}. It remains to prove \cref{lem:energy} in
Section~\ref{sec:energy-idea} and \cref{lem:prediction} in
Sections~\ref{sec:two-stage-idea}--\ref{sec:filter-assembly}.

\subsection{Stationarity yields the energy bound}
\label{sec:energy-idea}

This subsection proves \cref{lem:energy}. Define
\begin{equation}\label{eq:potential}
 P_i(s)=\int_{-\infty}^s f_i(v)\,dv,
 \qquad
 c_{\Phi,i}=\min_s\{P_i(s)-s\},
 \qquad
 \Phi_i(\bm z)
 =\mu_i(\bm z)
  +\sum_bP_i(z_b-\mu_i(\bm z))-c_{\Phi,i}.
\end{equation}
The construction in Appendix~\ref{app:scalar-potential} gives
\begin{equation}\label{eq:potential-properties}
 \nabla\Phi_i(\bm z)=\bm q_i(\bm z),
 \qquad
 \Phi_i(\bm z)\ge\max_bz_b,
 \qquad
 0\le\Phi_i(0)\le2\log m_i,
 \qquad
 \bm w^\top\nabla^2\Phi_i(\bm z)\bm w
 =\quadform_{\bm z}(\bm w),
\end{equation}
together with the local curvature comparison used below. Write
\[
 \Breg(\bm y,\bm z)
 =\Phi_i(\bm y)-\Phi_i(\bm z)
  -\ip{\bm q_i(\bm z)}{\bm y-\bm z}.
\]

Fix a row and a round, abbreviate $\bm z=\bm z_{i,a}^{(t)}$,
and set
\[
 \bm d^+
 =\bm\theta_{i,a}^{(t)}-\bm z
 =\eta_i\bm\xi_{i,a}^{(t)},
 \qquad
 \bm d^-
 =\bm\theta_{i,a}^{(t-1)}-\bm z
 =\eta_i(\bm\xi_{i,a}^{(t)}-\bm r_{i,a}^{(t)}).
\]
By \eqref{eq:pointwise-bounds}, both displacements have sup norm
at most $98\eta_i$. The universal choice of $c$ gives
$98\eta_i\le98c/k^{5/2}<1/8$, so the local comparison applies on
both segments. Taylor's integral formula gives
\[
 \Breg(\bm z+\bm d^+,\bm z)
 +\Breg(\bm z+\bm d^-,\bm z)
 \ge\frac14\bigl(
   \quadform_{\bm z}(\bm d^+)
   +\quadform_{\bm z}(\bm d^-)
 \bigr)
 \ge b_{i,a}^{(t)}.
\]
The second inequality uses $Q(x)+Q(y)\ge Q(x-y)/2$ for every
positive-semidefinite quadratic form $Q$, together with
$\bm d^+-\bm d^-=\eta_i\bm r_{i,a}^{(t)}$. Also, by
\eqref{eq:hessian-full},
\[
 2\Breg(\bm z+\bm d^+,\bm z)
 \le\norm{\bm d^+}_\infty^2.
\]
Subtracting the two Taylor expansions yields
\[
 \Phi_i(\bm\theta_{i,a}^{(t)})
 -\Phi_i(\bm\theta_{i,a}^{(t-1)})
 \le\eta_i\ip{\bm q_{i,a}^{(t)}}{\bm r_{i,a}^{(t)}}
   +\eta_i^2\norm{\bm\xi_{i,a}^{(t)}}_\infty^2
   -b_{i,a}^{(t)}.
\]
The linear terms vanish after summing the source rows, because
stationarity gives
\[
 \sum_{a,b}q_{i,a,b}^{(t)}r_{i,a,b}^{(t)}
 =(\bm x_i^{(t)})^\top
   (\bm Q_i^{(t)}-I)\bm v_i^{(t)}
 =0.
\]
Telescoping over time and using \eqref{eq:potential-properties}
gives
\[
 \sum_a\Phi_i(\bm\theta_{i,a}^{(T)})
 \ge\eta_i\sum_a\max_b\sum_{t=1}^Tr_{i,a,b}^{(t)}
 =\eta_i\Reg_i(T),
 \qquad
 \sum_a\Phi_i(0)\le2A_i.
\]
Moreover,
\[
 \sum_{t=1}^T\sum_a
   \norm{\bm\xi_{i,a}^{(t)}}_\infty^2
 \le\sum_{t=1}^T
   \left(\sum_a\norm{\bm\xi_{i,a}^{(t)}}_\infty\right)^2
 =\sum_{t=1}^T\sourcenorm{\bm\xi_i^{(t)}}^2
 \le E^2.
\]
Therefore
\[
 \eta_i\Reg_i(T)+S_i\le2A_i+\eta_i^2E^2.
\]
Multiplying by $\omega_i$, summing over players, and using
$\Reg_i(T)\ge0$ gives
\[
 Y^2\le2V+\Lambda^2E^2.
\]
Finally,
$\Lambda^2\le(\max_i\omega_i)\sum_i\eta_i^2\le g^2V$,
while the default rates give $\Lambda^2=g^2V/A$.
This proves \cref{lem:energy}.

\subsection{Probabilistic tree representation}
\label{sec:two-stage-idea}
We use the classical tree representation \citep{anantharam1989}, as in the analyses of \citet{anagnostides2022ce,anagnostides2022swap}. Each deviation gain is represented as an expectation under a product distribution on rooted directed trees. Differentiating that distribution produces a centered tree score. This representation supplies both a bound that groups contributions by player and a curvature bound for individual transitions. We use these bounds in the two stages of the proof of \cref{lem:prediction}.

At trial scores $\bm y$, let
$\treeP_{i,\bm y}$ be the normalized distribution on rooted directed trees
whose weights are products of transition entries, and let
$\treeP_{\bm y}=\bigotimes_i\treeP_{i,\bm y}$.
The tree theorem identifies each player's root distribution with
its stationary strategy. For a product-tree sample $\bm T$, define
\[
 G_{i,a,b}(\bm T)
 =\one_{\{a_i(\bm T)=a\}}
  \bigl[u_i(b,\bm a_{-i}(\bm T))
        -u_i(a,\bm a_{-i}(\bm T))\bigr].
\]
The root selects only one source row, so
$\sourcenorm{\bm G_i(\bm T)}\le1$. Let
\[
 \Gamma_i(\bm y):=\E_{\treeP_{\bm y}}\bm G_i,
 \qquad
 \bm r_i^{(t)}=\Gamma_i(\bm z^{(t)}).
\]
For \(t\ge2\), consider interpolating between the two effective scores by
\[
 \bm y_\theta^{(t)}
 =(1-\theta)\bm z^{(t-1)}+\theta\bm z^{(t)},
 \qquad 0\le\theta\le1.
\]
Since \(\bm G_i\) has no explicit score dependence, the fundamental
theorem of calculus and differentiation of the tree distribution give
\begin{align}
 \diff\bm r_i^{(t)}
 &=
 \Gamma_i(\bm z^{(t)})-\Gamma_i(\bm z^{(t-1)}) \notag\\
 &=
 \int_0^1
 \nabla_{\diff\bm z^{(t)}}
 \Gamma_i(\bm y_\theta^{(t)})\,d\theta \notag\\
 &=
 \int_0^1
 \E_{\treeP_{\bm y_\theta^{(t)}}}\!\left[
   \bm G_i(\bm T)
   \nabla_{\diff\bm z^{(t)}}
   \ln\treeP_{\bm y_\theta^{(t)}}(\bm T)
 \right]d\theta .
 \label{eq:first-gain-difference}
\end{align}
Thus a backward difference of the gain is represented exactly as an
integrated derivative of the tree distribution along the change in
effective scores.

Write $I_e$ for edge inclusion, with $I_e=0$ for diagonal indices.
Differentiating the normalized tree distribution in a deterministic
direction $\bm w$ gives
\begin{equation}\label{eq:tree-score}
 \nabla_{\bm w}\ln\treeP_{\bm y}
 =\sum_e(I_e-\E I_e)\frac{h_e}{q_e}\widetilde w_e.
\end{equation}
The uncentered score of player $i$ has magnitude at most
$2\sourcenorm{\bm w_i}$. Independence across players therefore
implies
\begin{equation}\label{eq:early-tree-score}
 \norm{\nabla_{\bm w}\ln\treeP_{\bm y}}_{L^2}
 \le2\left(\sum_i\sourcenorm{\bm w_i}^2\right)^{1/2}.
\end{equation}
Appendix~\ref{app:trees-moments} proves these identities and
bounds. The early stage uses this player-centered bound,
which does not involve $\omega_i$. The late stage retains the
transition indices in \eqref{eq:tree-score} and uses the
curvature-weighted bound.

\subsection{From gain differences to products of centered factors}
\label{sec:prediction-setup}

By the solver guarantee \eqref{eq:solver-interface},
\[
 E\le1+\max_i\sourcenormT{\errfilt[\bm r_i]}.
\]
A pointwise gain bound alone would give a bound growing as
$\sqrt T$. Instead, we use the differences already present in
$\errfilt$ and expand only the terms not yet controlled.

\paragraph{The filter supplies a backward difference.}
Since $\filt_r=\res_r\diff$ and all factors commute,
\[
 \errfilt
 =\res_\ell^\ell\res_N^N\diff^{N+\ell+3}.
\]
For $0\le p\le\ell$, define
\[
 \earlyfilter_p
 :=\res_\ell^\ell\res_N^N\diff^{N+\ell+3-p}.
\]
Then $\earlyfilter_p=\earlyfilter_{p+1}\diff$ for $p<\ell$,
and the first step is
\[
 \errfilt[\bm r_i]
 =\earlyfilter_0[\bm r_i]
 =\earlyfilter_1[\diff\bm r_i].
\]
We now follow the first two differences of the gain to see what
can be bounded and what must be expanded further.

\paragraph{The first difference separates error and gain directions.}
Fix $t\ge2$ and interpolate
$\bm y_\theta^{(t)}
 =(1-\theta)\bm z^{(t-1)}+\theta\bm z^{(t)}$.
In the following calculation, suppress the superscript $(t)$
and write $\E_\theta$ for expectation at $\bm y_\theta$.
For a product direction, use the shorthand
$\eta\bm w=(\eta_j\bm w_j)_{j=1}^n$.
Applying \eqref{eq:first-gain-difference} and using
\(\diff\bm z=\eta\bm r-\eta\diff\bm\xi\), we obtain
\[
 \begin{aligned}
  \diff\bm r_i
   =\underbrace{-\int_0^1\E_\theta\!\left[
       \bm G_i\nabla_{\eta\diff\bm\xi}
       \ln\treeP_{\bm y_\theta}
     \right]d\theta}_{\text{error direction: stop and bound using }E}
  \quad + \quad
     \underbrace{\int_0^1\E_\theta\!\left[
       \bm G_i\nabla_{\eta\bm r}
       \ln\treeP_{\bm y_\theta}
     \right]d\theta}_{\text{gain direction: continue taking derivative}}.
 \end{aligned}
\]
The error contribution has cumulative norm at most $CgE$ by
\eqref{eq:early-tree-score} and
\[
 \sum_j\eta_j^2\sourcenormT{\diff\bm\xi_j}^{\,2}
 \le4g^2E^2.
\]
At $t=1$, zero prehistory contributes the boundary impulse
$\bm r_i^{(1)}$, of source norm at most one.
We call these contributions \emph{stopped}: they are included
in the final bound without taking further differences.

\paragraph{A gain-direction distribution derivative creates an atom.}
For $e=(j,a,b)$ and $\bm v\in[0,1]^{m_j}$, define the local
centered factor
\[
 \chi^\lambda_{e,\bm v}(\bm y)
 =\frac{h_e(\bm y)}{q_e(\bm y)}
   \bigl(v_b-\ip{\bm\sigma_{j,a}(\bm y)}{\bm v}\bigr).
\]
The gain-direction score in \eqref{eq:tree-score} is a centered
sum of these factors over players, weighted by their learning
rates and source probabilities. Fix $t\ge2$ and $\theta\in[0,1]$.
Introduce an independent copy of the original tree sample and
a separate gain sample:
\[
 \bm T,\bm T'\sim\treeP_{\bm y_\theta^{(t)}},
 \qquad
 \bm S\sim\treeP_{\bm z^{(t)}},
\]
with all three samples independent. Write $\E_{t,\theta}^{+}$
for expectation under this enlarged distribution. The construction in
Appendix~\ref{app:early-closure} gives an order-zero \emph{atom}
$Z_\theta^{(0)}$, with player weights $\eta_j/g$, such that
\[
 \E_\theta\!\left[
   \bm G_i(\bm T)
   \nabla_{\eta\bm r^{(t)}}
   \ln\treeP_{\bm y_\theta^{(t)}}(\bm T)
 \right]
 =c_0\E_{t,\theta}^{+}\!\left[
   \bm G_i(\bm T)Z_\theta^{(0)}
 \right],
 \qquad c_0:=2Bg\sqrt\ell.
\]
Here $B$ is the universal constant in
\cref{lem:normalized-derivatives-full}. Indeed, averaging over
$\bm T'$ replaces $I_e(\bm T)-I_e(\bm T')$ by
$I_e(\bm T)-\E I_e$, and averaging over $\bm S$ represents the
gain direction by \eqref{eq:centered-tree-identity}. Thus the
equality holds after averaging over the auxiliary samples,
not pointwise between the original score and the atom.
We suppress the sample and score arguments of the atom.

Define the continuing sequence, for each fixed $\theta$, by
\[
 \bm U_\theta^{(t)}
 :=\begin{cases}
    \E_{t,\theta}^{+}
      [\bm G_i(\bm T)Z_\theta^{(0)}],&t\ge2,\\
    0,&t\le1.
   \end{cases}
\]
The first difference is therefore the exact sequence identity
\[
 \diff\bm r_i
 =\bm V+c_0\int_0^1\bm U_\theta\,d\theta.
\]
Here $\bm V$ contains the boundary impulse at $t=1$ and the
error-direction integral above for $t\ge2$.

\paragraph{The second difference adds a factor or raises its order.}
Fix one continuing sequence $\bm U=\bm U_\theta$, whose
integrand has the form $\bm GZ^{(0)}$, with
$\bm G=\bm G_i(\bm T)$. For $t\ge3$, interpolate all its score
arguments between times $t-1$ and $t$ and apply the product rule.
Every sampling distribution contributes a derivative, including the distributions
of the independent copy and the gain sample. A gain-direction
distribution derivative creates a new order-zero atom. A gain derivative
of the existing atom creates an order-one atom. Collecting the
error directions into $\bm V$ gives the exact identity
\[
 \begin{aligned}
  (\diff\bm U)^{(t)}
  ={}&\bm V^{(t)}
   \quad+\quad\underbrace{\int c_1(\nu)\,
      \E_{t,\nu,1}\!\left[
       \bm GZ_{\nu,1}^{(0)}Z_{\nu,2}^{(0)}
      \right]d\nu}_{\text{differentiate a sampling distribution}}
   \quad+\quad\underbrace{\int c_2(\nu)\,
      \E_{t,\nu,2}\!\left[
       \bm GZ_\nu^{(1)}
      \right]d\nu}_{\text{differentiate the existing atom}}.
 \end{aligned}
\]
The two integrals include all interpolation parameters and finite
branching choices for their respective derivative families.
Each $\E_{t,\nu,k}$ uses the complete auxiliary-sample distribution of
that branch, with the old observable $\bm G$ viewed as a function
of the enlarged sample tuple. The coefficients $c_k(\nu)$ are
independent of time. Sample and score arguments of the atoms
are suppressed. At $t=2$, set $\bm V^{(2)}=\bm U^{(2)}$ and
all continuing terms to zero. All terms are zero for $t\le1$.
As throughout the recursion, $\bm V$ denotes the stopped sequence
for the expression currently being differentiated.

More generally, a gain-direction distribution derivative adds $Z^{(0)}$,
whereas a gain derivative of $Z^{(r)}$ raises its order to $r+1$.
An order-$r$ atom contains normalized order-$r$ derivatives of
the local factor, together with auxiliary payoff samples.
\Cref{lem:normalized-derivatives-full} and the exact constructions
in Appendix~\ref{app:early-closure} justify these two operations.

\paragraph{The form preserved by the recursion.}
Every continuing branch at depth $p$ therefore has the form
\[
 \bm U^{(t)}
 =\one_{\{t>p\}}\E_t\!\left[
    \bm G(\Omega)\prod_{\nu=1}^h
    Z_\nu^{(r_\nu)}(\Omega,\bm y_\nu^{(t)})
  \right],
 \qquad
 \sum_{\nu=1}^h(r_\nu+1)\le p.
\]
Here $\E_t$ averages the auxiliary product-tree samples $\Omega$
under their time-$t$ distributions, and $\bm G$ is a fixed observable
with $\sourcenorm{\bm G}\le1$. Each continuing difference adds
one unit of degree. The indicator ensures positive represented
gain times and produces a boundary impulse at the next
difference. Initially there are no atoms, so the gain is a
depth-zero expression. Throughout, an \emph{early expression}
means the admissible construction of
Appendix~\ref{app:early-closure}, including its sample-independence
conditions and fixed convex-combination rules for the score
arguments. The next subsection bounds this recursion.

\subsection{Bounding the early expansion}
\label{sec:early-sketch}

During the first $\ell$ differences, we avoid the weighted
curvature energy $Y$. Instead, player centering keeps the
learning rates at the scale $(\sum_i\eta_i^2)^{1/2}\le g$.
We need two bounds: products of atoms remain normalized,
and the total weight of the continuing branches becomes small.

\paragraph{Products of centered atoms remain controlled.}
Conditional on its own payoff samples, each atom is a sum of
independent centered player contributions, with squared weights
summing to at most one. The local normalization and the fact
that a tree has at most one outgoing edge from the selected
source row bound each player's summand.
The normalization $2\sqrt\ell$ then gives, by
\cref{lem:centered-moment},
\[
 \norm{Z^{(r)}}_{L^{2\ell}}\le1,
 \qquad
 \E\prod_{\nu=1}^h|Z_\nu^{(r_\nu)}|^2\le1
 \quad(h\le\ell).
\]
The product bound uses H\"older, not independence between
atoms. Thus the integrand of every early expression has
$L^2$ norm at most one. These are sample-moment bounds at
each round, not yet bounds on the cumulative filtered norm.

\cref{lem:normalized-derivatives-full} bounds error derivatives
of existing factors, while \eqref{eq:early-tree-score} bounds
error derivatives of sampling distributions. Together with the two
continuing operations in Section~\ref{sec:prediction-setup},
they give the following exact closure, proved in
Appendix~\ref{app:early-closure}.

\begin{lemma}[Early one-difference closure]
\label{lem:early-closure}
Every depth-$p$ early expression, $p<\ell$, has an exact
decomposition
\[
 \diff\bm U=\bm V+\int c(\nu)\bm U_\nu\,d\nu,
\]
where $\bm V$ consists of the boundary impulse and
error-direction terms, every $\bm U_\nu$ has early depth $p+1$,
and the coefficients are time-independent. Moreover,
\[
 \sourcenormT{\bm V}\le1+C_E(p+1)gE,
 \qquad
 \int|c(\nu)|\,d\nu\le C_Eg\sqrt\ell\,(p+1).
\]
\end{lemma}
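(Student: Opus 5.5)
We apply the product rule to one difference of a depth-$p$ early expression
\[
 \bm U^{(t)}=\one_{\{t>p\}}\E_t\Bigl[\bm G(\Omega)\prod_{\nu=1}^h Z_\nu^{(r_\nu)}(\Omega,\bm y_\nu^{(t)})\Bigr],
 \qquad \sum_\nu (r_\nu+1)\le p,
\]
and sort the resulting terms into stopped and continuing families. For $t\le p$ both $\bm U^{(t)}$ and $\bm U^{(t-1)}$ vanish. At $t=p+1$ the difference equals $\bm U^{(p+1)}$ itself, a single boundary impulse. Its source norm is at most $\E\bigl[\sourcenorm{\bm G}\prod|Z_\nu|\bigr]\le 1$ by $\sourcenorm{\bm G}\le1$, H\"older, and the moment bound $\E\prod|Z_\nu|^2\le1$ (valid since $h\le p<\ell$). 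This boundary term accounts for the constant $1$ in the bound on $\sourcenormT{\bm V}$.

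For $t\ge p+2$ we interpolate every score argument by its fixed convex-combination rule between times $t-1$ and $t$. By the fundamental theorem of calculus, $\diff\bm U^{(t)}$ is an integral of directional derivatives along $\diff\bm z=\eta\bm r-\eta\diff\bm\xi$. The product rule produces at most $(\#\text{sampling distributions})+h$ derivative slots. Each sample tuple was created by one previous distribution derivative, so there are at most $p+1$ sampling distributions, and at most $p$ atoms. Each slot splits into an error direction and a gain direction.

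\textbf{Error-direction terms.} These go into $\bm V$. For a derivative of a sampling distribution, the score has $L^2$ norm at most $2(\sum_j\eta_j^2\sourcenorm{\diff\bm\xi_j^{(t)}}^2)^{1/2}$ by \eqref{eq:early-tree-score}. Multiplying by the remaining integrand (controlled in $L^2$ via the $L^{2\ell}$ atom bounds and H\"older, with one slot spent on the score) and summing over time gives a cumulative contribution of at most $C g E$, using $\sum_j\eta_j^2\sourcenormT{\diff\bm\xi_j}^2\le4g^2E^2$. For a derivative of an atom $Z^{(r)}$, \cref{lem:normalized-derivatives-full} bounds the error-direction derivative of the normalized local factor in the same player-weighted norm. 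This gives the same $CgE$ contribution per slot. Since there are at most $p+1$ sampling-distribution slots and $p$ atom slots, the total is at most $C_E(p+1)gE$.

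\textbf{Gain-direction terms.} These continue. A gain derivative of a sampling distribution is rewritten, exactly as in the first difference, by adjoining an independent copy and a gain sample. This yields a new atom $Z^{(0)}$ with coefficient $c_0=2Bg\sqrt\ell$. A gain derivative of $Z^{(r)}$ is, by \cref{lem:normalized-derivatives-full}, $B g\sqrt\ell$ (up to a constant) times a normalized order-$(r+1)$ atom. In both cases the degree $\sum(r_\nu+1)$ increases by exactly one, so the new expression has depth $p+1$. The coefficients depend only on the branching choice and the interpolation parameters, so they are independent of $t$. Summing over the at most $2p+1$ slots gives $\int|c|\le C_Eg\sqrt\ell(p+1)$.

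\textbf{Main obstacle.} The hardest step is verifying that each new branch is again an admissible early expression as defined in Appendix~\ref{app:early-closure}. Three properties must be checked. First, sample independence: the new copies must be independent of all old samples, and the old observable $\bm G$ must be viewed as a function of the enlarged tuple. Second, the convex-combination rules for the new score arguments must be preserved. Third, each new atom must still be a sum of independent, centered player contributions whose squared weights $\eta_j^2/g^2$ sum to at most one, so that \cref{lem:centered-moment} continues to give $\norm{Z}_{L^{2\ell}}\le1$. I would establish this by induction on $p$, checking that centering (the $I_e(\bm T)-I_e(\bm T')$ device) and the one-outgoing-edge sparsity survive both the new-atom operation and the order-raising operation.
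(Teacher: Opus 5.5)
Your proposal is correct and follows the paper's proof essentially step for step. It has the same ingredients: the gate impulse at $t=p+1$; the product rule over slot densities and atoms; error directions stopped via \eqref{eq:early-tree-score} and the ordinary branch of \cref{lem:normalized-derivatives-full}; and gain directions continued either by a ghost plus a fresh gain slot (coefficient $2Bg\sqrt\ell$) or by raising an atom's order.

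Two bookkeeping details are wrong as stated, although neither changes the conclusion. First, a depth-$p$ expression can carry up to $1+2p$ slot densities, not $p+1$: each distribution continuation adds a ghost and a gain slot, and each order raise adds a fresh gain slot $d_{r+1}$ (which is also how $x^{(s)}_{i,a}$ and $\bm v^{(s)}_i$ are represented). Second, the order-raising coefficient is exactly $B(r+1)\eta_{\max}$ rather than $O(Bg\sqrt\ell)$. The mass of that branch is therefore $Bg\sum_\nu(r_\nu+1)\le Bgp\le Bg\sqrt\ell\,(p+1)$, using the degree constraint and $p<\ell$.
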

Here $\nu$ collects the interpolation parameters and finite
branching choices from the two derivative families above.
The factor $p+1$ accounts for the number of distributions and the total
order of existing factors. The extracted coefficients depend
on fixed rates and representation choices, not on time, so
they can be taken outside the time filters.

\paragraph{Filtering each exact decomposition.}
For a filter $\mathrm K=\sum_{s\ge0}c_s\shift^s$, write
$\kernorm{\mathrm K}=\sum_s|c_s|$. Backward shifts contract
the finite-prefix norm, so
\[
 \sourcenormT{\mathrm K[\bm W]}
 \le\kernorm{\mathrm K}\sourcenormT{\bm W}.
\]
The filter identity and \cref{lem:early-closure} give
\[
 \earlyfilter_p[\bm U]
 =\earlyfilter_{p+1}[\bm V]
  +\int c(\nu)\earlyfilter_{p+1}[\bm U_\nu]\,d\nu.
\]
We bound the first term using
$\kernorm{\earlyfilter_{p+1}}<48k^p$, proved in
\cref{lem:filter-full}, and take another difference of each
continuing expression in the second term.

Let $a_p$ be the total absolute coefficient mass of the
continuing branches at depth $p$, starting with $a_0=1$.
Each branch's coefficient is the product of its one-step
coefficients. The closure gives
$a_{p+1}\le C_Eg\sqrt\ell\,(p+1)a_p$.
Including the filter cost, for $p<\ell$,
\[
 a_{p+1}k^{p+1}
 \le C_Egk\sqrt\ell\,(p+1)\,a_pk^p
 \le\kappa\,a_pk^p,
\]
because $p+1\le k$, $\sqrt\ell\le\sqrt k$, and
$\kappa=C_Egk^{5/2}$. Hence
\begin{equation}\label{eq:early-mass}
 a_pk^p\le\kappa^p
 \qquad(0\le p\le\ell).
\end{equation}
For $\kappa\le1/16$, the total early stopped contribution
is at most
\[
 48\sum_{p=0}^{\ell-1}
   \kappa^p[1+C_E(p+1)gE]
 \le C(1+gE).
\]
At depth $\ell$, define
\[
 \latefilter_0:=\diff^3\filt_N^N,
 \qquad
 \earlyfilter_\ell=\res_\ell^\ell\latefilter_0.
\]
The remaining early resolvent has kernel mass $k^\ell$,
already included in $a_\ell k^\ell\le\kappa^\ell$.
Thus, for the continuing depth-$\ell$ expressions,
\begin{equation}\label{eq:early-stage-output}
 \sourcenormT{\errfilt[\bm r_i]}
 \le C(1+gE)
   +\kappa^\ell\sup_\nu
      \sourcenormT{\latefilter_0[\bm U_\nu]}.
\end{equation}
The early stage has bounded all stopped terms without using
$Y$. Only the late-filtered expressions remain, and their total
absolute weight, including the early filter cost, is at most
$\kappa^\ell$.

\subsection{Late stage: charge repeated transition indices to curvature}
\label{sec:late-sketch}

It remains to bound $\latefilter_0[\bm U]$ for each expression
passed from the early stage. We now allow stopped terms to use
$Y$. In particular, gain derivatives of existing factors stop
by \cref{lem:normalized-derivatives-full}, rather than raising
their orders. Only a residual part of a sampling-distribution derivative
can continue. The main idea is: a first occurrence of a
transition index stores a factor, and a repeated occurrence supplies
the second factor needed for a curvature bound.

\paragraph{Separate the curvature-controlled score from the residual.}
For $e=(i,a,b)$, \cref{lem:scalar-bounds} supplies a split
\[
 \frac{h_e}{q_e}=\gamma_e+\delta\alpha_e
\]
with
\begin{equation}\label{eq:residual-bounds}
 0<\alpha_e\le1,
 \qquad
 \gamma_e^2\le C_\gamma h_e/\beta_i,
 \qquad
 \delta\alpha_e^2\le C_\alpha\sqrt{h_e/\beta_i}.
\end{equation}
The first curvature bound controls the entire centered
$\gamma$-score, by \cref{lem:tree-score-full}:
\[
 \E\left|
   \sum_e(I_e-\E I_e)\gamma_e\widetilde w_e
 \right|^2
 \le\sum_{i,a}\omega_i
      \quadform_{\bm y_{i,a}}(\bm w_{i,a}).
\]
After the score comparison described below, this term stops
using $Y$. Only the $\delta\alpha$ residual is split by
transition index. Its useful bound requires two factors
of $\alpha_e$, whereas one distribution derivative supplies only one.

For a fresh index $e=(i,a,b)$ and a payoff vector
$\bm v\in[0,1]^{m_i}$, retain the bounded centered factor
\[
 \chi^\alpha_{e,\bm v}(\bm y)
 =\alpha_e(\bm y)
   \bigl(v_b-\ip{\bm\sigma_{i,a}(\bm y)}{\bm v}\bigr),
 \qquad
 |\chi^\alpha_{e,\bm v}(\bm y)|
 \le\alpha_e(\bm y)\le1.
\]
A depth-$j$ late expression keeps the early atoms and adds
$j$ such factors, whose recorded indices are distinct wherever
the observable is nonzero. The gate becomes
$\one_{\{t>\ell+j\}}$, and the early atom orders do not increase.
The extra factors preserve the integrand's $L^2$ bound.
The full admissible class, including sample-dependent index
selection, is defined in Appendix~\ref{app:late-prediction}.
A depth-$\ell$ early expression is a depth-zero late expression.

\paragraph{A repeated index supplies the second factor and stops.}
Suppose a residual distribution derivative selects an index $e=(i,a,b)$
that is already recorded. Its old $\chi^\alpha$ factor supplies
a second $\alpha_e$, and its centered payoff has magnitude at most
one. At a common trial score $\bm y$, suppress the represented
gain time and write $\widetilde r_e$ for the row-centered gain
from \eqref{eq:row-geometry}. Then
\[
 \begin{aligned}
  \delta\eta_i\alpha_e^2|\widetilde r_e|
  &\le C\beta_i^{-1/2}\eta_i\sqrt{h_e}
           |\widetilde r_e|\\
  &\le C\beta_i^{-1/2}\eta_i
       \sqrt{\quadform_{\bm y_{i,a}}(\bm r_{i,a})}.
 \end{aligned}
\]
The second inequality retains one summand of the row-curvature
quadratic form. \Cref{lem:late-window} compares the different
trial scores and transfers this bound to the represented gain
time, at cost $C2^j$. Since $\omega_i\ge\beta_i^{-1}$, each
repetition contribution has cumulative cost at most $C2^jY$.
The bound is uniform in the selected index and is applied
before introducing a new gain sample. Removing the old
factor used in the bound leaves an $L^1$-bounded product.

The same score comparison controls the $\gamma$-score and
gain derivatives of existing atoms and residual factors.
Thus these also stop using $Y$, while error directions
continue to stop using $E$.

\paragraph{A fresh index is recorded and continues.}
For a fresh index, introduce an independent copy and a separate
gain sample as in Section~\ref{sec:prediction-setup}, and split
by player and sign. The gain root selects one source row $a$, and
the indicated inclusion tree has at most one outgoing edge
from that row:
\[
 \sum_b I_{i,a,b}(\bm T_i)\le1.
\]
Hence the sum over destinations selects at most one edge
rather than costing a factor $m_i$. Append its $\chi^\alpha$
factor and record its index. Freshness is imposed by a bounded
indicator in the observable, not by conditioning the tree distribution.
Each such continuation has coefficient $\delta\eta_{\max}$
in absolute value. Appendix~\ref{app:late-prediction} gives
the exact construction and proves the following closure.

\begin{lemma}[Late one-difference closure]
\label{lem:late-closure}
Every depth-$j$ late expression, $0\le j\le N$, has an exact
decomposition
\[
 \diff\bm U=\bm V+\int c(\nu)\bm U_\nu\,d\nu,
\]
with time-independent coefficients. The stopped sequence
$\bm V$ contains the boundary impulse, error directions,
the $\gamma$-score, derivatives of existing factors, and
repeated transition indices. For a universal $C_L$,
\[
 \sourcenormT{\bm V}\le C_Lk^2 8^j(1+gE+Y),
 \qquad
 \int|c(\nu)|\,d\nu
 \le2n\delta\eta_{\max}(1+2\ell+2j).
\]
If $j<N$, every continuing expression has late depth $j+1$.
At $j=N$, the continuing term is absent.
\end{lemma}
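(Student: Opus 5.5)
We take one backward difference of a depth-$j$ late expression
\[
 \bm U^{(t)}=\one_{\{t>\ell+j\}}\E_t\Bigl[\bm G(\Omega)\prod_\nu Z_\nu^{(r_\nu)}\prod_{s=1}^j\chi^\alpha_{e_s,\bm v_s}\Bigr].
\]
As in Section~\ref{sec:prediction-setup}, interpolate every score argument between $\bm z^{(t-1)}$ and $\bm z^{(t)}$ and apply the product rule. The gate $\one_{\{t>\ell+j\}}$ produces a single boundary impulse at $t=\ell+j+1$. That impulse has source norm at most one, since the integrand is $L^2$-bounded and $\sourcenorm{\bm G}\le1$. For $t>\ell+j+1$ we use $\diff\bm z=\eta\bm r-\eta\diff\bm\xi$ and split each derivative into an error direction and a gain direction. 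All error directions go to $\bm V$. For a sampling distribution we use \eqref{eq:early-tree-score}; for existing atoms and $\chi^\alpha$ factors we use \cref{lem:normalized-derivatives-full}. With $\sum_j\eta_j^2\sourcenormT{\diff\bm\xi_j}^2\le4g^2E^2$ and H\"older against the remaining factors (each with $L^{2\ell}$ norm at most one), these contribute $C(\text{number of differentiated objects})\,gE$. The number of differentiated objects is at most $1+2\ell+2j$, so this is $\le Ck\,gE$ once we absorb $j\le N$ through the $8^j$ factor.

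Next, the gain directions. For gain derivatives of existing atoms $Z^{(r)}$ and of residual factors $\chi^\alpha$, we do not raise orders. Instead, \cref{lem:normalized-derivatives-full} bounds each of them by $\eta_i$ times the square root of the row curvature $\quadform_{\bm y_{i,a}}(\bm r_{i,a})$ at the trial score. \Cref{lem:late-window} then transfers that bound to the represented gain time at cost $C2^j$. Summing in time with Cauchy--Schwarz and $\omega_i\ge\beta_i^{-1}$ gives $C2^jY$ per object. For the gain derivative of a sampling distribution we use the split $h_e/q_e=\gamma_e+\delta\alpha_e$ from \eqref{eq:residual-bounds}. The $\gamma$-part is handled by the curvature-weighted bound of \cref{lem:tree-score-full} and stops at cost $C2^jY$. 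The $\delta\alpha$ residual is expanded by player, source row and sign, and then by destination index $e$. We insert the bounded indicator of whether $e$ already appears among $e_1,\dots,e_j$.

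On the repeated branch, the old factor $\chi^\alpha_{e}$ supplies a second $\alpha_e$. The displayed estimate $\delta\eta_i\alpha_e^2|\widetilde r_e|\le C\beta_i^{-1/2}\eta_i\sqrt{\quadform(\bm r_{i,a})}$ then stops this branch at cost $C2^jY$, after removing the used factor and bounding the rest in $L^1$. On the fresh branch, we proceed as in Section~\ref{sec:prediction-setup}. We introduce an independent copy $\bm T'$ to center $I_e$ and a gain sample $\bm S$ to represent $r_e$, append $\chi^\alpha_{e,\bm v}$, and record $e$. Because the gain root picks one source row and a tree has at most one outgoing edge from it, the sum over $b$ costs nothing. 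The coefficient of this branch is then $\delta\eta_i\le\delta\eta_{\max}$ per player and sign. The factor $1+2\ell+2j$ counts the sampling distributions that can be differentiated, so the total mass is at most $2n\delta\eta_{\max}(1+2\ell+2j)$. When $j=N$, all $N$ indices are already recorded, so every residual index is a repeat and no continuing term remains.

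Collecting terms, $\bm V$ consists of $O(\ell+j)$ stopped families, each bounded by $C2^j(1+gE+Y)$. Using $\ell+j\le k+N$, the crude estimate $(\ell+j)2^j\le k^2 8^j$ (valid since $j\le 8^j$ and $\ell\le k$) gives the constant $C_Lk^2 8^j$. \textbf{The main obstacle} is the repeated-index step under sample-dependent index selection. The recorded factor $\chi^\alpha_e$ was evaluated at an earlier trial score and under a different sample. Pairing it with the new $\alpha_e$ and $\widetilde r_e$ needs the score comparison of \cref{lem:late-window}, and that comparison has to be uniform over the selected index. I would apply it before drawing the new gain sample, exactly as indicated in the text, so that the randomness of the index does not interact with the curvature bound.
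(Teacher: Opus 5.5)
Your proposal follows the paper's proof essentially step for step: the gate impulse; error directions stopped by the tree-score bound and the ordinary derivative bounds; the $\gamma$-score and the gain derivatives of existing atoms and $\chi^\alpha$ factors stopped by curvature together with \cref{lem:late-window}; repeated indices paid by the old $\alpha_e$ factor before the new gain sample is drawn; and fresh indices continued with a ghost copy and a gain sample, using the root selection and the one-outgoing-edge sparsity. Two counts are slightly off, but the $k^2 8^j$ slack absorbs both via $\ell+j+1\le k(j+1)$ and $(j+1)^2\le 4^j$: there are up to $1+3\ell+3j$ differentiated objects (atoms and residual factors as well as distributions), and up to $j$ indices can repeat per distribution, so that branch costs $Cj2^jY$ rather than $C2^jY$.
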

The continuation bound counts at most $1+2\ell+2j$ sampling
distributions, $n$ player choices, and two signs. Every continuation
adds one fresh index. Since there are only $N$ indices,
the expansion leaves only stopped terms at depths
$0,\ldots,N$, including the boundary contribution at depth $N$.

\paragraph{The finite expansion has bounded total cost.}
To apply the remaining differences, define
\[
 \latefilter_j:=\res_N^N\diff^{N+3-j}
 \quad(0\le j\le N+1).
\]
This agrees with $\latefilter_0$ above and gives
$\latefilter_j=\latefilter_{j+1}\diff$.
By \cref{lem:filter-full},
$\kernorm{\latefilter_{j+1}}<24(N+1)^j$.
Set
\[
 a_{\mathrm{late}}
 :=2n\delta\eta_{\max}(1+2\ell+2N).
\]
Filtering and summing the stopped contributions yields
\begin{equation}\label{eq:late-stage-output}
 \begin{aligned}
  \sourcenormT{\latefilter_0[\bm U]}
  &\le24C_Lk^2(1+gE+Y)
       \sum_{j=0}^N[8(N+1)a_{\mathrm{late}}]^j\\
  &\le25C_Lk^2(1+gE+Y).
 \end{aligned}
\end{equation}
The public choice of $\delta$ makes
$8(N+1)a_{\mathrm{late}}<1/25$, as verified in
Appendix~\ref{app:filters}. Thus $\delta$ offsets both
the $8^j$ stopping cost and the late filter mass.
The sum is finite, includes every boundary impulse, and has
no terminal remainder. Because the filters contain only backward
shifts, evaluating them through round \(T\) introduces no quantities
with rounds after \(T\). Hence the same
\(E\) and \(Y\) control every term.

\subsection{Combining the two stages proves the prediction bound}
\label{sec:filter-assembly}

Combining \eqref{eq:early-stage-output} and
\eqref{eq:late-stage-output} with the effective-error identity
gives
\[
 \begin{aligned}
  E
  &\le1+\max_i\sourcenormT{\errfilt[\bm r_i]}\\
  &\le1+C(1+gE)
       +25C_Lk^2\kappa^\ell(1+gE+Y)\\
  &\le C_0+C_1gE+C_Pk^2\kappa^\ell Y,
 \end{aligned}
\]
where the last step uses that $k^2\kappa^\ell$ is uniformly
bounded for $k=\ell+1$ and $\kappa\le1/16$. This proves
\cref{lem:prediction}. The absorption argument in
Section~\ref{sec:two-estimates}, together with \cref{lem:energy},
then completes the proof of \cref{thm:base}.

\section{Adversarial robustness via a common-prefix wrapper}\label{sec:robustness}

In this section, we present a black-box wrapper to turn self-play swap regret bounds into adversarial swap regret bounds. It applies whenever a simultaneous run of the base dynamics admits an anytime self-play regret bound that can be absorbed into the adversarial square-root envelope.
For a base run beginning after a common prefix of length $W$, we write
\begin{equation}\label{eq:switching-monitor}
 R_{i,\mathrm{base}}(s)
 =\sum_{a=1}^{m_i}\max_{b\in[m_i]}\sum_{u=1}^s r_{i,a,b}^{(W+u)},
 \qquad s\ge0.
\end{equation}

\begin{algorithm}[H]
\caption{Common-prefix robust wrapper for player $i$}\label{alg:robust}
\small
\textbf{Input.} Base dynamics, a threshold $B_i\ge0$, and a common prefix length $W\ge1$.

\textbf{Anytime fallback $\mathsf{FB}_i$.} Use epochs of lengths $M=1,2,4,\ldots$. At each epoch start, initialize $q_{a,b}=1/m_i$ and set $\eta_{\mathrm{fb}}=\sqrt{A_i/M}$. In each round, play the stationary distribution $\bm x$ of $\bm Q$, observe $\bm v$, set $g_{a,b}=x_av_b$, and update
\[
 q_{a,b}\leftarrow
 \frac{q_{a,b}(1+\eta_{\mathrm{fb}}g_{a,b})}
 {1+\eta_{\mathrm{fb}}\ip{\bm q_a}{\bm g_a}}.
\]
After $M$ rounds, double $M$ and reset the rows uniformly. This fallback may be stopped after any $s$ rounds and satisfies $\Reg_{i,\mathsf{FB}}(s)\le4\sqrt{A_i s}$.

\textbf{Switching rule.}
\begin{algorithmic}[1]
 \State Run a new copy of $\mathsf{FB}_i$ for rounds $1,\ldots,W$.
 \State Discard it and initialize the base dynamics from its prescribed initial state.
 \State After each base round $s$, compute \eqref{eq:switching-monitor}. At the first $s$ with $R_{i,\mathrm{base}}(s)>B_i$, discard the base state and run a new copy of $\mathsf{FB}_i$ forever from the next round.
\end{algorithmic}
\end{algorithm}

\begin{lemma}[Common-prefix wrapper]\label{lem:common-prefix-wrapper}
Suppose that, in any fixed game, a simultaneous run of the base dynamics satisfies the anytime self-play bound $R_{i,\mathrm{base}}(s)\le B_i$ for every player $i$ and $s\ge0$. Then \cref{alg:robust} never makes a threshold-triggered switch in self-play. Against arbitrary payoff-vector sequences, for $T>W$,
\begin{equation}\label{eq:generic-wrapper-bound}
 \Reg_i(T)\le4\sqrt{A_iW}+(B_i+1)+4\sqrt{A_i(T-W)}.
\end{equation}
Consequently, if $B_i+1\le\gamma_i\sqrt{A_iW}$ for some $\gamma_i\ge0$, then, for every $T$,
\begin{equation}\label{eq:generic-wrapper-envelope}
 \Reg_i(T)\le\sqrt{(4+\gamma_i)^2+16}\,\sqrt{A_iT}.
\end{equation}
\end{lemma}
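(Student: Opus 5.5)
The plan is to decompose swap regret over the three phases: the prefix $[1,W]$, the base phase, and possibly a fallback phase. The key fact is subadditivity: since $\Reg_i(T)=\sum_a\max_b\sum_t r_{i,a,b}^{(t)}$ and a max of a sum is at most the sum of maxes, splitting $[1,T]$ into consecutive intervals bounds $\Reg_i(T)$ by the sum of the per-interval swap regrets, each of the form $\sum_a\max_b\sum_{t\in I}r_{i,a,b}^{(t)}$. Each such term is nonnegative, because $b=a$ contributes zero.

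\emph{Self-play claim.} In self-play, every player runs the same copy of $\mathsf{FB}$ on the common prefix, then all reinitialize the base dynamics at round $W+1$ simultaneously. The trajectory from $W+1$ onward is therefore exactly a simultaneous run of the base dynamics in the same game. By hypothesis, $R_{i,\mathrm{base}}(s)\le B_i$ for all $s$, so the trigger $R_{i,\mathrm{base}}(s)>B_i$ never fires. One subtlety must be checked: no player switches, so no player's behavior deviates from the base dynamics. This follows by induction on $s$, since the first switch by any player would require a violation of the bound on a run that is still a genuine simultaneous base run.

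\emph{Adversarial bound.} Take arbitrary payoffs and $T>W$.
\begin{itemize}
\item The prefix contributes at most $4\sqrt{A_iW}$, by the anytime fallback guarantee.
\item If no switch occurs by $T$, the base interval contributes $R_{i,\mathrm{base}}(T-W)\le B_i$.
\item If the switch occurs at base round $s^\ast$, then $R_{i,\mathrm{base}}(s^\ast-1)\le B_i$. Adding one round changes each row's cumulative gain by $r_{i,a,b}^{(t)}$, and $\sum_a\max_b|r_{i,a,b}^{(t)}|\le\sourcenorm{\bm r_i^{(t)}}\le 1$. Hence $R_{i,\mathrm{base}}(s^\ast)\le B_i+1$.
\item The fresh fallback then runs $T-W-s^\ast\le T-W$ rounds and contributes at most $4\sqrt{A_i(T-W)}$, using monotonicity of the square root.
\end{itemize}
Summing these gives \eqref{eq:generic-wrapper-bound}. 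The fallback guarantee $4\sqrt{A_is}$ is stated in the algorithm, so I take it as given. If it needed proof, it would follow from the Blum--Mansour reduction with multiplicative weights per row plus doubling.

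\emph{Envelope.} For $T\le W$, only the prefix fallback runs, giving $4\sqrt{A_iT}$, which is within the envelope. For $T>W$, use $B_i+1\le\gamma_i\sqrt{A_iW}$ to get
\[
\Reg_i(T)\le(4+\gamma_i)\sqrt{A_iW}+4\sqrt{A_i(T-W)}.
\]
By Cauchy--Schwarz, $\alpha\sqrt x+\beta\sqrt y\le\sqrt{\alpha^2+\beta^2}\,\sqrt{x+y}$. With $\alpha=4+\gamma_i$, $\beta=4$, $x=W$, $y=T-W$, this gives $\sqrt{(4+\gamma_i)^2+16}\,\sqrt{A_iT}$.

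The main obstacle is the self-play claim. One has to argue carefully that the post-prefix joint play coincides with a fresh simultaneous base run in the same fixed game, so the hypothesis applies, and that determinism and common timing rule out any player switching first. The regret accounting itself is routine.
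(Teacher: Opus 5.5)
Your proposal is correct and follows essentially the same route as the paper: subadditivity of swap regret over the prefix, base, and terminal fallback intervals; the one-round increment bound $R_{i,\mathrm{base}}(s)\le R_{i,\mathrm{base}}(s-1)+1$ at the crossing; a first-crossing-among-all-players argument for the self-play claim; and Cauchy--Schwarz for the envelope, with $T\le W$ handled directly by the fallback guarantee. One minor wording fix: each player runs its own fresh copy of $\mathsf{FB}$ on the prefix, not ``the same copy.''
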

The proof is in Appendix~\ref{app:switching-rule}. The same proof applies to a nondecreasing anytime self-play bound $R_{i,\mathrm{base}}(s)\le b_i(s)$: use the threshold $b_i(s)$ whenever $b_i(s)+1\le\gamma_i\sqrt{A_i(W+s)}$ for all $s\ge1$.

For our base dynamics and default rates, set
\begin{equation}\label{eq:switching-scales}
 B_i=\frac4c k^{5/2}\sqrt{A_iA},
 \qquad
 W=\left\lceil\frac{9k^5A}{c^2}\right\rceil.
\end{equation}
By \eqref{eq:generic-bound}, the base dynamics satisfy the anytime self-play bound
$R_{i,\mathrm{base}}(s)\le\frac3c k^{5/2}\sqrt{A_iA}=\frac34B_i$.
Appendix~\ref{app:robustness} verifies
\begin{equation}\label{eq:switching-specialization}
 B_i+1\le\frac53\sqrt{A_iW},
 \qquad
 4\sqrt{A_iW}+\frac34B_i<4B_i.
\end{equation}
Thus \cref{lem:common-prefix-wrapper} with $\gamma_i=5/3$ gives $\Reg_i(T)<7\sqrt{A_iT}$ against arbitrary payoff sequences. In self-play no switch occurs, and \eqref{eq:switching-specialization} gives
$\Reg_i(T)<4B_i=(16/c)k^{5/2}\sqrt{A_iA}$.
This proves \cref{thm:robust}.
\FloatBarrier

\section*{Acknowledgments}
GPT-5.6-Sol and GPT-6-Astra were used to assist with the development of this paper.

\bibliography{paper}

\appendix
\section{Supplementary background, proof roadmap, and parameters}\label{app:context}
The main text gives the proof mechanism. The table below locates the complete verification of each component.
\begin{center}
\scriptsize
\setlength{\tabcolsep}{4pt}
\begin{tabularx}{\textwidth}{@{}>{\raggedright\arraybackslash}p{.30\textwidth}>{\raggedright\arraybackslash}p{.23\textwidth}>{\raggedright\arraybackslash}X@{}}
\toprule
Component & Complete verification & Output used next\\
\midrule
Public parameters and absorption & Appendices~\ref{app:parameters}--\ref{app:constant-selection} & Conditions used in Section~\ref{sec:two-estimates}\\
Row normalization and curvature comparison & Appendix~\ref{app:scalar-potential} & Bounds used in Sections~\ref{sec:energy-idea} and \ref{sec:late-sketch}\\
Uniform normalized derivatives & Appendix~\ref{app:normalized-derivatives} & Derivative bounds used in both closures\\
Stationary-tree representation and moments & Appendix~\ref{app:trees-moments} & Inputs to the early and late closures\\
Early player-centered closure & Appendix~\ref{app:early-closure} & \Cref{lem:early-closure} and the $\kappa^\ell$ handoff\\
Late transition-index closure & Appendix~\ref{app:late-prediction} & \Cref{lem:late-closure}\\
Two-scale filters and assembly & Appendix~\ref{app:filters} & \Cref{lem:prediction}\\
Normalization solver and implementation & Appendix~\ref{app:solver} & Effective-score interface and complexity\\
Fallback and common-prefix wrapper & Appendix~\ref{app:fallback} & \Cref{lem:common-prefix-wrapper,thm:robust}\\
\bottomrule
\end{tabularx}
\end{center}

\subsection{Correlated-equilibrium consequence}\label{app:ce}
Let
\[
 \pi_T=\frac1T\sum_{t=1}^T\bigotimes_i\bm x_i^{(t)}
\]
be the time-averaged product distribution in a fixed game. For every player $i$ and deviation map $\varphi:[m_i]\to[m_i]$, multilinearity gives
\[
 \E_{\bm S\sim\pi_T}\!\left[
 u_i(\varphi(S_i),\bm S_{-i})-u_i(\bm S)
 \right]
 =\frac1T\sum_{t=1}^T\sum_a x_{i,a}^{(t)}
 \left(v_{i,\varphi(a)}^{(t)}-v_{i,a}^{(t)}\right)
 \le\frac{\Reg_i(T)}T.
\]
Let $B_i=(4/c)k^{5/2}\sqrt{A_iA}$ as in \cref{sec:robustness}. Then $\pi_T$ is a $((\max_i B_i)/T)$-approximate correlated equilibrium under the base dynamics and a $(4\max_i B_i/T)$-approximate correlated equilibrium when all players use the robust dynamics in self-play. Note that $\pi_T$ is a time-averaged product distribution, not a last iterate or the product of the players' average strategies \citep[Theorem~3]{blum2007}.

For the single-action convention
\[
 g_i(a,b,\pi)=\E_{\bm S\sim\pi}\left[
 \one_{\{S_i=a\}}\bigl(u_i(b,\bm S_{-i})-u_i(\bm S)\bigr)\right],
\]
one has $0\le\max_{a,b}g_i(a,b,\pi)\le\sum_a\max_bg_i(a,b,\pi)\le m_i\max_{a,b}g_i(a,b,\pi)$. Thus positive tolerances may differ by a factor $m_i$, although the zero-tolerance conditions agree.

\subsection{Public parameters}\label{app:parameters}
In addition to \eqref{eq:public-scales}, set
\begin{equation*}
 \delta=\frac1{n(N+\ell+2)^2},
 \qquad
 J=\lceil\log(1/\delta)\rceil,
 \qquad
 \beta_i=\frac{\log m_i}{2^{10}m_iJ}.
\end{equation*}
Let $d_i$ be the least power of two at least $\lceil\log(1/\beta_i)\rceil$. For the default allocation, use \eqref{eq:default-rates}. With the proof-only constant $C_\omega$ fixed in Appendix~\ref{app:constant-selection}, define
\begin{equation*}
 \omega_i=C_\omega\frac{m_i-1}{\beta_i},
 \qquad
 V=\sum_i\omega_iA_i.
\end{equation*}
Then $\eta_{\max}\le g$, $m_i\beta_i\le2^{-10}$, $0<\delta<1/2$, and $\log m_i\le J$. All parameters depend only on public action counts and universal constants.

\subsection{Noncircular selection of universal constants}\label{app:constant-selection}
First fix integer upper bounds at least one for all universal constants in Appendices~\ref{app:scalar-potential}--\ref{app:filters}, and set $C_\omega=\max\{1,C_\gamma\}$. Fix
\begin{equation*}
 c=\min\left\{\frac1{16C_E},\frac1{4(C_1+1)},\frac1{2(C_0+1)}\right\}.
\end{equation*}
Then $\kappa=C_Ec\le1/16$. To choose $\ell_0$, write
\[
 v=\lceil\log(2+n+N)\rceil,
 \qquad k=\ell_0+v+1.
\]
Every nontrivial instance has $N\ge4$, so $v\ge3$ and $k\ge5$. Moreover,
\begin{equation*}
 N+\ell+2\le2^v(k+2),
 \qquad
 J\le1+3v+2\log(k+2)\le5k,
 \qquad
 V=2^{10}C_\omega J\sum_im_i^2(m_i-1)
 \le2^{13}C_\omega kN^{3/2}.
\end{equation*}
Choose $\ell_0$ large enough that
\begin{equation*}
 C_P\sqrt{2^{14}C_\omega}\,(\ell_0+1)^{5/2}16^{-\ell_0}\le1.
\end{equation*}
Since $k\le(\ell_0+1)(v+1)$ and $N\le2^v$,
\begin{align*}
 \tau_\ell\sqrt{2V}
 &\le C_P\sqrt{2^{14}C_\omega}\,(\ell_0+1)^{5/2}16^{-\ell_0}
 (v+1)^{5/2}2^{-13v/4}\le1.
\end{align*}
The final dimension factor starts at one and has consecutive ratio at most $2^{5/2-13/4}<1$, proving \eqref{eq:parameter-absorption} without changing any previously fixed lemma constant. Section~\ref{sec:two-estimates} then gives $E\le E_0=2(C_0+1)$ and \eqref{eq:generic-bound}. Because $cE_0\le1$, the default rates satisfy $\eta_iE_0\le1\le\sqrt{A_i}$. This proves \cref{thm:base} with the slack used by the robust wrapper.

\subsection{Optional unequal-action rate allocation}\label{app:heterogeneous-comparison}
Set
\begin{equation}\label{eq:heterogeneous-rates}
 U=\sum_jA_j^2,
 \qquad
 \eta_i=\frac{gA_i}{\sqrt U}.
\end{equation}
These rates satisfy \eqref{eq:rate-budget} and $\eta_i\le g$, so \eqref{eq:generic-bound} gives
\[
 \max_i\Reg_i(T)\le \frac4c k^{5/2}\sqrt U.
\]
The allocation minimizes $\max_iA_i/\eta_i$ subject to \eqref{eq:rate-budget}: if this maximum is $M$, then $\eta_i\ge A_i/M$ and hence $g^2\ge U/M^2$. Relative to the default worst-player bound, the improvement factor is
\[
 \sqrt{\frac{A_{\max}A}{U}}
 \le\sqrt{\frac{\sqrt n+1}{2}},
 \qquad A_{\max}=\max_iA_i.
\]
The inequality follows from Cauchy--Schwarz and is tight in order: $n-1$ binary players and one player with $m_i\log m_i=\Theta(\sqrt n)$ give a factor $\Theta(n^{1/4})$. Equal action counts recover the default rates. For the common-prefix wrapper, the relevant objective is $\max_iA_i/\eta_i^2$. Its minimum under \eqref{eq:rate-budget} is $A/g^2$, attained by the default square-root rates. Thus the optional allocation improves the base worst-player bound but not the present wrapper's worst-player prefix order.

\subsection{Technical comparison with related work}\label{app:related}

This appendix expands Section~\ref{sec:related-work}, focusing on the additional bounds needed to combine BM, tree representations, optimistic potential bounds, higher-order prediction, and finite-label recursions to obtain constant swap regret with the stated dependence on $n$ and $m$. Table~\ref{tab:comparison} compares selected bounds under full-information feedback, not every algorithm in the cited papers or methods with equal computational budgets.

\paragraph{Prediction, movement, and adaptive pacing.}

The potential argument follows the work on predictable sequences and regret bounded by variation in utilities (RVU) \citep{rakhlin2013,syrgkanis2015fast}. It balances prediction error against a negative movement or curvature term supplied by the regularizer. \citet{farina2022convexgames} obtain logarithmic external regret in convex games by lifting the action space and using self-concordant regularization. \citet{soleymani2025faster} instead control the learning rate of OMWU through Cautious Optimism. Their broader treatment gives a meta-algorithm that adjusts the pace of an FTRL learner \citep{soleymani2025cautious}. These external-regret methods do not directly establish a prediction bound for BM row gains, which depend on the player's entire stationary distribution.

\paragraph{Higher-order smoothness through the Blum--Mansour fixed point.}

The closest earlier analysis of the same fixed-point difficulty is the BM--OMWU result of \citet{anagnostides2022ce}. Their Theorem~2.4 states the Markov-chain tree formula, and Theorem~3.1 represents SL--OMWU using directed trees. For BM--OMWU, Section~4 and Lemmas~D.5--D.6 combine this formula with multivariate Cauchy bounds in logarithmic transition coordinates. Building on \citet{daskalakis2021}, they bound higher-order finite differences through the stationary-distribution operation and obtain polylogarithmic swap regret. Thus higher-order smoothness through the fixed point is an existing ingredient, distinct from an explicit higher-order predictor, since the cited dynamics use OMWU.

Appendix~\ref{app:normalized-derivatives} uses related ideas from complex analysis, but applies them to local centered factors rather than proving one global smoothness bound for the full stationary map. Normalizing the weights of rooted trees and differentiating their distribution gives the centered edge score in \eqref{eq:tree-score}. Neither this normalization nor the centered-score identity is a separate novelty. Lemma~\ref{lem:normalized-derivatives-full} additionally gives every normalized derivative both an ordinary bound and a bound in terms of the same real row curvature used by the potential argument. The first bound allows the early stage to raise derivative orders without using the energy weights that depend on the game dimensions. The second allows the late stage to bound those derivatives directly by curvature.

\paragraph{Movement-based control of the stationary strategy.}
\citet{anagnostides2022swap} combine Blum--Mansour with optimistic FTRL and a log-barrier. The barrier supplies a negative movement term. Their Lemma~4.2 uses the tree theorem to transfer multiplicative row stability to a bound on movement of the stationary strategy. Their Theorem~4.4 uses nonnegative swap regret to bound second-order path length. This gives $O(\log T)$ swap regret. \citet{tsuchiya2026} uses a hybrid entropy/log-barrier regularizer: entropy controls optimistic prediction error, the log-barrier controls row movement, and a global sensitivity theorem for rooted trees transfers this control to the stationary distribution without requiring local closeness.

Our proof uses a different interface rather than a sharper sensitivity inequality. The row potential has Hessian $\quadform_z$, and stationarity cancels the linear term after summing over source rows, giving the nonnegative curvature energy in \eqref{eq:energy}. We spend this energy inside derivatives of the tree distribution instead of first bounding $\norm{x_i^{(t)}-x_i^{(t-1)}}$. This distinction is essential for the action dependence: applying a bound on stationary sensitivity or tree variance at every high-order step would introduce the weights $\omega_i=\Theta((m_i-1)/\beta_i)$ from the beginning. Our analysis delays this weighted curvature bound until the total
absolute weight of the continuing terms, including the early filter
cost, is exponentially small. 

\paragraph{Constant external regret and finite label recursions.}
The horizon-independent prediction mechanism is closest to the concurrent external-regret results of \citet{liu2026,abbadi2026}. ECHO-OFTRL of \citet{liu2026} preconditions a high-order difference by an EMA-resolvent cascade and recursively records exposed player labels. HOOD of \citet{abbadi2026} uses a discounted finite recurrence and an explicit expansion in which a repeated player is charged to movement while only a fresh player continues. Our filters are closest algorithmically to ECHO, and our exact stop/continue bookkeeping is closest to HOOD.

The BM reduction prevents a direct application of these self-play theorems. The original payoff vector is multiaffine in the opponents' strategies. Centered external gains may also depend on the player's own strategy, as in \citet{liu2026}, while remaining multiaffine in the original player strategies. In contrast, the row gains
\[
 r_{i,a,b}=x_{i,a}(v_{i,b}-v_{i,a})
\]
combine this payoff structure with the stationary-distribution map of the entire transition matrix. All rows of the same player are therefore coupled. BM transfers regret guarantees for the actual row feedback, but does not ensure that it satisfies the prediction bounds of an external-regret self-play theorem.

For a concrete illustration, let
\[
 Q=\begin{pmatrix}1-p&p\\ q&1-q\end{pmatrix},
 \qquad 0<p,q<1.
\]
Its stationary mass is $x_1=q/(p+q)$, so for a fixed payoff vector the row gain is $r_{1,2}=q(v_2-v_1)/(p+q)$. This is already rational, rather than affine, in either row parameter. For fixed $q$ and $v_2\ne v_1$, its derivatives with respect to $p$ need not vanish at any finite order. Thus treating BM rows as additional players does not preserve the original finite-game payoff structure. This identifies a missing hypothesis for a direct transfer, not an impossibility result for other reductions.

Our two closures separate these effects. During the first $\ell=O(\log(nm))$ differences, the differentiated product-tree distribution yields centered sums independent across players. Moment bounds combine the rates through $(\sum_i\eta_i^2)^{1/2}$ and keep the action-weighted energy out of the recursion. After the resolvent mass is included, the total absolute weight of the
continuing terms is at most $\kappa^\ell$. Only then does the proof refine its bookkeeping from players to transition indices $e=(i,a,b)$. The split
\[
 \frac{h(q)}q=\gamma(q)+\delta\alpha(q)
\]
separates a component paid collectively by curvature from a residual carrying an individual transition index. If the transition index repeats, the old $\alpha$ factor supplies the second factor needed for \eqref{eq:residual-bounds}, so the branch is charged to the energy and stops. Only a fresh index continues. Since there are $N=\sum_i m_i^2$ indices, the late expansion terminates exactly. The root selects a single source row, and a tree has at most one outgoing edge from that row. These properties avoid additional factors from the action counts. 
The two EMA cascades implement these two depths.

\paragraph{Why the row normalization is not interchangeable.}
The construction of $f_i$ must support the low initial potential in Lemma~\ref{lem:potential-full}, the two derivative bounds in Lemma~\ref{lem:normalized-derivatives-full}, and the residual split in Lemma~\ref{lem:scalar-bounds}. For example, the exponential response $f(s)=e^s$ has $h(q)=q$ and $h(q)/q=1$. With the present $0<\delta<1/2$ and $0<\alpha\le1$, a split $1=\gamma+\delta\alpha$ forces $\gamma>1/2$. For fixed $\beta>0$, this cannot satisfy $\gamma^2\le C_\gamma q/\beta$ uniformly as $q\downarrow0$. Ordinary entropy normalization therefore cannot simply replace the constructed map in this proof. This observation concerns the specific curvature split, not the possibility of other entropy-based constant-regret algorithms.

\paragraph{Technical contribution and scope.}

The main technical contribution is the prediction bound in \cref{lem:prediction}, which combines the derivative bounds, the bound for repeated transitions, and the delayed use of curvature to yield constant swap regret. Relative to the hybrid BM bound $O(nm^2\sqrt{\log m\log T})$ of \citet{tsuchiya2026} displayed in Table~\ref{tab:comparison}, our bound removes the horizon dependence and improves the simultaneous polynomial dependence on $n$ and $m$. This is not a uniform comparison over every regime: \citet{chen2020} has a smaller exponent of $m$, while \citet{hait2025} obtain horizon-independent, comparator-adaptive $\Phi$-regret in games with nonnegative social external regret. Our arbitrary-game guarantee does not include their comparator-complexity refinement. The constant-regret results of \citet{liu2026,abbadi2026} have polylogarithmic action dependence but concern the weaker external-regret/CCE benchmark.

\section{Row normalization, potential, and curvature comparison}\label{app:scalar-potential}
This appendix proves the scalar residual bounds and the potential/curvature facts used in Sections~\ref{sec:energy-idea} and \ref{sec:late-sketch}. Fix a player $i$ and suppress its index on the scalar functions, while retaining the local dimension $m_i$. Write $\beta=\beta_i$ and $d=d_i$. The public parameters satisfy $m_i\ge2$, $0<\delta\le1/2$, $m_i\beta\le1/(4e)$, and $d\ge\log(1/\beta)$. In particular $d\ge5$.

\begin{lemma}[Scalar bounds]\label{lem:scalar-bounds}
For $q=f(s)$, define
\begin{equation*}
 h(q)=f'(s),
 \qquad
 \lambda(s)=\frac{f'(s)}{f(s)},
 \qquad
 \alpha(q)=\frac1{1+\delta u(s)/2},
 \qquad
 \gamma(q)=\frac{h(q)}q-\delta\alpha(q).
\end{equation*}
There are universal constants $C_\gamma,C_\alpha$ such that, for $0<q\le1$,
\begin{equation*}
 0<h(q)\le q,
 \qquad |h'(q)|\le2,
 \qquad 0<\alpha(q)\le1,
 \qquad
 \gamma(q)^2\le C_\gamma h(q)/\beta,
 \qquad
 \delta\alpha(q)^2\le C_\alpha\sqrt{h(q)/\beta}.
\end{equation*}
Moreover,
\begin{equation}\label{eq:alpha-and-row-curvature}
 0\le\frac d{ds}\alpha(f(s))\le\frac\delta2\alpha(f(s))^2,
 \qquad
 \sum_bh(q_b)\ge\frac3{10}
\end{equation}
for every normalized row. If $s\ge0$ and $f(s)\le1$, then $h(f(s))\ge(2/5)f(s)$.
\end{lemma}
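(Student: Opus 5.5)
We work directly with the scalar function. Write $t=t_+(s)$ and $u=u(s)=1/t$, so $t-u=s$, $t+u=\sqrt{s^2+4}$, and
\[
u'(s)=-\frac{u}{t+u},\qquad t'(s)=\frac{t}{t+u}.
\]
Set $D(s)=1+u+(\delta/4)u^2$, so that $f=\beta(1+t/d)^d/D$. Then
\[
\lambda(s)=\frac{f'}{f}
=\frac{t}{t+u}\cdot\frac{d}{d+t}
+\frac{u(1+(\delta/2)u)}{(t+u)D}.
\]
Both summands are positive, and each is at most $1$ up to the explicit rational factors. The plan is to show $0<\lambda\le1$ by checking
\[
\frac{td}{d+t}+\frac{u(1+\delta u/2)}{1+u+\delta u^2/4}\le t+u .
\]
This holds because the first term is at most $t$ and the second is at most $u$, since $1+\delta u/2\le 1+u+\delta u^2/4$. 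This gives $0<h(q)=q\lambda\le q$.

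For $|h'(q)|\le2$, use $h'(q)=f''/f'=\lambda+\lambda'/\lambda$. It then suffices to bound $|\lambda'|\le C\lambda$ with $C\le1$ by differentiating the two explicit terms. The main worry is the regime $s\to-\infty$, where $u\approx|s|$ and $t\approx1/|s|$. There $\lambda\approx(1+\delta u/2)/(u(1+\delta u/4))$ up to lower order, and the logarithmic derivative stays bounded. As $s\to+\infty$ we have $\lambda\approx d/(d+t)$, which is also smooth.

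Next consider $\alpha$. Clearly $0<\alpha\le1$. Moreover
\[
\frac{d}{ds}\alpha=-\alpha^2\,\frac{\delta}{2}\,u'=\alpha^2\,\frac{\delta}{2}\,\frac{u}{t+u},
\]
which lies in $[0,(\delta/2)\alpha^2]$ and proves the first part of \eqref{eq:alpha-and-row-curvature}.

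The key identity behind the $\gamma$ bound is
\[
\frac{h}{q}=\lambda=\underbrace{\frac{td}{(t+u)(d+t)}}_{\text{I}}
+\underbrace{\frac{u(1+\delta u/2)}{(t+u)D}}_{\text{II}}.
\]
For $s\le0$ (large $u$), $\text{II}\approx\frac{1+\delta u/2}{u+\delta u^2/4}$. The choice $\alpha=1/(1+\delta u/2)$ is designed so that $\text{II}-\delta\alpha$ is of order $1/u$. To see this, compare $\frac{1+\delta u/2}{u(1+\delta u/4)}$ with $\frac{\delta}{1+\delta u/2}$; the leading $\delta$-terms cancel only partially.

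Hence I expect to prove $|\gamma|\lesssim \lambda_0$, where $\lambda_0$ is the non-$\delta$ part, which is comparable to $\min(1,1/u)$. In the regime $q\to0$ the factor $(1+t/d)^d\to1$, so $q\approx\beta/D$. Then $h/\beta=\lambda q/\beta\approx \lambda/D$, which is of order $1/(u\cdot\delta u^2)$ or $1/u^2$. This makes $\gamma^2\lesssim 1/u^2\lesssim h/\beta$ up to constants, provided $\delta u$ is handled carefully in both subregimes $\delta u\lessgtr1$.

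This cancellation estimate is the main obstacle. I would try it first by writing $\gamma$ as a single rational function of $u$ with numerator visibly $O(1)$ times denominator $/u$.

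In the same regime, $\delta\alpha^2=\delta/(1+\delta u/2)^2$. Comparing with $\sqrt{h/\beta}\approx\sqrt{\lambda/D}$ gives the $C_\alpha$ bound. The point is that $\delta\alpha^2\le\min(\delta,4/(\delta u^2))\le 2/u$, while $\sqrt{\lambda/D}\gtrsim 1/u$ whenever $\delta u\lesssim 1$. When $\delta u\gtrsim1$, we have $\sqrt{\lambda/D}\gtrsim 1/(u\sqrt{\delta u})\cdot\sqrt{\delta u}$, and a similar comparison applies.

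For $s\ge0$ with $q\le1$, we have $u\le1$. Then $\text{I}\ge\frac12\cdot\frac{d}{d+t}$, and since $q\le1$ forces $(1+t/d)^d\le 2/\beta$, we get $t\lesssim d$. So $\lambda\ge 2/5$ after tracking constants, which gives $h\ge(2/5)q$. Here $\gamma$ and $\alpha$ are bounded, and $h/\beta\ge (2/5)q/\beta$, where $q/\beta\ge 1/D\ge 1/3$. This settles the remaining bounds.

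Finally, for $\sum_b h(q_b)\ge 3/10$, split the row into entries with $s_b\ge0$ and with $s_b<0$. Entries with $s_b<0$ have $q_b\le\beta(1+1/d)^d\le e\beta$, so their total mass is at most $m_i e\beta\le1/4$. Hence the entries with $s_b\ge0$ carry mass at least $3/4$. By the last claim they satisfy $h\ge(2/5)q$, so
\[
\sum_b h(q_b)\ge\frac25\cdot\frac34=\frac3{10}.
\]
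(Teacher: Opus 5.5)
The genuine gap is in the bound $\gamma^2\le C_\gamma h/\beta$, which is the crux of the lemma. For $u\ge1$ you correctly find $h/\beta\asymp 1/(u^2(1+\delta u))$, so what is needed is $|\gamma|\lesssim 1/(u\sqrt{1+\delta u})$. Your plan aims for $|\gamma|=O(1/u)$, and that is off by the factor $\sqrt{1+\delta u}$, which is unbounded as $q\to0$ with $\delta$ fixed. In particular, the step ``$\gamma^2\lesssim 1/u^2\lesssim h/\beta$'' is false once $\delta u\gg1$. The missing idea is that the cancellation is exact, not partial. For large $\delta u$, both $\mathrm{II}$ and $\delta\alpha$ are $\approx 2/u$, and since $(1+\delta u/2)^2-\delta D=1-\delta$,
\[
\frac{1+\delta u/2}{D}-\frac{\delta}{1+\delta u/2}=\frac{1-\delta}{D\,(1+\delta u/2)}.
\]
Write $\mathrm{II}=\frac{1+\delta u/2}{D}\bigl(1-\frac{1}{1+u^2}\bigr)$ and use $\mathrm{I}\le 1/(1+u^2)$. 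This gives $|\gamma|\le C\bigl[u^{-2}+u^{-1}(1+\delta u)^{-2}\bigr]$ for $u\ge1$, and each piece squared is $O(h/\beta)$; for the first piece this uses $1+\delta u\le 2u^2$. This identity is exactly what the paper uses.

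The same oversight affects your $C_\alpha$ comparison. For $\delta u\gtrsim1$ one has $\sqrt{h/\beta}\asymp 1/(u\sqrt{\delta u})$, not $\gtrsim 1/u$, so $\delta\alpha^2\le 2/u$ is too weak there. Use $\delta\alpha^2\le 4/(\delta u^2)$ instead, which gives a ratio $\lesssim\delta u/(1+\delta u)^{3/2}\le1$.

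Two stated constants are also not actually obtained.
\begin{itemize}
\item For $|h'|\le2$, ``the logarithmic derivative stays bounded'' only gives $|h'|\le C$. The paper proves $|\lambda'/\lambda|\le1$ by factoring $D=(1+au)(1+bu)$ with $a,b=(1\pm\sqrt{1-\delta})/2$. This writes $\lambda$ as a sum of three positive terms. Each term's logarithmic derivative is $(\ln u)'=-u/(1+u^2)$, of modulus at most $1/2$, times a bracket of modulus at most $2$, and a sum of positive terms inherits the bound.
\item For $h\ge(2/5)q$, the bound $\mathrm{I}\ge\frac12\cdot\frac{d}{d+t}$ only yields $\lambda\ge1/5$. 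Instead, treat $\lambda$ as the convex combination, with weights $\frac{1}{1+u^2}$ and $\frac{u^2}{1+u^2}$, of $\frac{du}{1+du}$ and $\frac{1+\delta u/2}{D}$. Then show both are at least $2/5$. The first follows from $1+t/d\le (D/\beta)^{1/d}\le 2(9/4)^{1/d}\le 5/2$. The second follows from $(1+\delta u/2)/D\ge1/2$ for $u\le1$.
\end{itemize}
The remaining parts are correct and match the paper: $0<\lambda\le1$, the $\alpha$ derivative, the bounds in the positive region, and the row-sum argument.
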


\begin{proof}
Write $u=u(s)>0$ and $B(u)=1+u+\delta u^2/4$. Since $s=u^{-1}-u$,
\[
 u'=-\frac{u^2}{1+u^2},
 \qquad
 B(u)=(1+au)(1+bu),
 \qquad
 a=\frac{1+\sqrt{1-\delta}}2,
 \quad b=\frac{1-\sqrt{1-\delta}}2.
\]
Direct logarithmic differentiation of \eqref{eq:scalar-response} gives
\begin{align}
 \lambda(s)
 &=\frac{du}{(1+u^2)(1+du)}
   +\frac{au^2}{(1+u^2)(1+au)}
   +\frac{bu^2}{(1+u^2)(1+bu)}\notag\\
 &=\frac1{1+u^2}\frac{du}{1+du}
   +\frac{u^2}{1+u^2}\frac{1+\delta u/2}{B(u)}.
 \label{eq:lambda-u}
\end{align}
The last line is a convex combination of numbers in $(0,1]$. Hence $0<\lambda\le1$, $f$ is strictly increasing, and $0<h(q)=q\lambda\le q$.

Each of the three positive terms in the first line of \eqref{eq:lambda-u} has logarithmic derivative of magnitude at most one. Indeed $(\ln u)'=-u/(1+u^2)$. For the first term, the remaining bracket is
\[
 1-\frac{2u^2}{1+u^2}-\frac{du}{1+du},
\]
for the terms with $c=a,b$, it is
\[
 \frac2{1+u^2}-\frac{cu}{1+cu}.
\]
Both brackets have magnitude at most two, while $u/(1+u^2)\le1/2$. Thus $|\lambda'/\lambda|\le1$, and
\[
 |h'(f(s))|=\left|\frac{f''(s)}{f'(s)}\right|
 =\left|\lambda+\frac{\lambda'}\lambda\right|\le2.
\]

\paragraph{Positive-score region.}
For $u\le1$, Bernoulli's inequality gives $(1+1/(du))^d\ge1+1/u$. Retaining the first term of \eqref{eq:lambda-u},
\begin{equation}\label{eq:positive-h-lower}
 \frac{h(f(s))}{\beta}
 \ge\frac{d(1+u)}{(1+du)(1+u^2)B(u)}\ge\frac29.
\end{equation}
If also $f(s)\le1$, then
\[
 1+\frac1{du}\le\left(\frac{B(u)}\beta\right)^{1/d}
 \le2(9/4)^{1/d}\le\frac52.
\]
Therefore $du/(1+du)\ge2/5$, while $(1+\delta u/2)/B(u)\ge1/2$, and the second line of \eqref{eq:lambda-u} gives $\lambda\ge2/5$. For $u\ge1$, $f(s)\le e\beta$. At most $e m_i\beta\le1/4$ of a normalized row's probability mass lies in this region. On the remaining mass, $h(q)\ge(2/5)q$, so $\sum_bh(q_b)\ge(3/4)(2/5)=3/10$.

\paragraph{Negative-score region and residual split.}
For $u\ge1$, direct bounds in \eqref{eq:lambda-u} give universal two-sided comparisons
\begin{equation}\label{eq:negative-comparisons}
 \frac{f(s)}\beta\asymp\frac1{u(1+\delta u)},
 \qquad
 \lambda(s)\asymp\frac1u,
 \qquad
 \frac{h(f(s))}\beta\asymp\frac1{u^2(1+\delta u)}.
\end{equation}
Indeed, $1\le(1+1/(du))^d\le e$, $u(1+\delta u)/4\le B(u)\le2u(1+\delta u)$, and $a\ge1/2$. The cancellation
\[
 \frac{1+\delta u/2}{B(u)}-\frac\delta{1+\delta u/2}
 =\frac{1-\delta}{B(u)(1+\delta u/2)}
\]
implies
\[
 |\gamma(f(s))|\le C\bigl[u^{-2}+u^{-1}(1+\delta u)^{-2}\bigr].
\]
Its square is at most $Ch(f(s))/\beta$ by \eqref{eq:negative-comparisons}. Also
\[
 \frac{\delta\alpha^2}{\sqrt{h/\beta}}
 \le C\frac{\delta u}{(1+\delta u)^{3/2}}\le C.
\]
For $u\le1$, both weighted bounds follow from \eqref{eq:positive-h-lower}, $\lambda\le1$, and $\delta\le1/2$. Finally,
\begin{equation*}
 \frac d{ds}\alpha(f(s))
 =\frac\delta2\alpha(f(s))^2\frac{u^2}{1+u^2},
\end{equation*}
which proves \eqref{eq:alpha-and-row-curvature}.
\end{proof}

\subsection{Potential and local comparison}
Define $P,c_\Phi,\Phi$ as in \eqref{eq:potential}. The primitive is finite because $f(s)$ decays quadratically as $s\to-\infty$. The minimum in $c_\Phi$ occurs where $f(s)=1$.

\begin{lemma}[Potential and comparison]\label{lem:potential-full}
The potential is smooth and convex, and
\begin{equation*}
 \nabla\Phi(\bm z)=\bm q(\bm z),
 \qquad
 \Phi(\bm z)\ge\max_bz_b,
 \qquad
 0\le\Phi(0)\le2\log m_i.
\end{equation*}
For the quantities in \eqref{eq:row-geometry},
\begin{equation}\label{eq:hessian-full}
 \nabla_{\bm w}q_b=h_b\widetilde w_b,
 \qquad
 \bm w^\top\nabla^2\Phi(\bm z)\bm w
 =\quadform_{\bm z}(\bm w)
 =\min_{c\in\R}\sum_bh_b(w_b-c)^2
 \le\norm{\bm w}_\infty^2.
\end{equation}
If $\norm{\bm y-\bm z}_\infty\le\epsilon$, then
\begin{equation}\label{eq:local-comparison-full}
 e^{-4\epsilon}h_b(\bm z)\le h_b(\bm y)\le e^{4\epsilon}h_b(\bm z),
 \qquad
 e^{-\delta\epsilon}\alpha_b(\bm z)\le\alpha_b(\bm y)\le e^{\delta\epsilon}\alpha_b(\bm z),
\end{equation}
\begin{equation}\label{eq:quadratic-comparison-full}
 e^{-4\epsilon}\quadform_{\bm z}(\bm w)
 \le\quadform_{\bm y}(\bm w)
 \le e^{4\epsilon}\quadform_{\bm z}(\bm w).
\end{equation}
\end{lemma}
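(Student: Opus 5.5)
The plan is to treat $\Phi(\bm z)=\mu(\bm z)+\sum_bP(z_b-\mu(\bm z))-c_\Phi$ through the implicit function theorem applied to the constraint $\sum_bf(z_b-\mu)=1$. Smoothness of $\mu$ follows because $f$ is smooth and strictly increasing, so $\partial_\mu\sum_bf(z_b-\mu)=-\sum_bh_b<0$. I would then compute the derivatives in order:
\[
\partial_{z_c}\Phi=\partial_{z_c}\mu+f(z_c-\mu)-\partial_{z_c}\mu\sum_bf(z_b-\mu)=q_c .
\]
This step uses only the normalization constraint and gives $\nabla\Phi=\bm q$. Differentiating the constraint in direction $\bm w$ gives $\nabla_{\bm w}\mu=\ip{\bm\sigma}{\bm w}$. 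Hence $\nabla_{\bm w}q_b=f'(z_b-\mu)(w_b-\nabla_{\bm w}\mu)=h_b\widetilde w_b$, and the Hessian form is
\[
\bm w^\top\nabla^2\Phi\,\bm w=\sum_bw_bh_b\widetilde w_b=\sum_bh_b\widetilde w_b^2=\quadform_{\bm z}(\bm w).
\]
The middle equality holds because $\sum_bh_b\widetilde w_b=0$. The same centering shows that $\bm\sigma$-weighted centering minimizes $\sum_bh_b(w_b-c)^2$ over $c$. Taking $c=0$ and using $\sum_bh_b\le\sum_bq_b=1$ (by $h(q)\le q$ from \cref{lem:scalar-bounds}) gives $\quadform_{\bm z}(\bm w)\le\norm{\bm w}_\infty^2$. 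Convexity is immediate from positive semidefiniteness.

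For the value bounds I would use $P(s)-s\ge c_\Phi$, which holds by the definition of $c_\Phi$. Choosing $b^\ast=\arg\max_bz_b$ and using $P\ge0$ gives
\[
\Phi(\bm z)\ge\mu+P(z_{b^\ast}-\mu)-c_\Phi\ge\mu+(z_{b^\ast}-\mu)=\max_bz_b .
\]
This already yields $\Phi(0)\ge0$.

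For the upper bound at $\bm z=0$, all coordinates equal $-\mu_0$, and $f(-\mu_0)=1/m_i$. Then
\[
\Phi(0)=\mu_0+m_iP(-\mu_0)-c_\Phi,\qquad c_\Phi=P(s^\ast)-s^\ast \text{ with } f(s^\ast)=1 .
\]
The plan is to estimate $\mu_0+s^\ast$ and $m_iP(-\mu_0)$ through the explicit form of $f$ in \eqref{eq:scalar-response}. At $q=1/m_i\gg\beta_i$ we are in the positive-score region $u\le1$, where $f\approx\beta(1+t_+/d)^d$ behaves like an exponential of rate about $1$ (indeed $\lambda\ge2/5$ there and $\lambda\le1$). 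Integrating $\lambda$ between $f=1/m_i$ and $f=1$ bounds $s^\ast+\mu_0\le\tfrac52\log m_i$ crudely. That constant is too weak, so I would sharpen it by showing $\lambda\ge1-O(1/d)-O(u)$ on that range, using $d\ge\log(1/\beta)$. Similarly, $m_iP(-\mu_0)\le m_i\int_{-\infty}^{-\mu_0}f\le m_i f(-\mu_0)/\lambda_{\min}$ plus the tail mass from the $u\ge1$ region, which is $O(\beta m_i)$ by \eqref{eq:negative-comparisons}. Also $P(s^\ast)\ge0$ is dropped favorably. This is the step I expect to be the main obstacle, since getting the constant $2$ requires the careful choice of $d_i$ and $\beta_i$ rather than generic bounds.

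For the local comparisons \eqref{eq:local-comparison-full}, write $h_b(\bm y)=h(f(y_b-\mu(\bm y)))$. Since $\nabla\mu=\bm\sigma$ is a probability vector, $|\mu(\bm y)-\mu(\bm z)|\le\epsilon$, so each argument $s_b$ moves by at most $2\epsilon$. Along the path, $|\tfrac{d}{ds}\ln h(f(s))|=|h'(q)|\le2$ by \cref{lem:scalar-bounds}, which gives the factor $e^{4\epsilon}$. For $\alpha$, \eqref{eq:alpha-and-row-curvature} gives $0\le\tfrac{d}{ds}\ln\alpha\le\tfrac{\delta}{2}\alpha\le\tfrac\delta2$, and hence $e^{\delta\epsilon}$. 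Finally, \eqref{eq:quadratic-comparison-full} follows from the variational formula $\quadform=\min_c\sum_bh_b(w_b-c)^2$ combined with the pointwise comparison of the weights $h_b$.
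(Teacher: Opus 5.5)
The gradient, Hessian, convexity, coordinate-domination and local-comparison parts of your plan are correct and follow the paper's argument. The gap is the upper bound $\Phi(0)\le2\log m_i$, which your plan would not reach.

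First, you cannot afford to drop $P(s^\ast)$. It equals $\int_0^1 q/h(q)\,dq=\int_0^1\lambda^{-1}\,dq\ge1$. Your estimate $m_iP(-\mu_0)\lesssim m_if(-\mu_0)/\lambda_{\min}=1/\lambda_{\min}$ adds a further additive constant, up to $5/2$. Logarithms here are base two: $d\ge\log(1/\beta)$ is used as $\beta^{-1/d}\le2$, and the final constant is $\frac52\ln2$. So the target is exactly $2$ at $m_i=2$, and no additive constant is affordable. For typical public parameters at $m_i=2$ (say $1/\beta\approx2.5\cdot10^4$, $d=16$), a numerical evaluation gives $(s^\ast+\mu_0)+2P(-\mu_0)\approx3$, while $\Phi(0)\approx1.2$.

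Second, the proposed sharpening $\lambda\ge1-O(1/d)-O(u)$ is false. On $u\le1$ the dominant term of \eqref{eq:lambda-u} is $du/(1+du)=d/(d+t_+)$. Because $d\approx\log(1/\beta)$, the equation $f(s^\ast)=1$, roughly $(1+s^\ast/d)^d\approx1/\beta$, forces $s^\ast/d$ to be of order one, between about $0.4$ and $1$. Hence $\lambda$ is only about $1/2$ to $0.7$ near $q=1$, even though $1/d$ and $u\approx1/s^\ast$ are both small. The map $f$ is deliberately not an exponential of rate one there. Your diagnosis is also misplaced: in base two, the crude bound $s^\ast+\mu_0\le\frac52\ln m_i\approx1.73\log m_i$ is not the obstruction.

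The missing step is to keep the cancellation exact rather than sharpen $\lambda$. The change of variables $q=f(s)$, $ds=dq/h(q)$, gives
\[
 \Phi(0)=(m_i-1)\int_0^{1/m_i}\frac q{h(q)}\,dq+\int_{1/m_i}^1\frac{1-q}{h(q)}\,dq .
\]
Here $-P(s^\ast)$ has been absorbed into the weights $m_i-1$ and $1-q$. Use only the crude bound $q/h(q)\le5/2$ on $[q_0,1]$, where $q_0=f(0)$. The constants then cancel exactly:
\[
 \tfrac52\cdot\tfrac{m_i-1}{m_i}+\tfrac52\bigl(\ln m_i-1+\tfrac1{m_i}\bigr)=\tfrac52\ln m_i .
\]
The remaining piece comes from the region $u\ge1$: $(m_i-1)P(0)\le2e\beta m_i\ln(1+4/\delta)$. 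By $m_i\beta_i=\log m_i/(2^{10}J)$ and $\ln(1+4/\delta)\le J\ln5$, this is at most $\frac{2e\ln5}{2^{10}}\log m_i$. Together these give $\Phi(0)\le(\frac52\ln2+\frac{2e\ln5}{2^{10}})\log m_i<2\log m_i$, with no refinement of $\lambda$.
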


\begin{proof}
The lower bound $\sum_bh_b\ge3/10$ and the implicit-function theorem give a smooth normalizer. Differentiating $\sum_bf(z_b-\mu)=1$ gives
\[
 \nabla_{\bm w}\mu=\ip{\bm\sigma}{\bm w}.
\]
The terms involving \(\nabla_{\bm w}\mu\) in \(\nabla_{\bm w}\Phi\) cancel because $\sum_bq_b=1$, proving the gradient and Hessian formulas. Completing the square gives the minimum representation. Choosing $c=0$ and using $h_b\le q_b$ proves the upper bound. For $s_b=z_b-\mu$, coordinate domination follows from
\[
 \Phi(\bm z)-z_b=P(s_b)-s_b-c_\Phi+\sum_{c\ne b}P(s_c)\ge0.
\]
Constant row shifts leave $\bm q$ unchanged and shift $\Phi$ by the same constant, so $\sqrt{\quadform}$ is a seminorm.

For comparison, \cref{lem:scalar-bounds} and $|\widetilde w_b|\le2\norm{\bm w}_\infty$ imply
\[
 |\nabla_{\bm w}\ln h_b|=|h'(q_b)\widetilde w_b|\le4\norm{\bm w}_\infty,
 \qquad
 |\nabla_{\bm w}\ln\alpha_b|\le\delta\norm{\bm w}_\infty.
\]
Integrating along the segment proves \eqref{eq:local-comparison-full}. Compare the weighted sums for every fixed centering constant and then minimize to obtain \eqref{eq:quadratic-comparison-full}.

It remains to bound $\Phi(0)$. Let $s_{m_i}=f^{-1}(1/m_i)$ and $s_1=f^{-1}(1)$. Changing variables and integrating by parts gives
\begin{equation*}
 \Phi(0)=(m_i-1)\int_0^{1/m_i}\frac q{h(q)}\,dq
 +\int_{1/m_i}^1\frac{1-q}{h(q)}\,dq.
\end{equation*}
Set $q_0=f(0)\le e\beta/2<1/m_i$. Since $|ds/du|=1+u^{-2}$ on $u\ge1$,
\[
 P(0)\le2e\beta\int_1^\infty\frac{du}{u(1+\delta u/4)}
 =2e\beta\ln(1+4/\delta).
\]
For $q\in[q_0,1]$, the positive-region argument gives $h(q)\ge(2/5)q$. Splitting the first integral at $q_0$ and integrating the remaining elementary terms yields
\[
 \Phi(0)\le m_iP(0)+\frac52\ln m_i.
\]
For the public parameters, $m_i\beta=(\log m_i)/(2^{10}J)$ and $J\ge2$. Since $1+4\cdot2^J\le5^J$,
$\ln(1+4/\delta)\le J\ln5$, and therefore
\[
 0\le\Phi(0)
 \le\left(\frac52\ln2+\frac{2e\ln5}{2^{10}}\right)\log m_i<2\log m_i.
\]
\end{proof}

\section{Uniform bounds for normalized derivatives}\label{app:normalized-derivatives}
This appendix proves \cref{lem:normalized-derivatives-full},
which supplies the local derivative bounds used in both prediction
stages. The ordinary bound controls error-direction derivatives, and
the order-raising identity represents gain-direction derivatives in
the early stage. In the late stage, the curvature-weighted bound
controls derivatives of existing atoms and residual factors directly:
their score dependence is still differentiated, but their orders no
longer increase.

For $e=(i,a,b)$ and $\bm v\in[0,1]^{m_i}$, define
\begin{equation*}
 \chi^\lambda_{e,\bm v}(\bm y)
 =\frac{h_e(\bm y)}{q_e(\bm y)}
 \bigl(v_b-\ip{\bm\sigma_{i,a}(\bm y)}{\bm v}\bigr),
 \qquad
 \chi^\alpha_{e,\bm v}(\bm y)
 =\alpha_e(\bm y)
 \bigl(v_b-\ip{\bm\sigma_{i,a}(\bm y)}{\bm v}\bigr).
\end{equation*}
They depend only on the selected row and are invariant under a constant shift of that row.

\begin{lemma}[Uniform normalized derivatives]\label{lem:normalized-derivatives-full}
There is a universal integer $B\ge1$. Set $B_r=B^{r+1}r!$, and for $\bm v_0,\ldots,\bm v_r\in[0,1]^{m_i}$ in row $(i,a)$ define
\begin{equation*}
 \mathcal J^{(r)}_{e,\bm v_0,\ldots,\bm v_r}(\bm y)
 =B_r^{-1}\nabla_{\bm v_r}\cdots\nabla_{\bm v_1}\chi^\lambda_{e,\bm v_0}(\bm y).
\end{equation*}
Uniformly in $r\ge0$,
\begin{equation}\label{eq:J-bounds-full}
 |\mathcal J^{(r)}|\le1,
 \qquad
 |\nabla_{\bm w}\mathcal J^{(r)}|
 \le C\min\left\{\norm{\bm w_{i,a}}_\infty,
 \beta_i^{-1/2}\sqrt{\quadform_{\bm y_{i,a}}(\bm w_{i,a})}\right\},
\end{equation}
\begin{equation*}
 \nabla_{\bm v_{r+1}}\mathcal J^{(r)}=B(r+1)\mathcal J^{(r+1)}.
\end{equation*}
Each factor is multilinear in its payoff directions and row-shift invariant. Also $|\chi^\alpha|\le\alpha_e\le1$, and its first derivative obeys the same minimum bound as in \eqref{eq:J-bounds-full}.
\end{lemma}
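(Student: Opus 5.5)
The plan is to treat $\chi^\lambda_{e,\bm v_0}$ as a function of the single row $\bm y_{i,a}$ and obtain all derivative bounds from a Cauchy-type estimate on a complex polydisc. The polydisc is taken in the row scores, with radius proportional to a universal constant. Row-shift invariance lets us fix the normalization shift $\mu$. Write $s_c=y_c-\mu$, so that the factor is
\[
\frac{h(q_b)}{q_b}\Bigl(v_{0,b}-\frac{\sum_c h(q_c)v_{0,c}}{\sum_c h(q_c)}\Bigr),
\]
with $q_c=f(s_c)$. The quantities $h/q=\lambda(s)$ and $h(q_c)=f'(s_c)$ are explicit functions of $u(s)$, built from $u=(\sqrt{s^2+4}-s)/2$ and rational expressions in $u$. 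A derivative in direction $\bm v$ with $\bm v\in[0,1]^{m_i}$ moves each $s_c$ by $v_c-\langle\bm\sigma,\bm v\rangle$, of size at most one. We therefore extend $\theta\mapsto\chi(\bm y+\sum_j\theta_j\bm v_j)$ to complex $\theta_j$ with $|\theta_j|\le\rho$ for a universal $\rho$. The first step is to show that $\lambda$, the normalizer $\mu$, and the weights $\sigma_c$ stay analytic and uniformly bounded on this polydisc.

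For $\lambda$ this follows from \eqref{eq:lambda-u}. The function $u$ is analytic away from $s=\pm2i$. The relative changes of $u$, $1+u^2$, $1+du$ and $B(u)$ under a complex perturbation of size $\rho$ in $s$ are controlled, because their logarithmic derivatives are bounded by constants. This is the same computation that gave $|\lambda'/\lambda|\le1$ in \cref{lem:scalar-bounds}. Hence $|\lambda|\le2$ on the complex neighbourhood.

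For the normalized weights, we use $\sum_c h_c\ge3/10$ from \eqref{eq:alpha-and-row-curvature} together with $|\nabla\ln h_c|\le4\norm{\bm w}_\infty$. Each complex $h_c$ is then a small relative perturbation of its real value. The complex denominator stays at least $3/20$, and $\sum_c|h_c|\le2\sum_c h_c$. This gives $|\langle\bm\sigma,\bm v\rangle|\le C$, and the implicit-function argument extends $\mu$ analytically with $|\nabla\mu|\le C$. Cauchy's estimate on the polydisc then gives
\[
|\nabla_{\bm v_r}\cdots\nabla_{\bm v_1}\chi^\lambda|\le C\rho^{-r}r!
\]
(the polydisc Cauchy bound with equal radii yields the factor $r!$ rather than $r^r$). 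Choosing $B\ge\max\{C,\rho^{-1}\}$ gives $|\mathcal J^{(r)}|\le1$. The identity $\nabla_{\bm v_{r+1}}\mathcal J^{(r)}=B(r+1)\mathcal J^{(r+1)}$ is immediate from $B_{r+1}/B_r=B(r+1)$. Multilinearity and shift invariance are inherited from $\chi^\lambda$.

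For the first-derivative bound in \eqref{eq:J-bounds-full}, the $\norm{\bm w}_\infty$ part follows from one more Cauchy step in a direction $\bm w/\norm{\bm w}_\infty$, with $B$ enlarged by a constant. The curvature part is the main obstacle. The plan is to compute $\nabla_{\bm w}$ through the chain rule: every score dependence enters through $s_c$ shifted by $\widetilde w_c=w_c-\langle\bm\sigma,\bm w\rangle$. After replacing the centering by the minimizing constant in \eqref{eq:hessian-full}, we need
\[
\sum_c|\partial_{s_c}\mathcal J|\,|\widetilde w_c|\le C\beta^{-1/2}\sqrt{\textstyle\sum_c h_c\widetilde w_c^2}.
\]
By Cauchy--Schwarz, this reduces to $\sum_c|\partial_{s_c}\mathcal J|^2/h_c\le C/\beta$. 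The coordinate $c=b$ contributes $|\lambda^{(r+1)}|^2/h_b$. This is handled with the negative-region asymptotics \eqref{eq:negative-comparisons}: there $\lambda$ and its $s$-derivatives scale like $u^{-1}$ times derivatives of $u$, which decay like $u^{-2}$. Since $h/\beta\asymp u^{-2}(1+\delta u)^{-1}$, the ratio is $O(1/\beta)$, using $\delta u\lesssim u$. In the positive region, $h\ge2\beta/9$ by \eqref{eq:positive-h-lower}. The coordinates $c\ne b$ enter only through $\bm\sigma$, and their derivatives carry factors $\partial_s h_c=O(h_c)$ uniformly in the complex neighbourhood. Each such term therefore contributes $O(h_c)$, and $\sum_c h_c\le1$ closes the bound. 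Extending this uniformly over all orders $r$ requires applying Cauchy to the analytic functions $\partial_{s_c}\chi/\sqrt{h_c}$ after verifying that $h_c$ keeps its size on the polydisc. I expect the delicate point to be the row with $u$ large and $\delta u\approx1$.

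The statements about $\chi^\alpha$ follow in the same way. The bound $|\chi^\alpha|\le\alpha_e\le1$ is immediate. Its derivative combines $0\le\partial_s\alpha\le(\delta/2)\alpha^2$, which is bounded by $C\sqrt{h/\beta}$ through \eqref{eq:residual-bounds}, with the $\bm\sigma$-derivative estimate above.
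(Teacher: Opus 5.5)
Your overall architecture is the paper's. You extend $\lambda$, the normalizer and $\bm\sigma$ holomorphically to a complex neighbourhood of universal, dimension-free radius. You then get the order-$r$ bounds from Cauchy estimates. For the curvature bound, you control a first derivative uniformly on that neighbourhood in terms of the \emph{real} curvature at the base point, and then apply Cauchy.

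\textbf{The Cauchy step is wrong as written.} The map $\theta\mapsto\chi(\bm y+\sum_j\theta_j\bm v_j)$ is not holomorphic on a polydisc $|\theta_j|\le\rho$ with $\rho$ independent of $r$. For $\bm v_j\in[0,1]^{m_i}$ (for instance all equal), the perturbation $\sum_j\theta_j\bm v_j$ has sup norm up to $r\rho$. Once $r\rho$ exceeds the universal radius, it leaves the region where $u$, $\mu$ and $\bm\sigma$ are controlled. Your parenthetical also reverses the mechanism. On an equal-radius polydisc, Cauchy's bound for the mixed first-order partial $\partial_{\theta_1}\cdots\partial_{\theta_r}$ is $M\rho^{-r}$ with no factorial at all, which would be far too strong.

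The repair is the paper's. Take radius $r_0/(2r)$, so the total perturbation stays below $r_0/2$. This gives $M(2r/r_0)^r$, and $r^r\le e^rr!$ produces the factorial. With this fix, $|\mathcal J^{(r)}|\le1$ follows. Your sup-norm derivative bound by one extra Cauchy step also works: the factor $(r+1)$ from $B_{r+1}/B_r$ is absorbed by $(r+1)(\rho B)^{-(r+1)}\le1$ once $\rho B\ge2$.

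Two points you gloss over also need care. First, the complex relative bounds on $h_c$ come from $|f''/f'|\le C$ on the complex disk, not from the real estimate on $\nabla\ln h_c$. Second, the implicit-function step must give a radius independent of $m_i$ and of the smallest probability. The paper obtains this by a contraction argument.

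\textbf{The curvature part uses genuinely different bookkeeping.} The paper centers $\bm w$ once at the real weights $h^0$. It proves $\sup_{\norm{\bm\zeta}_\infty<r_0}|\nabla_{\bm w}\chi(\bm z+\bm\zeta)|\le C\beta^{-1/2}\sqrt{\quadform_{\bm z}(\bm w)}$ and applies Cauchy to the single holomorphic function $\nabla_{\bm w}\chi$. You instead use shift invariance ($\sum_c\partial_c\mathcal J^{(r)}=0$) and Cauchy--Schwarz to reduce to $\sum_c|\partial_c\mathcal J^{(r)}|^2/h_c\le C/\beta$, then apply Cauchy coordinatewise.

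This works under two conditions. The coordinate derivatives must be $\partial_{y_c}$, which commute with the $\nabla_{\bm v_j}$; partials in the $s$-coordinates do not. You must also divide by the real constants $\sqrt{h_c^0}$. Then only first-order facts on the complex neighbourhood are needed:
\begin{itemize}
\item $|\partial_{y_c}\chi|\le Ch_c^0$ for $c\ne b$, since $|\sigma_c|\le Ch_c^0$ and $|f''/f'|\le C$;
\item $|\partial_{y_b}\chi|\le C\sqrt{h_b^0/\beta}$, from $|\lambda'(s_b+\zeta)|\le C\sqrt{f'(s_b)/\beta}$.
\end{itemize}
Summing gives $C(1+1/\beta)$.

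So you never need $\lambda^{(r+1)}$, and describing the $c=b$ contribution as $|\lambda^{(r+1)}|^2/h_b$ is inaccurate for $r\ge1$. The decay you need is $|\lambda'|\le Cu^{-2}$ on the complex disk, compared with $\sqrt{h/\beta}\gtrsim u^{-1}(1+\delta u)^{-1/2}$. It is not decay of derivatives of $u$, which tend to $-1$.

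Your route makes the dual-norm structure of $\quadform$ explicit. The paper's keeps a single holomorphic function and avoids summing coordinatewise suprema. Your $\chi^\alpha$ argument matches the paper's.
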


We first prove the $\chi^\alpha$ statement. Suppress row indices and put $h_{\mathrm{total}}=\sum_bh_b$. By \cref{lem:scalar-bounds}, $h_{\mathrm{total}}\ge3/10$. Differentiating $\sigma_b=h_b/h_{\mathrm{total}}$ gives
\begin{equation*}
 \nabla_{\bm w}\ip{\bm\sigma}{\bm v}
 =\Cov_{b\sim\bm\sigma}(v_b,h'(q_b)\widetilde w_b),
 \qquad
 \left|\nabla_{\bm w}\ip{\bm\sigma}{\bm v}\right|
 \le\sqrt{\quadform_{\bm y}(\bm w)/h_{\mathrm{total}}}\le C\sqrt{\quadform_{\bm y}(\bm w)}.
\end{equation*}
Moreover, \cref{lem:scalar-bounds} gives
\begin{equation*}
 |\nabla_{\bm w}\alpha_e|
 \le\frac\delta2\alpha_e^2|\widetilde w_e|
 \le\frac{\delta\alpha_e^2}{2\sqrt{h_e}}\sqrt{\quadform_{\bm y}(\bm w)}
 \le C\beta_i^{-1/2}\sqrt{\quadform_{\bm y}(\bm w)}.
\end{equation*}
The product rule proves the curvature-weighted bound. Using $|\widetilde w_b|\le2\norm{\bm w}_\infty$ instead proves the ordinary bound.

It remains to prove the $\chi^\lambda$ statement. The argument has four steps.

\subsection{Scalar complex-analytic bounds}
Fix real $s_0$ and write $u_0=u(s_0)>0$. For $|\zeta|\le1/8$,
\[
 |(s_0+\zeta)^2+4|
 \ge s_0^2+4-2|s_0||\zeta|-|\zeta|^2
 \ge\frac34(s_0^2+4).
\]
Choose the square-root branch positive at $s_0$. The corresponding $u$ is nonzero, because $(\sqrt{s^2+4}-s)(\sqrt{s^2+4}+s)=4$. The local logarithm satisfies $(\ln u)'=-1/\sqrt{s^2+4}$, hence
\begin{equation}\label{eq:complex-u}
 \left|\ln\frac{u(s_0+\zeta)}{u_0}\right|\le|\zeta|.
\end{equation}
For every $c\ge0$, the relative error of $1+cu$ from $1+cu_0$ is at most $e^{|\zeta|}-1$. For $1+u^2$ it is at most $e^{2|\zeta|}-1$. The three terms in \eqref{eq:lambda-u} are therefore holomorphic, nonzero, and have uniformly bounded logarithmic derivatives on a smaller universal disk. Their real base values are positive, so
\begin{equation*}
 |\lambda(s_0+\zeta)-\lambda(s_0)|\le C|\zeta|\lambda(s_0),
 \qquad
 |\lambda(s_0+\zeta)|\le2\lambda(s_0)\le2.
\end{equation*}
Integrating $f'/f=\lambda$ and using $f'=f\lambda$, on a sufficiently small universal radius $r_s>0$,
\begin{equation}\label{eq:complex-scalar-bounds}
 |f(s_0+\zeta)-f(s_0)|\le C|\zeta|f(s_0),
 \qquad
 |f'(s_0+\zeta)-f'(s_0)|\le C|\zeta|f'(s_0),
\end{equation}
\begin{equation*}
 \left|\frac{f''(s_0+\zeta)}{f'(s_0+\zeta)}\right|\le C,
 \qquad
 |\lambda'(s_0+\zeta)|\le C\sqrt{f'(s_0)/\beta}.
\end{equation*}
No factor $d$ appears: it cancels in the logarithmic derivative. For the last bound, if $u_0\le1$, use \eqref{eq:positive-h-lower} and the uniform bound on $\lambda'$. If $u_0\ge1$, the first term of \eqref{eq:lambda-u} is bounded by $Cu_0^{-2}$, the other terms by $Cu_0^{-1}$, and their logarithmic derivatives by $Cu_0^{-1}$ uniformly by \eqref{eq:complex-u}, after enlarging the universal constant $C$. Thus $|\lambda'|\le C/u_0^2$. Equation \eqref{eq:negative-comparisons} gives
\[
 \sqrt{f'(s_0)/\beta}\ge\frac{c}{u_0\sqrt{1+\delta u_0}},
\]
which dominates $u_0^{-2}$ because $\sqrt{1+\delta u_0}\le\sqrt{2u_0}$.

\subsection{A dimension-uniform normalization radius}
Fix a real normalized row, write $s_b=z_b-\mu_0$, $h_b^0=f'(s_b)$, and $h_{\mathrm{total}}^0=\sum_bh_b^0\ge3/10$. For a complex score displacement $\bm\zeta$, define
\begin{equation*}
 F(\bm\zeta,\nu)=\sum_bf(s_b+\zeta_b-\nu)-1,
 \qquad
 \Psi_{\bm\zeta}(\nu)=\nu+F(\bm\zeta,\nu)/h_{\mathrm{total}}^0.
\end{equation*}
Choose $r_0>0$ so that $3r_0<r_s$ and the relative $f'$ errors in \eqref{eq:complex-scalar-bounds} are at most $1/10$ on radius $3r_0$. For $\norm{\bm\zeta}_\infty<r_0$ and $|\nu|\le2r_0$,
\[
 |\partial_\nu\Psi_{\bm\zeta}|
 =\left|1-\frac{\sum_bf'(s_b+\zeta_b-\nu)}{h_{\mathrm{total}}^0}\right|\le\frac1{10},
 \qquad
 |\Psi_{\bm\zeta}(0)|\le\frac{11}{10}r_0.
\]
Thus $\Psi_{\bm\zeta}$ is a contraction of the closed disk $|\nu|\le2r_0$ into itself. Iteration from zero converges uniformly, and its iterates are holomorphic. The limit $\nu(\bm\zeta)$ is holomorphic and satisfies $F(\bm\zeta,\nu(\bm\zeta))=0$. It extends the real normalizer by $\mu(\bm z+\bm\zeta)=\mu_0+\nu(\bm\zeta)$.

At these complex scores, let $h_b=f'(s_b+\zeta_b-\nu)$, $h_{\mathrm{total}}=\sum_bh_b$, and $\sigma_b=h_b/h_{\mathrm{total}}$. Relative bounds give
\begin{equation}\label{eq:complex-row-bounds}
 |h_b|\le2h_b^0,
 \qquad
 |h_{\mathrm{total}}|\ge h_{\mathrm{total}}^0/2,
 \qquad
 \sum_b|\sigma_b|\le4.
\end{equation}
The radius is independent of the row dimension and of the smallest probability.

\subsection{Retaining one direction at real curvature}
For fixed real $\bm v\in[0,1]^{m_i}$, write
\[
 \chi(\bm z+\bm\zeta)
 =\lambda(s_b+\zeta_b-\nu)
 \bigl(v_b-\ip{\bm\sigma}{\bm v}\bigr).
\]
We claim that every real direction $\bm w$ satisfies
\begin{equation}\label{eq:one-weighted-derivative}
 \sup_{\norm{\bm\zeta}_\infty<r_0}|\nabla_{\bm w}\chi(\bm z+\bm\zeta)|
 \le C\min\left\{\norm{\bm w}_\infty,
 \beta^{-1/2}\sqrt{\quadform_{\bm z}(\bm w)}\right\}.
\end{equation}
Complex normalization and $\chi$ are invariant under a common score shift. For the weighted bound, subtract the real mean $\sum_bh_b^0w_b/h_{\mathrm{total}}^0$, so $\sum_bh_b^0w_b=0$, and set $Q_0=\sum_bh_b^0w_b^2=\quadform_{\bm z}(\bm w)$. Differentiating normalization gives $\nabla_{\bm w}\mu=\sum_bh_bw_b/h_{\mathrm{total}}$. By \eqref{eq:complex-row-bounds} and Cauchy--Schwarz, for $\widetilde w_b=w_b-\nabla_{\bm w}\mu$,
\begin{equation*}
 |\nabla_{\bm w}\mu|\le C\sqrt{Q_0/h_{\mathrm{total}}^0},
 \qquad
 |\widetilde w_b|\le C\sqrt{Q_0/h_b^0},
 \qquad
 \sum_bh_b^0|\widetilde w_b|^2\le CQ_0.
\end{equation*}
Writing $\varkappa_b=f''/f'$ at the complex scalar arguments,
\[
 \nabla_{\bm w}\ip{\bm\sigma}{\bm v}
 =\frac1{h_{\mathrm{total}}}\sum_bh_b\varkappa_b\widetilde w_b
 \bigl(v_b-\ip{\bm\sigma}{\bm v}\bigr),
 \qquad
 \left|\nabla_{\bm w}\ip{\bm\sigma}{\bm v}\right|\le C\sqrt{Q_0}.
\]
The last bound uses $|\varkappa_b|\le C$, \eqref{eq:complex-row-bounds}, and $h_{\mathrm{total}}^0\ge3/10$. The product rule gives
\[
 \nabla_{\bm w}\chi
 =\lambda'\widetilde w_b\bigl(v_b-\ip{\bm\sigma}{\bm v}\bigr)
 -\lambda \nabla_{\bm w}\ip{\bm\sigma}{\bm v}.
\]
The first term is bounded by
$\sqrt{h_b^0/\beta}\sqrt{Q_0/h_b^0}=\beta^{-1/2}\sqrt{Q_0}$, and the second by $C\sqrt{Q_0}$. For the ordinary bound, do not center $\bm w$ at the real point. The same formulas give $|\nabla_{\bm w}\mu|\le4\norm{\bm w}_\infty$ and $|\widetilde w_b|\le5\norm{\bm w}_\infty$. This proves \eqref{eq:one-weighted-derivative}. The same bounds also give $|\chi|\le10$.

\subsection{Cauchy normalization}
For $r\ge1$ and directions of sup norm at most one, the map
\[
 (t_1,\ldots,t_r)\longmapsto
 \chi\!\left(\bm z+\sum_jt_j\bm v_j\right)
\]
is holomorphic on $|t_j|<r_0/(2r)$. Repeated Cauchy bounds give
\begin{equation*}
 |\nabla_{\bm v_r}\cdots\nabla_{\bm v_1}\chi|
 \le10(2r/r_0)^r\le10(2e/r_0)^rr!.
\end{equation*}
Apply the same bounds to the holomorphic function $\nabla_{\bm w}\chi(\bm z+\bm\zeta)$ using \eqref{eq:one-weighted-derivative}. Only $r$ further derivatives are taken, so
\begin{equation*}
 |\nabla_{\bm v_r}\cdots\nabla_{\bm v_1}\nabla_{\bm w}\chi|
 \le C(2e/r_0)^rr!
 \min\left\{\norm{\bm w}_\infty,
 \beta^{-1/2}\sqrt{\quadform_{\bm z}(\bm w)}\right\}.
\end{equation*}
Choose a universal integer $B$ at least $2e/r_0$, $10$, and the finite constants in these bounds. Divide by $B_r=B^{r+1}r!$. This proves \cref{lem:normalized-derivatives-full}, including $r=0$, multilinearity, row-shift invariance, and the recursion $B_{r+1}/B_r=B(r+1)$. The weighted quadratic form remains at the fixed real row throughout. No complex curvature comparison or minimum-probability bound is used.

\section{Stationary trees, centered gain identities, and moment bounds}\label{app:trees-moments}
This appendix supplies the probabilistic inputs to both prediction stages. The first subsection represents each deviation gain under the stationary-tree distribution. The second proves an unweighted player-level tree-score bound for the early stage and a curvature-weighted tree-score bound for the late stage. The third proves the moment bound that permits products of overlapping centered atoms.

\subsection{The Markov chain tree theorem}
Let $Q$ be a strictly positive row-stochastic matrix. For a root $a$, set
\[
 w_a=\sum_{\tau\text{ directed toward }a}\prod_{(c,d)\in\tau}q_{c,d}.
\]
Every $w_a$ is positive. For a fixed $b$, take a tree rooted at $a\ne b$, add the edge $a\to b$, and delete the outgoing edge $b\to c$ on the resulting cycle. This gives a weight-preserving bijection with a tree rooted at $b$ together with a distinguished edge $b\to c$. Hence
\[
 \sum_{a\ne b}w_aq_{a,b}=w_b\sum_{c\ne b}q_{b,c}.
\]
Therefore $x_a=w_a/\sum_cw_c$ satisfies $\bm x^\top(Q-I)=0$. If $(Q-I)\bm v=0$ and $v_a$ is maximal, then
\[
 0=\sum_{b\ne a}q_{a,b}(v_b-v_a).
\]
Every summand is nonpositive and every coefficient is positive, so every $v_b=v_a$. The right kernel, and hence the left kernel, is one-dimensional. Thus the normalized vector above is the unique stationary distribution. Replacing one redundant stationarity equation by $\sum_ax_a=1$ gives the nonsingular linear system used by the exact stationary-distribution solve in the learning dynamics.

For each player and score array $\bm y$, let $\treeP_{i,\bm y}$ be the corresponding normalized distribution on rooted directed trees, and let $\treeP_{\bm y}=\bigotimes_i\treeP_{i,\bm y}$. Write $I_e$ for edge inclusion, with diagonal indicators zero, and $a_i(\bm T)$ for the root of player $i$'s tree. A tree has $m_i-1$ edges and at most one outgoing edge from each source row.

Define
\begin{equation*}
 \bm v_i(\bm T)=\bigl(u_i(b,\bm a_{-i}(\bm T))\bigr)_{b\in[m_i]},
 \qquad
 G_{i,a,b}(\bm T)=\one_{\{a_i(\bm T)=a\}}
 \bigl(u_i(b,\bm a_{-i}(\bm T))-u_i(a,\bm a_{-i}(\bm T))\bigr).
\end{equation*}
Then $\sourcenorm{\bm G_i(\bm T)}\le1$ and
\begin{equation}\label{eq:gain-tree-representation}
 \bm r_i^{(s)}=\E_{\bm T\sim\treeP_{\bm z^{(s)}}}\bm G_i(\bm T).
\end{equation}
At trial scores $\bm y$, define the centered represented gain
\begin{equation*}
 \widetilde r_{i,a,b}^{(s)}(\bm y)
 =x_{i,a}^{(s)}\left(v_{i,b}^{(s)}-\ip{\bm\sigma_{i,a}(\bm y)}{\bm v_i^{(s)}}\right).
\end{equation*}
For either $\zeta=\alpha$ or $\zeta=h/q$,
\begin{equation}\label{eq:centered-tree-identity}
 \zeta_e(\bm y)\widetilde r_e^{(s)}(\bm y)
 =\E_{\bm T\sim\treeP_{\bm z^{(s)}}}
 \left[\one_{\{a_i(\bm T)=a\}}\chi^\zeta_{e,\bm v_i(\bm T)}(\bm y)\right],
\end{equation}
where $\chi^{h/q}=\chi^\lambda$. The identity follows from independence of player $i$'s root and the opponents' roots. A fresh independent sample makes it valid inside an expectation involving old samples.

\subsection{Tree score bounds}
\begin{lemma}[Tree score bounds]\label{lem:tree-score-full}
For a deterministic score direction $\bm w$,
\begin{equation}\label{eq:tree-score-full}
 \nabla_{\bm w}\ln\treeP_{\bm y}
 =\sum_e(I_e-\E I_e)\frac{h_e}{q_e}\widetilde w_e
 =\sum_e(I_e-\E I_e)(\gamma_e+\delta\alpha_e)\widetilde w_e.
\end{equation}
Moreover,
\begin{equation}\label{eq:tree-source-variance}
 \norm{\nabla_{\bm w}\ln\treeP_{\bm y}}_{L^2}
 \le2\left(\sum_i\sourcenorm{\bm w_i}^2\right)^{1/2},
\end{equation}
\begin{equation}\label{eq:tree-gamma-variance}
 \E\left|\sum_e(I_e-\E I_e)\gamma_e\widetilde w_e\right|^2
 \le\sum_{i,a}\omega_i\quadform_{\bm y_{i,a}}(\bm w_{i,a}).
\end{equation}
If a normed-vector integrand $\bm F$ satisfies $\E\norm{\bm F}^2\le1$, these inequalities also bound the norm of the expectation of $\bm F$ times the corresponding centered score.
\end{lemma}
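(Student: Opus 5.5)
The plan is to work player by player, using that $\treeP_{\bm y}=\bigotimes_i\treeP_{i,\bm y}$ with $\treeP_{i,\bm y}(\bm T_i)\propto\prod_{e\in\bm T_i}q_e(\bm y)$. Then
\[
\ln\treeP_{\bm y}(\bm T)=\sum_e I_e(\bm T)\ln q_e(\bm y)-\sum_i\ln Z_i(\bm y),
\]
where $Z_i$ is player $i$'s normalizing sum of tree weights. Differentiating the normalizer yields $\nabla_{\bm w}\ln Z_i=\E\bigl[\sum_{e\in\cI_i}I_e\nabla_{\bm w}\ln q_e\bigr]$, the standard exponential-family identity. Hence the score is $\sum_e(I_e-\E I_e)\nabla_{\bm w}\ln q_e$. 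By \eqref{eq:hessian-full} in \cref{lem:potential-full}, $\nabla_{\bm w}q_e=h_e\widetilde w_e$, so $\nabla_{\bm w}\ln q_e=(h_e/q_e)\widetilde w_e$, which is the first equality in \eqref{eq:tree-score-full}. The second equality is just the definition $\gamma_e=h_e/q_e-\delta\alpha_e$ from \cref{lem:scalar-bounds}. Diagonal indices contribute nothing since $I_e=0$ there.

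For \eqref{eq:tree-source-variance}, I will write the score as $\sum_i(X_i-\E X_i)$ with
\[
X_i=\sum_{a}\sum_b I_{i,a,b}\,\frac{h_{i,a,b}}{q_{i,a,b}}\,\widetilde w_{i,a,b}.
\]
The $X_i$ are independent across players because the tree distribution is a product. So the squared $L^2$ norm equals $\sum_i\Var X_i\le\sum_i\E X_i^2$. For a fixed source row $a$, a rooted tree contains at most one outgoing edge from $a$, so the inner sum over $b$ has at most one nonzero term. That term is bounded using $h\le q$ from \cref{lem:scalar-bounds} and $|\widetilde w_b|\le2\max_c|w_c|$, since $\bm\sigma$ is a probability vector. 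Summing over rows gives $|X_i|\le2\sourcenorm{\bm w_i}$ pointwise, hence the bound.

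For \eqref{eq:tree-gamma-variance} I will use the same independence to reduce to $\E Y_i^2$ for each player, where $Y_i=\sum_a\Gamma_a$ and $\Gamma_a=\sum_bI_{i,a,b}\gamma_{i,a,b}\widetilde w_{i,a,b}$. A tree has $m_i-1$ edges and the root row has none, so at most $m_i-1$ of the $\Gamma_a$ are nonzero. Cauchy--Schwarz then gives $Y_i^2\le(m_i-1)\sum_a\Gamma_a^2$. Since each row has at most one outgoing edge, $\Gamma_a^2=\sum_bI_{i,a,b}\gamma_{i,a,b}^2\widetilde w_{i,a,b}^2$. Taking expectations with $\E I_e\le1$ and the bound $\gamma_e^2\le C_\gamma h_e/\beta_i$ gives
\[
\E Y_i^2\le(m_i-1)\frac{C_\gamma}{\beta_i}\sum_a\quadform_{\bm y_{i,a}}(\bm w_{i,a})\le\omega_i\sum_a\quadform_{\bm y_{i,a}}(\bm w_{i,a}),
\]
because $C_\omega\ge C_\gamma$. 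The crude estimate $\E I_e\le1$ costs a factor but keeps the curvature form intact, which is the point.

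The final sentence follows from Cauchy--Schwarz, $\norm{\E[\bm F\,X]}\le\E[\norm{\bm F}|X|]\le(\E\norm{\bm F}^2)^{1/2}\norm{X}_{L^2}$, applied with $X$ the relevant centered score. The only delicate step is the structural counting (one outgoing edge per source, $m_i-1$ edges per tree). That counting is what avoids a factor $m_i$ in the source-norm bound and produces exactly the factor $(m_i-1)$ absorbed into $\omega_i$.
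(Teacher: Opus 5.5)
Your proposal is correct and follows the paper's proof essentially step for step. The paper uses the same ingredients: the tree log-weight derivative with the normalizer subtracting the mean; the pointwise bound $2\sourcenorm{\bm w_i}$ from $h_e\le q_e$ and at most one outgoing edge per source row, combined with independence across players; Cauchy--Schwarz over the $m_i-1$ tree edges with $\gamma_e^2\le C_\gamma h_e/\beta_i$ and $C_\omega\ge C_\gamma$; and a final Cauchy--Schwarz for the integrand $\bm F$.
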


\begin{proof}
A tree's log-weight derivative is $\sum_eI_e(h_e/q_e)\widetilde w_e$, by normalized row differentiation. Differentiating the tree normalizer subtracts its mean, proving \eqref{eq:tree-score-full}. The row center is essential even before the tree mean is subtracted.

Since $h_e/q_e\le1$, player $i$'s uncentered tree score has absolute value at most $2\sourcenorm{\bm w_i}$. Independence across players proves \eqref{eq:tree-source-variance}. For deterministic coefficients $c_e$, Cauchy--Schwarz on the $m_i-1$ tree edges gives
\begin{equation*}
 \Var\left(\sum_eI_ec_e\right)
 \le\sum_i(m_i-1)\sum_{e\in\cI_i}c_e^2.
\end{equation*}
Take $c_e=\gamma_e\widetilde w_e$, use $\gamma_e^2\le C_\gamma h_e/\beta_i$, and recall $\omega_i=C_\omega(m_i-1)/\beta_i$ with $C_\omega\ge C_\gamma$. This proves \eqref{eq:tree-gamma-variance}. Finally,
\[
 \norm{\E(\bm F X)}\le(\E\norm{\bm F}^2)^{1/2}(\E|X|^2)^{1/2}.
\]
Here $\bm F$ need not be independent of the centered score.
\end{proof}
Bound \eqref{eq:tree-source-variance} is used in the early closure, while \eqref{eq:tree-gamma-variance} is used in the late closure.

\subsection{A moment bound for overlapping centered factors}
\begin{lemma}[Centered sums and products]\label{lem:centered-moment}
Conditional on an auxiliary object $D_0$, suppose the pairs $(\bm T_i,\bm T_i')$ are independent across players and iid within each pair. Let $F_i(\cdot,D_0)\in[-1,1]$, and let fixed real weights satisfy $\sum_ia_i^2\le1$. For integer $\ell\ge1$, define
\begin{equation*}
 Z=\frac1{2\sqrt\ell}\sum_ia_i
 \bigl(F_i(\bm T_i,D_0)-F_i(\bm T_i',D_0)\bigr).
\end{equation*}
Then $\norm Z_{L^{2\ell}}\le1$. If at most $\ell$ factors each have this marginal bound, their product has $L^2$ norm at most one, regardless of dependencies between factors.
\end{lemma}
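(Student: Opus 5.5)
The plan is to condition on $D_0$ throughout and treat $Z$ as a sum of independent, symmetric, bounded summands. Set
\[
 X_i=a_i\bigl(F_i(\bm T_i,D_0)-F_i(\bm T_i',D_0)\bigr).
\]
Conditionally, the $X_i$ are independent across players. Each $X_i$ is symmetric, because swapping $\bm T_i$ and $\bm T_i'$ preserves the iid pair law and flips the sign. Each also satisfies $|X_i|\le2|a_i|$. Symmetry gives the Rademacher representation $\sum_iX_i\overset{d}{=}\sum_i\epsilon_iX_i$, with independent signs $\epsilon_i$ that are independent of the trees. Conditioning additionally on the trees, Khintchine's inequality for even moments applies. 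Its sharp form is the comparison with Gaussians: $\E(\sum_i\epsilon_ic_i)^{2\ell}\le(2\ell-1)!!\,(\sum_ic_i^2)^\ell$.

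Next I bound the conditional variance proxy: $\sum_iX_i^2\le4\sum_ia_i^2\le4$. This gives
\[
 \E\Bigl(\sum_iX_i\Bigr)^{2\ell}\le(2\ell-1)!!\,4^\ell.
\]
Using $(2\ell-1)!!\le(2\ell)^\ell/\sqrt{e}\cdot(\text{const})\le(2\ell)^\ell$, or simply $(2\ell-1)!!\le\ell^\ell 2^\ell/2^\ell\cdot\ldots$, we need $(2\ell-1)!!\le\ell^\ell$. This holds because $(2\ell-1)!!=\prod_{j=1}^\ell(2j-1)$, and pairing factors $(2j-1)(2\ell-2j+1)\le\ell^2$ by AM--GM gives $(2\ell-1)!!\le\ell^\ell$. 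Hence $\E(\sum_iX_i)^{2\ell}\le(4\ell)^\ell$, and dividing by $(2\sqrt\ell)^{2\ell}=(4\ell)^\ell$ yields $\E Z^{2\ell}\le1$ conditionally on $D_0$. Integrating over $D_0$ then gives $\norm Z_{L^{2\ell}}\le1$. This step is the main (mild) obstacle: the constant in Khintchine must be tight enough that exactly the normalization $2\sqrt\ell$ works without an extra universal factor. I would check it via the Gaussian domination $\E(\sum\epsilon_ic_i)^{2\ell}\le\E g^{2\ell}\,(\sum c_i^2)^\ell$. That bound follows from expanding the power, since only even multi-indices survive and the Rademacher moments $1$ are dominated by the Gaussian moments $(2k-1)!!$.

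For the product, let $Z_1,\dots,Z_h$ with $h\le\ell$ each satisfy $\norm{Z_\nu}_{L^{2\ell}}\le1$, with arbitrary joint dependence. Generalized Hölder with exponents $2h$ gives
\[
 \E\prod_\nu Z_\nu^2\le\prod_\nu\norm{Z_\nu^2}_{L^h}=\prod_\nu\norm{Z_\nu}_{L^{2h}}^2.
\]
By monotonicity of $L^p$ norms on a probability space, $\norm{Z_\nu}_{L^{2h}}\le\norm{Z_\nu}_{L^{2\ell}}\le1$, which proves the $L^2$ product bound. The case $h=0$ is the empty product and is trivial.
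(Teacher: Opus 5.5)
Your proof is correct. For the moment bound it takes a different route from the paper; the product step is the same H\"older argument. The paper applies Hoeffding's lemma to $F_i$ and $F_i'$ separately to get the conditional bound $\E\exp\bigl(t\sum_ia_i(F_i-F_i')\bigr)\le e^{t^2}$. It converts this to the tail $2e^{-u^2/4}$ and integrates to obtain $\E|\sum_ia_i(F_i-F_i')|^{2\ell}\le2\cdot4^\ell\ell!$. Since $2\,\ell!\le\ell^\ell$ holds only for $\ell\ge2$, it treats $\ell=1$ separately by a variance computation.

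You instead use exchangeability within each pair to symmetrize. You then apply the sharp even-moment Khintchine bound, and your monomial-expansion proof of the Gaussian domination is correct. Combined with the crude pointwise proxy $\sum_iX_i^2\le4$, this gives $(2\ell-1)!!\,4^\ell$. The AM--GM pairing $[(2\ell-1)!!]^2=\prod_{j=1}^\ell(2j-1)(2\ell-2j+1)\le\ell^{2\ell}$ then closes the argument uniformly for all $\ell\ge1$, with no special case.

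The trade-off is as follows. The paper's route gets the smaller sub-Gaussian proxy $2$, because it never bounds $F_i-F_i'$ by its full range, but it loses a polynomial factor in the tail integration. Yours is purely combinatorial and needs only exchangeability, not independence, within pairs. Neither moment bound dominates the other for all $\ell$: yours is smaller for $\ell\le2$ and the paper's for $\ell\ge3$. Both fit under $(4\ell)^\ell$, which is all the lemma requires.

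You should delete the garbled intermediate sentence invoking $(2\ell)^\ell/\sqrt e$, since the AM--GM pairing is the actual justification.
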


\begin{proof}
Conditional Hoeffding bounds give
\[
 \E\exp\left(t\sum_ia_i(F_i-F_i')\right)\le e^{t^2}.
\]
The resulting two-sided tail $2e^{-u^2/4}$ implies
\[
 \E\left|\sum_ia_i(F_i-F_i')\right|^{2\ell}
 \le2\,4^\ell\ell!.
\]
For $\ell\ge2$, $2\ell!\le\ell^\ell$. For $\ell=1$, variance at most two suffices. Remove the conditioning. For $h\le\ell$, H\"older gives
\[
 \E\prod_{j=1}^h|Z_j|^2
 \le\prod_{j=1}^h(\E|Z_j|^{2h})^{1/h}\le1.
\]
\end{proof}
This moment bound supplies the $L^2$ product bound used in both expression classes.

\section{The early player-centered closure}\label{app:early-closure}
This appendix proves \cref{lem:early-closure} and the handoff bound \eqref{eq:early-mass-filtered}. Throughout, we fix the finite deterministic effective-score history. Integrals may include finite sums and compact parameter intervals.

\subsection{Score types, slots, and atoms}

A \emph{score type} is a sequence
\begin{equation}\label{eq:score-type}
 \bm y_\tau^{(t)}=\sum_{a=0}^pc_{\tau,a}\bm z^{(t-a)},
 \qquad
 c_{\tau,a}\ge0,
 \qquad
 \sum_ac_{\tau,a}=1,
\end{equation}
with coefficients independent of time and common to all players. A \emph{sample slot} $q$ is an independent product-tree draw
$\bm T^q\sim\treeP_{\bm y_{\tau(q)}^{(t)}}$. Different slots may have the same type without being the same sample. Write $\Omega=(\bm T^1,\ldots,\bm T^M)$ for their tuple and $\E_t$ for expectation under their time-$t$ product distribution.

An order-$r$ atom uses two distinct inclusion slots $q,q'$ of one type $\tau$, and pairwise distinct direction slots $d_0,\ldots,d_r$ disjoint from $q,q'$. Put
\[
 a_i^0=a_i(\bm T^{d_0}),
 \qquad
 \bm v_i^j=\bm v_i(\bm T^{d_j}),
\]
and
\begin{equation*}
 F_i(\bm T^q)
 =\left(\prod_{j=1}^r\one_{\{a_i(\bm T^{d_j})=a_i^0\}}\right)
 \sum_b I_{i,a_i^0,b}(\bm T_i^q)
 \mathcal J^{(r)}_{i,a_i^0,b,\bm v_i^0,\ldots,\bm v_i^r}(\bm y_\tau).
\end{equation*}
For fixed weights $\sum_ia_i^2\le1$, define
\begin{equation*}
 Z^{(r)}(\Omega,\bm y_\tau^{(t)})=\frac1{2\sqrt\ell}\sum_ia_i
 \bigl(F_i(\bm T^q)-F_i(\bm T^{q'})\bigr).
\end{equation*}
We suppress the sample and score arguments when unambiguous. Each selected row has at most one outgoing tree edge, so $|F_i|\le1$. Conditioning on this atom's direction slots leaves its inclusion pair iid and independent across players. Hence \cref{lem:centered-moment} gives $\norm{Z^{(r)}}_{L^{2\ell}}\le1$. Different atoms may share slots, even in different roles. We apply the marginal bound separately and use unconditional H\"older for products.

A depth-$p$ early expression, $0\le p\le\ell$, has the form
\begin{equation*}
 \bm U^{(t)}=\one_{\{t>p\}}\E_t\left[\bm G(\Omega)
 \prod_{\nu=1}^h Z_\nu^{(r_\nu)}(\Omega,\bm y_{\tau_\nu}^{(t)})\right],
 \qquad
 \sum_{\nu=1}^h(r_\nu+1)\le p.
\end{equation*}
There are at most $1+2p$ independent slots, and all types have the window \eqref{eq:score-type}. The observable is a fixed matrix-valued function of the sample tuple with $\sourcenorm{\bm G(\Omega)}\le1$. Initially it is $\bm G_i(\bm T^1)$ from \eqref{eq:gain-tree-representation}. The observable, slot assignments, atom weights, and type coefficients are independent of time. Holding the samples fixed, all score dependence lies in the normalized factors. The expectation also depends on the scores through the sampling distributions. By \cref{lem:centered-moment}, the integrand has squared norm expectation at most one, so
\begin{equation*}
 \sourcenorm{\bm U^{(t)}}\le1.
\end{equation*}
The gain representation \eqref{eq:gain-tree-representation} is a depth-zero expression. We call an index $s$ attached to a gain direction $\bm r^{(s)}$ introduced by this representation a \emph{represented gain time}. The leading indicator $\one_{\{t>p\}}$ is the \emph{gate}. It ensures that every represented gain time introduced later is positive.

\subsection{One-difference closure}

\begin{proof}[Proof of \cref{lem:early-closure}]
The gate contributes one impulse at $t=p+1$, of source norm at most one. For $t\ge p+2$, interpolate every score type simultaneously between its value at $t-1$ and $t$. For independent slots with distributions $P_q(\theta)$, finite differentiation is exactly
\begin{equation}\label{eq:differentiate-product-distributions}
 \frac d{d\theta}\E_\theta F_\theta
 =\E_\theta\frac{dF_\theta}{d\theta}
 +\sum_q\E_\theta\left[F_\theta\frac d{d\theta}\ln P_q(\theta)\right].
\end{equation}
Equal types move together, but each independent slot density contributes separately. By \eqref{eq:effective-score-difference}, every type direction is a convex combination of
$\eta_i\bm r_i^{(s)}-\eta_i\diff\bm\xi_i^{(s)}$ at positive represented times. Treat one shift at a time.

\paragraph{Error directions stop.}
For a distribution derivative, use \eqref{eq:tree-source-variance} with $\bm w_i=-\eta_i\diff\bm\xi_i$. Since backward differences have kernel mass two,
\[
 \sum_i\eta_i^2\sum_{t=1}^T\sourcenorm{(\diff\bm\xi_i)^{(t)}}^2
 \le4g^2E^2.
\]
The sequence cost is therefore at most $CgE$.

For an atom, freeze its samples. At most two outgoing edges in its inclusion pair contribute per player. By \eqref{eq:J-bounds-full} and Cauchy--Schwarz over the deterministic atom weights,
\begin{equation}\label{eq:atom-ordinary-derivative}
 |\nabla_{\bm w}Z^{(r)}|
 \le\frac C{\sqrt\ell}
 \left(\sum_i\sourcenorm{\bm w_i}^2\right)^{1/2}.
\end{equation}
After removing the differentiated atom, the remaining product has $L^1$ norm at most one. Its error contribution is again at most $CgE$. There are at most $1+2p$ distribution densities and at most $p$ atoms, so, after enlarging a universal constant, all stopped error terms cost at most $C_E(p+1)gE$.

\paragraph{Gain-direction distribution derivatives continue.}
When a slot $q$ is differentiated, introduce an independent ghost $q'$ of the same type. Inside the expectation, replace $I_e(\bm T^q)-\E I_e$ by $I_e(\bm T^q)-I_e(\bm T^{q'})$. The ghost is independent of the complete old integrand. Introduce a separate fresh gain slot $d_0$ at the represented time $s$, using \eqref{eq:centered-tree-identity} with $\zeta=h/q$. The multiplier is
\begin{equation*}
 \sum_i\eta_i\sum_{a,b}
 \bigl[I_{i,a,b}(\bm T_i^q)-I_{i,a,b}(\bm T_i^{q'})\bigr]
 \one_{\{a_i(\bm T^{d_0})=a\}}
 \chi^\lambda_{i,a,b,\bm v_i(\bm T^{d_0})}(\bm y_\tau).
\end{equation*}
Since $\chi^\lambda=B\mathcal J^{(0)}$, this is $2Bg\sqrt\ell$ times a new order-zero atom with weights $a_i=\eta_i/g$. It adds two slots and one degree unit.

\paragraph{Gain-direction atom derivatives raise the order.}
Row-shift invariance removes the constant source payoff. For an order-$r$ normalized factor,
\begin{equation*}
 \nabla_{\eta_i\bm r_{i,a}^{(s)}}\mathcal J^{(r)}
 =\eta_ix_{i,a}^{(s)}B(r+1)
 \mathcal J^{(r+1)}_{\ldots,\bm v_i^{(s)}}.
\end{equation*}
A fresh gain slot $d_{r+1}$ represents both the source probability and the additional payoff argument, by multilinearity and root independence. Factor out $\eta_{\max}B(r+1)$ and replace the atom weights by $a_i\eta_i/\eta_{\max}$. Their squared sum remains at most one. One slot and one degree unit are added.

These are all product-rule terms. Distribution-density continuations have total coefficient mass at most $C(1+2p)g\sqrt\ell$. Explicit-factor continuations have mass at most
$Bg\sum_\nu(r_\nu+1)\le Bgp$. This proves the stated coefficient bound after fixing $C_E$. Old types interpolate into the enlarged window, ghosts inherit a type, and new gain slots have a single-time type. Every coefficient remains independent of time, and the children satisfy all slot, disjointness, and degree conditions.
\end{proof}

\paragraph{Early coefficient mass.}

Starting from a depth-zero gain, let $a_p$ bound the total absolute coefficient mass at depth $p$. The closure gives
\begin{equation*}
 a_p\le(C_Eg\sqrt\ell)^pp!.
\end{equation*}
For $p\le\ell$, $p!\le p^p\le k^p$, and therefore
\begin{equation}\label{eq:early-mass-filtered}
 a_pk^p\le(C_Egk^{5/2})^p=\kappa^p.
\end{equation}
\section{The late transition-index closure}\label{app:late-prediction}
This appendix proves \cref{lem:late-closure}. It records the distinct transition indices encountered along each branch: curvature-controlled terms and repeated indices stop, while only a fresh index continues.

\subsection{Late expressions and their score window}

A depth-$j$ late expression, $0\le j\le N$, has the form
\begin{equation*}
 \bm U^{(t)}=\one_{\{t>\ell+j\}}\E_t\left[
 \bm G(\Omega)
 \prod_{\nu=1}^hZ_\nu^{(r_\nu)}(\Omega,\bm y_{\tau_\nu}^{(t)})
 \prod_{\mu=1}^j
 \chi^\alpha_{e_\mu(\Omega),\bm v_\mu(\Omega)}
 (\bm y_{\upsilon_\mu}^{(t)})
 \right].
\end{equation*}
There are at most $1+2\ell+2j$ independent slots. Score types are fixed convex combinations in the depth-$(\ell+j)$ window. All early-atom conditions hold with total degree at most $\ell$. The observable, transition indices, and payoff directions depend only
on samples and fixed parameters, with $\sourcenorm{\bm G}\le1$.
Whenever $e_\mu(\Omega)=(i,a,b)$, the associated payoff vector satisfies
$\bm v_\mu(\Omega)\in[0,1]^{m_i}$. The $j$ transition indices are distinct wherever $\bm G\ne0$, and every scalar residual factor has a fixed score type even when its transition index is sample-dependent. By \cref{lem:centered-moment} and $|\chi^\alpha|\le1$, the integrand has squared norm expectation at most one. An early depth-$\ell$ expression is a late depth-zero expression.

We first record the consequence of the energy normalization used throughout this appendix. If $s\ge1$ and $\bm y$ lies within distance $\epsilon$ of $\bm z_{i,a}^{(s)}$, then \eqref{eq:quadratic-comparison-full} gives
\begin{equation*}
 \eta_i\sqrt{\quadform_{\bm y}(\bm r_{i,a}^{(s)})}
 \le\sqrt8\,e^{2\epsilon}\sqrt{b_{i,a}^{(s)}}.
\end{equation*}
Keeping one summand in the quadratic form gives the same bound for $\eta_i\sqrt{h_e(\bm y)}\,|\widetilde r_e^{(s)}(\bm y)|$.

\begin{lemma}[Comparison on the late score window]\label{lem:late-window}
In a depth-$j$ difference, let $\bm y,\bm y'$ lie in the convex hull of
$\bm z^{(t)},\ldots,\bm z^{(t-\ell-j-1)}$, and let $s\ge1$ be a represented gain time in this window. For $e=(i,a,b)$,
\begin{equation}\label{eq:late-curvature-comparison}
 \eta_i\sqrt{\quadform_{\bm y_{i,a}}(\bm r_{i,a}^{(s)})}
 \le C\,2^j\sqrt{b_{i,a}^{(s)}},
\end{equation}
\begin{equation}\label{eq:late-repetition-comparison}
 \delta\eta_i\alpha_e(\bm y)\alpha_e(\bm y')
 |\widetilde r_e^{(s)}(\bm y)|
 \le C\,2^j\left(\sum_{i',a'}\omega_{i'}b_{i',a'}^{(s)}\right)^{1/2}.
\end{equation}
\end{lemma}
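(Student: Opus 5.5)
The plan is to reduce both inequalities to the local comparison \eqref{eq:local-comparison-full}--\eqref{eq:quadratic-comparison-full} of \cref{lem:potential-full}. To do this I first bound the sup-norm distance, row by row, between any point of the score window and the effective score $\bm z^{(s)}$ at the represented gain time.

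\textbf{Step 1 (window distance).} By \eqref{eq:effective-score-difference}, $\diff\bm z_{i,a}^{(u)}=\eta_i(\bm r_{i,a}^{(u)}-\diff\bm\xi_{i,a}^{(u)})$. Since $|r_{i,a,b}|\le1$ and \eqref{eq:pointwise-bounds} gives $\norm{\bm\xi_{i,a}^{(u)}}_\infty\le97$, each one-step displacement of a row has sup norm at most $195\eta_i$. The gate $\one_{\{t>\ell+j\}}$ ensures that all times $t-\ell-j-1,\ldots,t$ appearing in a nonzero term are at least $0$; with zero prehistory and $\bm\theta^{(0)}=0$ the same bound applies at the boundary step. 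The points $\bm y,\bm y'$ and $\bm z^{(s)}$ all lie in the convex hull of $\ell+j+2$ consecutive effective scores, and the sup norm is convex. Hence every row satisfies
\[
 \epsilon:=\max\bigl\{\norm{\bm y_{i,a}-\bm z^{(s)}_{i,a}}_\infty,\norm{\bm y'_{i,a}-\bm z^{(s)}_{i,a}}_\infty\bigr\}\le195(\ell+j+1)\eta_i.
\]
Using $\eta_i\le g=c/k^{5/2}$ with $c$ small, we get $4\cdot195\,\ell\,\eta_i\le1$ and $4\cdot195\,\eta_i\le\tfrac12\ln2$. Therefore $e^{4\epsilon}\le C\,2^{j}$, and also $e^{2\epsilon}\le C2^{j/2}$ and $e^{\delta\epsilon}\le e^{4\epsilon}$.

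\textbf{Step 2 (the first inequality).} Apply \eqref{eq:quadratic-comparison-full} between $\bm y_{i,a}$ and $\bm z_{i,a}^{(s)}$ with $\bm w=\bm r_{i,a}^{(s)}$. By the definition of $b^{(s)}_{i,a}$ this gives $\eta_i^2\quadform_{\bm y_{i,a}}(\bm r_{i,a}^{(s)})\le 8e^{4\epsilon}b_{i,a}^{(s)}$; this is exactly the normalization consequence recorded just before the lemma. Taking square roots and inserting Step 1 yields \eqref{eq:late-curvature-comparison}.

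\textbf{Step 3 (the second inequality).} First I check that $\widetilde r_e^{(s)}(\bm y)$ is the row-centered gain of \eqref{eq:row-geometry}. Since $r_{i,a,c}=x_{i,a}(v_c-v_a)$ and $\sum_c\sigma_c=1$,
\[
 r_{i,a,b}-\ip{\bm\sigma_{i,a}(\bm y)}{\bm r_{i,a}}=x_{i,a}\bigl(v_b-\ip{\bm\sigma_{i,a}(\bm y)}{\bm v}\bigr),
\]
so keeping one summand of the quadratic form gives $\sqrt{h_e(\bm y)}\,|\widetilde r_e^{(s)}(\bm y)|\le\sqrt{\quadform_{\bm y_{i,a}}(\bm r^{(s)}_{i,a})}$. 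Next, the $\alpha$-part of \eqref{eq:local-comparison-full}, applied through $\bm z^{(s)}$, gives $\alpha_e(\bm y')\le e^{2\delta\epsilon}\alpha_e(\bm y)$. Combining this with the residual bound $\delta\alpha_e^2\le C_\alpha\sqrt{h_e/\beta_i}$ of \cref{lem:scalar-bounds} at $\bm y$, the left side of \eqref{eq:late-repetition-comparison} is at most
\[
 C e^{2\delta\epsilon}\beta_i^{-1/2}\eta_i\sqrt{\quadform_{\bm y_{i,a}}(\bm r^{(s)}_{i,a})}\le C2^j\beta_i^{-1/2}\sqrt{b^{(s)}_{i,a}},
\]
where the last step uses Step 2. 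Finally, $\omega_i=C_\omega(m_i-1)/\beta_i\ge\beta_i^{-1}$, because $C_\omega\ge1$ and $m_i\ge2$. Hence $\beta_i^{-1}b^{(s)}_{i,a}\le\sum_{i',a'}\omega_{i'}b^{(s)}_{i',a'}$, which gives \eqref{eq:late-repetition-comparison}.

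\textbf{Main obstacle.} The delicate step is Step 1. The window length $\ell+j+1$ is not small compared with $1/\eta_i$ once $j$ grows up to $N$, so the comparison factor cannot be made uniformly bounded. The point is to absorb the $\ell$ part into a constant through $g\ell\le c$, and to let the $j$ part produce only the geometric factor $2^j$. That factor is affordable because \cref{lem:late-closure} tolerates $8^j$, which is later paid for by $\delta$. Care is also needed at the boundary times, where the gate and the zero prehistory keep the window inside positive rounds.
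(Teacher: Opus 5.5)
Your proposal is correct and follows essentially the same route as the paper. You bound the player-$i$ hull diameter via \eqref{eq:effective-score-difference} and \eqref{eq:pointwise-bounds}, and apply \eqref{eq:local-comparison-full}--\eqref{eq:quadratic-comparison-full} to get the $2^j$ factor. For the repetition inequality you compare the two $\alpha$ values, use $\delta\alpha_e^2\le C_\alpha\sqrt{h_e/\beta_i}$, keep one summand of $\quadform$, and use $\omega_i\ge\beta_i^{-1}$. The one difference is the diameter estimate. The paper telescopes $\bm z=\bm\theta-\eta\bm\xi$ to get $\eta_i(\ell+j+195)$, so the $\bm\xi$ term enters once, whereas in your bound $195(\ell+j+1)\eta_i$ it enters once per step. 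Your looser bound therefore needs a somewhat smaller numerical cap on $c$ than the paper's $c\le 1/16$, which is harmless because $c$ is a free small universal constant.
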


\begin{proof}
From $\bm z=\bm\theta-\eta\bm\xi$, the gain bound and \eqref{eq:pointwise-bounds}, the player-$i$ hull diameter is at most $\eta_i(\ell+j+195)$. The comparisons in \eqref{eq:local-comparison-full}--\eqref{eq:quadratic-comparison-full} therefore cost at most
\[
 C\exp\bigl((2+\delta)\eta_{\max}(\ell+j+195)\bigr)\le C\,2^j,
\]
where the last inequality uses $c\le1/16$, $k\ge5$, and $\delta\le1/2$. This proves \eqref{eq:late-curvature-comparison}. For \eqref{eq:late-repetition-comparison}, compare the two $\alpha$ values, use
$\delta\alpha_e^2\le C\sqrt{h_e/\beta_i}$, retain the corresponding summand in $\quadform$, and use $\omega_i\ge1/\beta_i$. This is a deterministic, index-uniform bound before any new gain sample is introduced.
\end{proof}

\subsection{One-difference closure}

\begin{proof}[Proof of \cref{lem:late-closure}]
The gate contributes one impulse. At later times, interpolate equal score types together and apply \eqref{eq:differentiate-product-distributions}. Directions are convex combinations of $\eta_i\bm r_i^{(s)}-\eta_i\diff\bm\xi_i^{(s)}$ at positive represented times.

\paragraph{Error directions stop.}
Use \cref{lem:tree-score-full} for a distribution, \eqref{eq:atom-ordinary-derivative} for an early atom, and the ordinary derivative bound in \cref{lem:normalized-derivatives-full} for a scalar residual factor. Each has sequence cost at most $CgE$, even when a selected transition index is sample-dependent.

\paragraph{Curvature-controlled gain derivatives stop.}
For a distribution, split its centered score as in \eqref{eq:tree-score-full}. Keep the entire $\gamma$ score intact and use \eqref{eq:tree-gamma-variance} with the complete integrand. Its $L^2$ bound and \cref{lem:late-window} give cost $C\,2^jY$.

For an existing atom, freeze the samples and use the curvature-weighted branch of \eqref{eq:J-bounds-full}, as in the proof of \eqref{eq:atom-ordinary-derivative}, to obtain
\begin{equation*}
 |\nabla_{\eta\bm r^{(s)}}Z^{(r)}|
 \le\frac C{\sqrt\ell}
 \left(\sum_{i,a}\beta_i^{-1}\eta_i^2
 \quadform_{\bm y_{i,a}}(\bm r_{i,a}^{(s)})\right)^{1/2}.
\end{equation*}
Remove that atom and bound the remaining product in $L^1$. Since $\omega_i\ge1/\beta_i$, \cref{lem:late-window} gives cost $C\,2^jY$. Existing scalar residual factors stop by the same weighted derivative bound. Thus no old normalized derivative order increases in the late stage.

\paragraph{Repeated transition indices stop.}
If the $\delta\alpha$ part of a distribution derivative selects an already present transition index $e$, its old factor supplies $\alpha_e(\bm y')$, while its centered payoff has magnitude at most one. Apply \eqref{eq:late-repetition-comparison} while the gain direction is still deterministic. The centered edge indicator has magnitude at most one. Removing the old factor that pays for the repetition leaves an $L^1$-bounded product. At most $j$ transition indices can repeat on any sample outcome, so this costs at most $Cj2^jY$ per distribution.

There are at most $1+3\ell+3j$ distribution densities and explicit factors. Summing these payments and the gate gives
\begin{align*}
 \sourcenormT{\bm V}
 &\le1+C(\ell+j+1)gE+C(\ell+j+1)^2 2^jY\\
 &\le C_Lk^2 8^j(1+gE+Y).
\end{align*}
For every fixed time shift, the squared curvature sequence sums to at most $Y^2$. Convex shift weights and interpolation do not enlarge this finite-prefix bound. The last inequality uses $\ell+j+1\le k(j+1)$ and $(j+1)^2\le4^j$.

\paragraph{Only fresh transition indices continue.}
Fix a candidate transition index $e$ and a differentiated slot $q$. Let $M_e(\Omega)$ be the indicator that $e$ is absent from the old transition indices. For a new independent ghost $q'$ of the same distribution,
\begin{equation}\label{eq:fresh-ghost-identity}
 \E\bigl[F M_e(I_e(\bm T^q)-\E I_e)\bigr]
 =\E\bigl[F M_e(I_e(\bm T^q)-I_e(\bm T^{q'}))\bigr].
\end{equation}
The mask may depend on $\bm T^q$. Only independence of the new ghost is needed. Introduce a separate fresh gain slot at time $s$ using \eqref{eq:centered-tree-identity} with $\zeta=\alpha$. Split the indicator difference by sign and player. In each branch, the gain root selects a source row, and the indicated tree has at most one outgoing edge from that row. Hence the complete sum over candidate transition indices selects at most one edge.

If the selected edge is fresh, append its $\chi^\alpha$ factor at the differentiated distribution's score type, and put the freshness mask and $\eta_i/\eta_{\max}$ into the bounded observable. Otherwise set the observable to zero and use an arbitrary fixed dummy transition index for that player. Its value is irrelevant on this zero-observable branch. This selection occurs after the fixed-candidate identity \eqref{eq:fresh-ghost-identity}. It neither conditions nor renormalizes a tree distribution.

Each player/sign/distribution branch has coefficient $\delta\eta_{\max}$. There are at most $2n(1+2\ell+2j)$ branches, proving the continuation bound. Two slots and one distinct transition index are added. Old atoms remain unchanged, so their marginal moment bounds remain valid. A bounded sample-dependent mask cannot increase the integrand's $L^2$ norm. At depth $N=|\cI|$, no fresh transition index remains wherever the observable is nonzero.
\end{proof}

\section{Two-scale filter bounds and proof of the prediction bound}\label{app:filters}
This appendix proves the filter bounds and completes \cref{lem:prediction}.

\subsection{Filter bounds}
For a convolution $\mathrm{K}=\sum_{j\ge0}c_j\shift^j$, recall $\kernorm{\mathrm{K}}=\sum_j|c_j|$. Backward shifts contract the finite-prefix squared sum, so
\[
 \sum_{t=1}^T\sourcenorm{(\mathrm{K}[\bm w])^{(t)}}^2
 \le(\kernorm{\mathrm{K}})^2\sum_{t=1}^T\sourcenorm{\bm w^{(t)}}^2.
\]

\begin{lemma}[Two-scale filter bounds]\label{lem:filter-full}
For $0\le a\le\ell$ and $0\le b\le N$,
\begin{equation*}
 \kernorm{\diff^2\filt_\ell^a\filt_N^b}<48,
 \qquad
 \kernorm{\errfilt}<96.
\end{equation*}
Define
\begin{equation*}
 \earlyfilter_p=\res_\ell^\ell\res_N^N\diff^{N+\ell+3-p}
 \quad(0\le p\le\ell+1),
 \qquad
 \latefilter_j=\res_N^N\diff^{N+3-j}
 \quad(0\le j\le N+1).
\end{equation*}
Then $\earlyfilter_0=\errfilt$, $\earlyfilter_p=\earlyfilter_{p+1}\diff$, $\latefilter_j=\latefilter_{j+1}\diff$, and
\begin{equation}\label{eq:filter-families-bounds-full}
 \kernorm{\earlyfilter_{p+1}}<48k^p\quad(0\le p<\ell),
 \qquad
 \earlyfilter_\ell=\res_\ell^\ell\latefilter_0,
 \qquad
 \kernorm{\latefilter_{j+1}}<24(N+1)^j\quad(0\le j\le N).
\end{equation}
\end{lemma}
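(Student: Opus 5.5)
We will work directly with generating functions in the backward shift. Write $x=\shift$. Then $\diff=1-x$ and $\res_r=(1-\rho_rx)^{-1}$. Every operator in the statement is a power series in $x$, and $\kernorm{\cdot}$ is its $\ell^1$ coefficient norm, which is submultiplicative. The algebraic identities are immediate. All factors commute, $\filt_r=\res_r\diff$, and $\errfilt=\diff^3\filt_\ell^\ell\filt_N^N=\res_\ell^\ell\res_N^N\diff^{N+\ell+3}=\earlyfilter_0$. Removing one factor $\diff$ gives the recursions $\earlyfilter_p=\earlyfilter_{p+1}\diff$ and $\latefilter_j=\latefilter_{j+1}\diff$. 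Finally, $\earlyfilter_\ell=\res_\ell^\ell\res_N^N\diff^{N+3}=\res_\ell^\ell\latefilter_0$. So all the work is in the kernel-mass bounds.

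The basic estimate is for a single EMA factor. We have $\filt_r=(1-x)/(1-\rho_rx)=1-(1-\rho_r)x\sum_{s\ge0}\rho_r^sx^s$, so $\kernorm{\filt_r}=2$. This is too crude for powers. We will therefore use the depth-uniform structure: $\filt_r^a$ with $a\le r$ has kernel mass $O(1)$ uniformly. The plan is to bound the coefficients of $(1-x)^a(1-\rho x)^{-a}$, with $\rho=r/(r+1)$ and $a\le r$, by viewing this as the $a$-fold convolution of $1-\nu_r\,\mathrm{Geom}$ kernels. Equivalently, $\filt_r^a=\bigl(1-(1-\rho)\res_r x\bigr)^a$.

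Expanding binomially gives $\sum_j\binom aj(-1)^j(1-\rho)^jx^j\res_r^j$. Here $(1-\rho)^j x^j\res_r^j$ is the generating function of a negative-binomial probability distribution times $\rho^{-j}$, shifted. So the coefficients are differences of a smooth, unimodal mass profile at scale about $r$ in time. Since $a\le r$, the total mass concentrates on $O(r)$ lags. One extra $\diff$ costs a total variation of this profile, and a second $\diff$ costs $O(1)$ more. My first attempt will be the cleaner route of a contour or Fourier argument: write the coefficient $c_s$ via a Cauchy integral on $|x|=1$. Use $|1-x|/|1-\rho x|\le1$ and $|1-x|^2/|1-\rho x|^2\le\min\{1,(r\theta)^2\}$-type bounds. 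Then apply the standard bound $\sum_s|c_s|\le C(\|\hat c\|_{L^2}+\|\hat c'\|_{L^2})$ (a Bernstein/Carlson inequality) to $\hat c(\theta)=(1-e^{i\theta})^2\filt_\ell^a\filt_N^b(e^{i\theta})$. The $\diff^2$ prefactor makes the derivative term square-integrable uniformly in $a\le\ell$ and $b\le N$. The reason is that the derivative of each $\filt_r$ factor is of size $r/(1+r\theta)$, and $a\le r$ copies contribute a term balanced by the decay $|\filt_r(e^{i\theta})|^{a}$ away from $\theta\lesssim1/r$. I expect this uniform Carlson estimate to be the main obstacle. Tracking the explicit constant $48$ requires care, and I would first prove $\kernorm{\diff^2\filt_\ell^a\filt_N^b}\le C$ and then tighten the computation to get the numerical constant. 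From it, $\kernorm{\errfilt}\le\kernorm{\diff}\cdot 48<96$ follows by setting $a=\ell$ and $b=N$ and using $\kernorm{\diff}=2$.

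For the families, we write $\earlyfilter_{p+1}=\res_\ell^{p}\cdot\diff^2\filt_\ell^{\ell-p}\filt_N^N$ with $0\le p<\ell$. Since $\res_\ell$ has positive coefficients summing to $1/(1-\rho_\ell)=\ell+1=k$, we get $\kernorm{\res_\ell^p}=k^p$. Hence $\kernorm{\earlyfilter_{p+1}}<48k^p$. Similarly, $\latefilter_{j+1}=\res_N^{j}\cdot\diff^2\filt_N^{N-j}$ with $0\le j\le N$, and $\kernorm{\res_N^j}=(N+1)^j$. This gives the bound $48(N+1)^j$ directly. To improve it to $24(N+1)^j$, we observe that the single-cascade case $a=0$ admits the sharper constant $24$, which we will extract from the same Carlson estimate, where only one scale appears.
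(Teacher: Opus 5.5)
Your overall route is the paper's. You use the same algebraic identities and the same factorizations $\earlyfilter_{p+1}=\res_\ell^p\,\diff^2\filt_\ell^{\ell-p}\filt_N^N$ and $\latefilter_{j+1}=\res_N^j\,\diff^2\filt_N^{N-j}$, whose resolvent factors are nonnegative with masses $k^p$ and $(N+1)^j$. You also use a Carlson-type inequality $\sum_j|a_j|\le\sqrt{1+\pi/2}\,\bigl(\norm{F}_{L^2(\mathbb T)}^2+\norm{F'}_{L^2(\mathbb T)}^2\bigr)^{1/2}$ on the symbol $F_{a,b}(z)=(1-z)^2F_\ell(z)^aF_N(z)^b$, where $F_r(z)=(1-z)/(1-\rho_rz)$.

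However, you stop exactly where the content lies, and the one concrete inequality you offer there is false. On $|z|=1$ one does \emph{not} have $|1-z|/|1-\rho_rz|\le1$, nor the squared bound by $\min\{1,(r\theta)^2\}$. The modulus is decreasing in $\cos\theta$ and equals $2/(1+\rho_r)=2(r+1)/(2r+1)>1$ at $z=-1$. So $|F_r|^a$ does not decay away from $\theta\approx0$, and any constant derived from $|F_r|\le1$ would be invalid. The correct substitute is
\[
|F_r(z)|^j\le\Bigl(1+\frac1{2r+1}\Bigr)^{r+1}<2\qquad(0\le j\le r+1).
\]
This is where the restrictions $a\le\ell$ and $b\le N$ really enter, since one needs exponents up to $a+1\le\ell+1$.

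The second missing ingredient replaces your $L^2$ ``balancing against decay'' heuristic, which is not the mechanism. Since $F_r'(z)=-(1-\rho_r)/(1-\rho_rz)^2$, the prefactor cancels the singularity exactly: $(1-z)^2\,aF_r^{a-1}F_r'=-\frac{a}{r+1}F_r^{a+1}$. Hence
\[
F_{a,b}'=-2(1-z)F_\ell^aF_N^b-\frac{a}{\ell+1}F_\ell^{a+1}F_N^b-\frac{b}{N+1}F_\ell^aF_N^{b+1}.
\]
This gives the pointwise bounds $|F_{a,b}|<16$ and $|F_{a,b}'|<16+4+4=24$, uniformly in $a\le\ell$ and $b\le N$, with no integration needed. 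Then $\sqrt{1+\pi/2}\,\sqrt{16^2+24^2}<47<48$. For the single-scale symbol $(1-z)^2F_N^b$ the analogous bounds $8$ and $10$ give $\sqrt{1+\pi/2}\,\sqrt{164}<21<24$. With these two facts your outline becomes the paper's proof. Without them, the uniform bound, and a fortiori the constants $48$, $96$ and $24$, are asserted rather than proved.
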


\begin{proof}
If $F(z)=\sum_{j\ge0}a_jz^j$ is analytic on a neighborhood of the closed unit disk, coefficient Cauchy--Schwarz and normalized circle orthogonality give
\begin{equation}\label{eq:kernel-cauchy}
 \sum_{j\ge0}|a_j|
 \le\sqrt{1+\pi/2}
 \left(\norm F_{L^2(\mathbb T)}^2+\norm{F'}_{L^2(\mathbb T)}^2\right)^{1/2}.
\end{equation}
Indeed, $\sum_{j\ge0}(1+j^2)^{-1}\le1+\int_0^\infty(1+x^2)^{-1}dx=1+\pi/2$. This scalar calculation does not require the sequence output space to be Hilbert.

On $|z|=1$,
\[
 F_r(z)=\frac{1-z}{1-rz/(r+1)},
 \qquad
 |F_r(z)|\le\frac{2(r+1)}{2r+1},
 \qquad
 |F_r(z)|^j<2\quad(0\le j\le r+1).
\]
For $F_{a,b}(z)=(1-z)^2F_\ell(z)^aF_N(z)^b$, differentiation gives
\[
 F_{a,b}'
 =-2(1-z)F_\ell^aF_N^b
 -\frac{a}{\ell+1}F_\ell^{a+1}F_N^b
 -\frac{b}{N+1}F_\ell^aF_N^{b+1}.
\]
Thus $|F_{a,b}|\le16$ and $|F_{a,b}'|\le24$ in the stated ranges. Equation \eqref{eq:kernel-cauchy} gives kernel norm below $48$. One additional difference costs at most two, proving $\kernorm{\errfilt}<96$.

For the single-scale symbol $(1-z)^2F_N(z)^b$, the analogous bounds are $8$ and $10$, giving kernel norm below $24$. Finally,
\begin{equation*}
 \earlyfilter_{p+1}=\res_\ell^p\diff^2\filt_\ell^{\ell-p}\filt_N^N,
 \qquad
 \latefilter_{j+1}=\res_N^j\diff^2\filt_N^{N-j}.
\end{equation*}
The nonnegative resolvent kernels have masses $k^p$ and $(N+1)^j$, respectively, which proves \eqref{eq:filter-families-bounds-full}. These factorizations retain the difference operators. In particular, the bare resolvent $\res_N^N$ has mass $(N+1)^N$ and is not being bounded by a universal constant.
\end{proof}

\subsection{Proof of the prediction bound}

\begin{proof}[Proof of \cref{lem:prediction}]
Starting from a depth-zero gain, let $a_p$ be the total early coefficient mass. By \eqref{eq:early-mass-filtered}, $a_pk^p\le\kappa^p$ for $p\le\ell$. Coefficients are time-independent, so filter every exact early decomposition using $\earlyfilter_p=\earlyfilter_{p+1}\diff$. By \cref{lem:filter-full}, all early stopped contributions sum to at most
\begin{equation*}
 48\sum_{p=0}^{\ell-1}\kappa^p[1+C_E(p+1)gE]
 \le C+CgE.
\end{equation*}
At the handoff, $\earlyfilter_\ell=\res_\ell^\ell\latefilter_0$. The kernel mass of $\res_\ell^\ell$ is $k^\ell$, already paid by $a_\ell k^\ell\le\kappa^\ell$.

For a late depth-zero expression, set
\[
 a_{\mathrm{late}}=2n\delta\eta_{\max}(1+2\ell+2N),
 \qquad
 B_{\mathrm{late}}=1+gE+Y.
\]
Substitute the late closure repeatedly and retain the stopped term at depth $N$. Since $\latefilter_j=\latefilter_{j+1}\diff$,
\begin{equation*}
 \left(\sum_{t=1}^T\sourcenorm{(\latefilter_0[\bm U])^{(t)}}^2\right)^{1/2}
 \le24C_Lk^2B_{\mathrm{late}}\sum_{j=0}^N(8(N+1)a_{\mathrm{late}})^j
 \le25C_Lk^2B_{\mathrm{late}}.
\end{equation*}
Indeed,
\[
 8(N+1)a_{\mathrm{late}}
 \le32n\delta\eta_{\max}(N+\ell+2)^2
 =32\eta_{\max}\le\frac{2}{5^{5/2}}<\frac1{25}.
\]
This is a finite sum of $N+1$ terms, including all boundary impulses. There is no terminal remainder or infinite-depth limit. All filters depend only on current and earlier inputs, so only the same finite-prefix energies appear.

Multiply the late bound by the handoff coefficient $\kappa^\ell$. Since $k^2\kappa^\ell$ is uniformly bounded when $\kappa\le1/16$, adding the unit solver-error norm and collecting the constant and $gE$ terms gives
\[
 E\le C_0+C_1gE+C_Pk^2\kappa^\ell Y,
\]
which is \eqref{eq:prediction}.
\end{proof}

\section{Bracket-and-mix row normalization and implementation}\label{app:solver}
Fix player $i$, one raw score row $\overline{\bm z}\in\R^{m_i}$, and local time $t$. If all coordinates coincide, return the uniform row, which is exact by symmetry. Otherwise initialize
\begin{equation*}
 \mu^- =\min_b\overline z_b-\frac3{m_i\beta_i},
 \qquad
 \mu^+=\max_b\overline z_b.
\end{equation*}
At $\mu^+$, every scalar argument is nonpositive, and the row sum is at most $m_i f_i(0)\le e m_i\beta_i/2<1$. At $\mu^-$, every argument is at least $s=3/(m_i\beta_i)>0$. Since $t_+(s)\ge s$ and $u(s)\le1/s$, Bernoulli's inequality gives
\[
 f_i(s)\ge\beta_i\frac{1+s}{1+s^{-1}+\delta/(4s^2)}>\frac1{m_i}.
\]
Thus the endpoints bracket the unique normalization shift.

Bisect until an exactly normalized midpoint is found or the bracket width is at most
\begin{equation}\label{eq:solver-width-full}
 e_{i,t}=\frac{\eta_i}{m_i(t+1)^2}.
\end{equation}
An exact root is returned immediately. Otherwise put
\begin{equation*}
 y_b^-=f_i(\overline z_b-\mu^-),
 \qquad
 y_b^+=f_i(\overline z_b-\mu^+),
 \qquad
 S^\pm=\sum_by_b^\pm,
 \qquad
 \tau=\frac{1-S^+}{S^--S^+},
\end{equation*}
\begin{equation*}
 \widehat q_b=\tau y_b^-+(1-\tau)y_b^+.
\end{equation*}
Since $S^->1>S^+$, the returned row is strictly positive and exactly normalized.

\begin{lemma}[Effective-score representation]\label{lem:solver-full}
The returned row is the exact normalized row at $\overline{\bm z}+\bm\zeta$ for some vector satisfying $0\le\zeta_b\le\mu^+-\mu^-$ in every coordinate. Across all source rows, this proves \eqref{eq:solver-interface}.
\end{lemma}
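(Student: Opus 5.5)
The solver returns a convex combination $\widehat q_b=\tau y_b^-+(1-\tau)y_b^+$ of two scalar responses evaluated at the same raw score $\overline z_b$ with shifts $\mu^-<\mu^+$. Since $f_i$ is continuous and strictly increasing, I will use the intermediate value theorem coordinatewise. We have $y_b^+=f_i(\overline z_b-\mu^+)\le\widehat q_b\le y_b^-=f_i(\overline z_b-\mu^-)$, so there is a unique $s_b\in[\overline z_b-\mu^+,\overline z_b-\mu^-]$ with $f_i(s_b)=\widehat q_b$. Set $\zeta_b:=s_b-(\overline z_b-\mu^+)$, so that $0\le\zeta_b\le\mu^+-\mu^-$. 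Then $f_i\bigl((\overline z_b+\zeta_b)-\mu^+\bigr)=\widehat q_b$ for every $b$.

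Because $\sum_b\widehat q_b=\tau S^-+(1-\tau)S^+=1$ by the choice of $\tau$, the shift $\mu^+$ normalizes the row $\overline{\bm z}+\bm\zeta$. By uniqueness of the normalization shift, which holds because $\nu\mapsto\sum_bf_i(z_b-\nu)$ is strictly decreasing, $\mu_i(\overline{\bm z}+\bm\zeta)=\mu^+$ and $\bm q_i(\overline{\bm z}+\bm\zeta)=\widehat{\bm q}$. The trivial cases need separate but simple handling. If an exact midpoint root is returned, take $\bm\zeta=0$. If the row is constant, the uniform output is exact, again with $\bm\zeta=0$.

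For \eqref{eq:solver-interface}, define $\bm\varepsilon_{i,a}^{(t)}:=\bm\zeta_{i,a}/\eta_i$ for each source row. Bisection stops with width $\mu^+-\mu^-\le e_{i,t}=\eta_i/(m_i(t+1)^2)$, so $\norm{\bm\varepsilon_{i,a}^{(t)}}_\infty\le 1/(m_i(t+1)^2)$. Summing over the $m_i$ source rows gives $\sourcenorm{\bm\varepsilon_i^{(t)}}\le(t+1)^{-2}$. The squared sum satisfies
\[
\sum_s\sourcenorm{\bm\varepsilon_i^{(s)}}^2\le\sum_{s\ge1}(s+1)^{-4}<1 .
\]
Finally, the matrix $\bm Q$ formed from the returned rows equals the exact normalization of $\bm z_i^{(t)}=\overline{\bm z}_i^{(t)}+\eta_i\bm\varepsilon_i^{(t)}$, which is precisely the effective-score interface.

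There is no real obstacle. The only points needing care are:
\begin{itemize}
\item the claim $S^->1>S^+$, which holds because the bracket was shown to be valid and bisection preserves the bracket;
\item the fact that the \emph{same} shift $\mu^+$ is used for every coordinate, which is what lets uniqueness of $\mu$ identify the row.
\end{itemize}
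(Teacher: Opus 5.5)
Your proposal is correct and follows the paper's proof essentially step for step. You place $\widehat q_b$ between the two bracket responses, pull it back through the strictly increasing $f_i$ to get $0\le\zeta_b\le\mu^+-\mu^-$, and use $\sum_b\widehat q_b=1$ with uniqueness of the normalizing shift to identify the row; you then sum the per-row bound $1/(m_i(t+1)^2)$ over the $m_i$ source rows to obtain \eqref{eq:solver-interface}.
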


\begin{proof}
Monotonicity places $\widehat q_b$ between $f_i(\overline z_b-\mu^+)$ and $f_i(\overline z_b-\mu^-)$. Define, only for the proof,
\[
 \zeta_b=f_i^{-1}(\widehat q_b)-(\overline z_b-\mu^+).
\]
Then $0\le\zeta_b\le\mu^+-\mu^-$ and
\[
 \widehat q_b=f_i(\overline z_b+\zeta_b-\mu^+),
 \qquad
 \sum_b\widehat q_b=1.
\]
Uniqueness of the normalizing shift identifies the returned row with $\bm q_i(\overline{\bm z}+\bm\zeta)$. Exact roots and uniform rows use $\bm\zeta=0$. Set $\bm\varepsilon=\bm\zeta/\eta_i$. For every source row,
\[
 \norm{\bm\varepsilon_{i,a}^{(t)}}_\infty\le\frac1{m_i(t+1)^2}.
\]
Summing over the $m_i$ source rows gives
\[
 \sourcenorm{\bm\varepsilon_i^{(t)}}\le(t+1)^{-2},
 \qquad
 \sum_{t=1}^T\sourcenorm{\bm\varepsilon_i^{(t)}}^2
 \le\sum_{t\ge1}(t+1)^{-4}<1.
\]
\end{proof}

\paragraph{Work, storage, and computation model.}
The operation counts below assume that the public parameters are
supplied as exact real constants. They count exact arithmetic,
comparison, and square-root operations during play. Public
initialization, including the exact logarithms used to define these
parameters, is not included in these per-round counts. No finite-bit
complexity claim is made.
At local round $t$, player $i$ uses
\begin{equation}\label{eq:complexity}
 O\!\left(m_i^2\left[N+\log(nm)\log(nmt)\right]+m_i^3\right)
 \text{ operations and }O\!\left((N+\ell)m_i^2\right)\text{ storage}.
\end{equation}
Since $N\le nm^2$, this is at most
$O\!\left(m^2[nm^2+\log(nm)\log(nmt)]\right)$ operations and $O(nm^4)$ storage.

The public bounds imply $d_i\le C\log(nm)$. Evaluating $f_i$ uses arithmetic, one square root, and $O(\log(nm))$ repeated-squaring operations. The implementation evaluates neither $f_i^{-1}$, the potential, higher derivatives, nor trees. By \eqref{eq:pointwise-bounds}, the raw-score range at round $t$ is at most $2\eta_i(t+96)$, so the initial bracket width divided by \eqref{eq:solver-width-full} is at most
\[
 2m_i(t+96)(t+1)^2+\frac{3(t+1)^2}{\beta_i\eta_i}.
\]
For the default rates, $1/(\beta_i\eta_i)$ is polynomial in $nm$ by $J\le5k$, $A\le N\le nm^2$, and $A_i\ge2$. For the optional rates, \eqref{eq:heterogeneous-rates} gives $\eta_i^{-1}\le N/(2g)$. Hence either allocation requires $O(\log(nmt))$ bisections and yields the normalization term in \eqref{eq:complexity}.

The $N+\ell$ cascade stages use $O((N+\ell)m_i^2)$ work and storage. The exact stationary-distribution system is nonsingular by Appendix~\ref{app:trees-moments}. Gaussian elimination costs $O(m_i^3)$ operations and $O(m_i^2)$ storage. These are per-round counts under the computation model specified above. Inexact stationarity, sampled-action feedback, numerical conditioning, finite-bit complexity, and floating-point behavior are outside the present guarantees.

\section{Anytime fallback and the common-prefix wrapper}\label{app:fallback}
\subsection{An anytime adversarial fallback}\label{app:anytime-fallback}
We define the fallback $\mathsf{FB}$ used in \cref{sec:robustness}. For player $i$, use $A_i=m_i\log m_i$ and doubling epochs with planned lengths $M=1,2,4,\ldots$. Initialize each epoch's row distributions uniformly and set
\[
 \eta_{\mathrm{fb}}=\sqrt{A_i/M}.
\]
Play the stationary distribution $\bm x$ of the strictly positive row-stochastic matrix. After observing $\bm v$, set $g_{a,b}=x_av_b$ and update
\begin{equation*}
 q_{a,b}^+=\frac{q_{a,b}(1+\eta_{\mathrm{fb}}g_{a,b})}
 {1+\eta_{\mathrm{fb}}\ip{\bm q_a}{\bm g_a}}.
\end{equation*}
Rows remain strictly positive and exactly normalized for every rate, because $g_{a,b}\ge0$. This is the Blum--Mansour reduction \citep{blum2007} with a linear multiplicative-weights update.

Comparing the unnormalized weight of a fixed destination with the row-weight sum gives, for an epoch prefix,
\[
 \sum_t\ln(1+\eta_{\mathrm{fb}}g_{a,b}^{(t)})-\ln m_i
 \le\sum_t\ln\left(1+\eta_{\mathrm{fb}}\ip{\bm q_a^{(t)}}{\bm g_a^{(t)}}\right).
\]
For $z\ge0$, $z-z^2/2\le\ln(1+z)\le z$. No upper rate cap is needed. Thus the cumulative row comparison is at most
\[
 \frac{\ln m_i}{\eta_{\mathrm{fb}}}
 +\frac{\eta_{\mathrm{fb}}}{2}\sum_t(g_{a,b}^{(t)})^2.
\]
Sum over source rows with a fixed deviation map. Stationarity identifies the linear comparison with the map's actual gain, while the squared comparator gains satisfy
\[
 \sum_ax_a^2v_{\varphi(a)}^2\le\sum_ax_a^2\le1
\]
per round. A prefix of $\tau\le M$ rounds in the epoch therefore satisfies
\begin{equation*}
 \Reg_{\mathrm{epoch}}(\tau)
 \le\sqrt{A_iM}\left(\ln2+\frac{\tau}{2M}\right).
\end{equation*}

Let the current epoch have planned length $M$ and $1\le\tau\le M$ completed rounds. Total time is $s=M-1+\tau$. All earlier epochs are complete. Put $d=\ln2+1/2$. Their total cost is at most
\[
 d\sqrt{A_i}\frac{\sqrt M-1}{\sqrt2-1}.
\]
With $x=\tau/M$ and $B_M=d(1-M^{-1/2})/(\sqrt2-1)$, divide the total by $\sqrt{A_iM}$. The ratio
\[
 \frac{B_M+\ln2+x/2}{\sqrt{1-M^{-1}+x}}
\]
decreases in $x$: its derivative has the sign of $1-M^{-1}+x/2-(B_M+\ln2)$, which is nonpositive for $M\ge2$ and $x\le1$. The maximum is at $x=1/M$, where $s=M$ and
\[
 B_M+\ln2+\frac1{2M}<2.9+0.7+0.25<4.
\]
For $M=1$, the bound is immediate. Hence $\Reg(s)\le4\sqrt{A_is}$ for every $s\ge0$, proving the fallback guarantee stated in \cref{alg:robust}. The exact stationary-distribution solve costs $O(m_i^3)$ operations and $O(m_i^2)$ storage. Row updates cost $O(m_i^2)$. These bounds use the same exact-real-arithmetic convention as the implementation analysis in Appendix~\ref{app:solver}.

\subsection{Proof of the common-prefix wrapper}\label{app:switching-rule}
Starting immediately with the base dynamics and switching after the first violation of the anytime self-play bound would give $\Reg_i(T)\le B_i+1+4\sqrt{A_iT}$, retaining the base budget additively. The common prefix instead makes this budget absorbable once $B_i+1=O(\sqrt{A_iW})$, while its simultaneous reset makes the subsequent play a fresh self-play trajectory.

\begin{proof}[Proof of \cref{lem:common-prefix-wrapper}]
Swap regret is subadditive over consecutive intervals, since each fixed deviation map's total gain is the sum of its interval gains. Moreover,
$R_{i,\mathrm{base}}(s)\le R_{i,\mathrm{base}}(s-1)+1$, because the one-round gain of every deviation map is at most one.

If a threshold were crossed in self-play, consider the first crossing among all players. Through that round every player has followed the simultaneously initialized fresh base dynamics, contradicting the assumed anytime self-play bound. Hence no threshold-triggered switch occurs.

For an arbitrary payoff sequence, split the horizon into the initial fallback, base, and terminal fallback intervals. If a crossing occurs, the base interval contributes at most $B_i+1$ by the one-round bound. If none occurs, it contributes at most $B_i$. Applying the fallback guarantee stated in \cref{alg:robust} to the two newly initialized fallback intervals and enlarging the terminal interval to length $T-W$ proves \eqref{eq:generic-wrapper-bound}. Under $B_i+1\le\gamma_i\sqrt{A_iW}$, for $T>W$,
\[
 \Reg_i(T)
 \le(4+\gamma_i)\sqrt{A_iW}+4\sqrt{A_i(T-W)}
 \le\sqrt{(4+\gamma_i)^2+16}\,\sqrt{A_iT},
\]
by Cauchy--Schwarz. For $T\le W$, the same fallback guarantee gives the conclusion.
\end{proof}

For the stated extension to a nondecreasing anytime bound, let $b_i(s)$ be nondecreasing. At a first crossing time $\tau$, the one-round bound gives $R_{i,\mathrm{base}}(\tau)\le b_i(\tau)+1$. Hence, under $b_i(s)+1\le\gamma_i\sqrt{A_i(W+s)}$, the switched case is bounded by
\[
 4\sqrt{A_iW}+\gamma_i\sqrt{A_i(W+\tau)}
 +4\sqrt{A_i(T-W-\tau)}
 \le\sqrt{(4+\gamma_i)^2+16}\,\sqrt{A_iT},
\]
where we used $W\le W+\tau$ before Cauchy--Schwarz. Without a crossing, $4\sqrt{A_iW}+b_i(T-W)$ is bounded directly by the same envelope. The same first-crossing argument prevents switching in self-play.

\subsection{Specialization to Algorithm~\ref{alg:base}}\label{app:robustness}
Put
\[
 W_0=\frac{9k^5A}{c^2}.
\]
Then $B_i=\frac43\sqrt{A_iW_0}$ and $W=\lceil W_0\rceil$. Since $c\le1/16$, $k\ge5$, and $A\ge2$, we have $W_0\ge6$, hence $W\le(7/6)W_0$.

By \eqref{eq:generic-bound}, the anytime self-play regret of the base dynamics is at most $3B_i/4$, so \cref{lem:common-prefix-wrapper} prevents switching. The initial fallback satisfies
\[
 4\sqrt{A_iW}\le3\sqrt{7/6}\,B_i<\frac{13}{4}B_i.
\]
Adding the base continuation proves the self-play inequality in \eqref{eq:switching-specialization}, including horizons within the common prefix.

Also $B_i\ge4$, so
\[
 B_i+1\le\frac54B_i\le\frac53\sqrt{A_iW}.
\]
Thus \cref{lem:common-prefix-wrapper} applies with $\gamma_i=5/3$, and
\[
 \sqrt{(4+5/3)^2+16}=\sqrt{433/9}<7.
\]
This proves the adversarial part of \cref{thm:robust}. The preceding paragraph proves its self-play part.

Maintaining \eqref{eq:switching-monitor} costs $O(m_i^2)$ work and storage per round and does not change the stated asymptotic implementation bounds. All states are discarded at phase transitions, and the threshold is evaluated only after the crossing round's feedback has been observed. Every round, including the common prefix and crossing round, has been counted.

\end{document}